\def\Xint#1{\mathchoice
{\XXint\displaystyle\textstyle{#1}}%
{\XXint\textstyle\scriptstyle{#1}}%
{\XXint\scriptstyle\scriptscriptstyle{#1}}%
{\XXint\scriptscriptstyle\scriptscriptstyle{#1}}%
\!\int}
\def\XXint#1#2#3{{\setbox0=\hbox{$#1{#2#3}{\int}$}
\vcenter{\hbox{$#2#3$}}\kern-.5\wd0}}
\def\pvint{\,\,\Xint-}
\newcommand{\ii}{{\rm i}}
\newcommand{\dd}{{\rm d}}
\newcommand{\im}{\mathrm{Im\,}}
\newcommand{\re}{\mathrm{Re\,}}
\newcommand{\Res}{\operatorname{Res}}
\newcommand{\res}{\operatorname{Res}}
\newcommand{\R}{{\mathbb R}}
\newcommand{\C}{{\mathbb C}}
\newcommand{\Z}{{\mathbb Z}}
\renewcommand{\Im}{\mathrm{Im}\hspace{0.09em}}
\newcommand{\csch}{\mathrm{csch}}
\newtheorem{theorem}{Theorem}
\newtheorem{proposition}{Proposition}[section]
\newtheorem{corollary}[proposition]{Corollary}
\newtheorem{lemma}[proposition]{Lemma}
\newtheorem{remark}[proposition]{Remark}
\numberwithin{equation}{section}
\newenvironment{roster}
 {\begin{enumerate}[font=\upshape,label=\Alph*.]}
 {\end{enumerate}}
\begin{document}

\title{On the non-chiral intermediate long wave equation}

\author{Bjorn K. Berntson$^1$, Edwin Langmann$^2$, and Jonatan Lenells$^1$}
\address{$^1$Department of Mathematics, KTH Royal Institute of Technology, SE-100 44 Stockholm, Sweden \\
$^2$Department of Physics, KTH Royal Institute of Technology, SE-106 91 Stockholm, Sweden}

\begin{abstract}
We study integrability properties of the non-chiral intermediate long wave equation recently introduced by the authors as a parity-invariant variant of the intermediate long wave equation. For this new equation we: (a) derive a Lax pair, (b) derive a Hirota bilinear form, (c) derive a B\"{a}cklund transformation, (d) use, separately, the B\"{a}cklund transformation and the Lax representation to obtain an infinite number of conservation laws. 
\end{abstract}

\maketitle

\noindent
{\small{\sc AMS Subject Classification (2020)}: 35Q35, 35Q51, 37K10, 37K35.}

\noindent
{\small{\sc Keywords}: Nonlinear wave equation, integrable system, nonlocal partial differential equation, Lax pair, Hirota bilinear form, B\"{a}cklund transformation, conservation laws.}

\tableofcontents

%
% Uncomment for keywords
%\vspace{2pc}
%\noindent{\it Keywords}: XXXXXX, YYYYYYYY, ZZZZZZZZZ
%
% Uncomment for Submitted to journal title message
%\submitto{\JPA}
%
% Uncomment if a separate title page is required
%\maketitle
% 
% For two-column output uncomment the next line and choose [10pt] rather than [12pt] in the \documentclass declaration
%\ioptwocol
%

\section{Introduction}
The intermediate long wave (ILW) equation is a model for long gravity waves in a stratified, finite-depth fluid \cite{joseph1977}. The model is integrable: it has a Lax pair \cite{kodama1981,kodama1982}, a Hirota bilinear form and $N$-soliton solutions \cite{matsuno1979}, a B\"{a}cklund transformation \cite{satsuma1979}, and an infinite number of conservation laws \cite{satsuma1979}.

In this paper we consider the following ILW-type equation which was recently introduced by us in \cite{berntson2020a}:
\begin{equation} 
\label{2ilw} 
\begin{split} 
&u_t + 2 u u_x + Tu_{xx}+\tilde{T}v_{xx}=0,\\
&v_t - 2 v v_x - Tv_{xx}-\tilde{T}u_{xx}=0, 
\end{split} 
\end{equation} 
where $u(x,t)$ and $v(x,t)$ are  real- or complex-valued functions of $x,t \in \R$ and
\begin{equation} 
\label{TT} 
\begin{split} 
& (Tf)(x) \coloneqq \frac{1}{2\delta}\pvint_{\R} \coth\left(\frac{\pi}{2\delta}(x'-x)\right)f(x')\,\dd{x}',\\
& (\tilde{T}f)(x) \coloneqq \frac{1}{2\delta}\int_{\R} \tanh\left(\frac{\pi}{2\delta}(x'-x)\right)f(x')\,\dd{x}', 
\end{split} 
\end{equation} 
with $\delta>0$ an arbitrary parameter. 
We refer to \eqref{2ilw} as the non-chiral ILW equation because it is invariant under the parity transformation $[u(x,t),v(x,t)]\to[v(-x,t),u(-x,t)]$ (the standard ILW equation and its degenerations are not parity-invariant). 
A quantum version of the non-chiral ILW equation arises from a second-quantization of the elliptic Calogero-Sutherland model \cite{berntson2020a}, but here we consider only mathematical aspects of the classical model \eqref{2ilw}. More precisely, we show that the usual structures of integrability, listed above for the standard ILW equation, are present also in \eqref{2ilw}.

\subsection{Basic properties of the non-chiral ILW equation}
\label{subsec1.1}
For the convenience of the reader, we shortly discuss some basic properties of the system in \eqref{2ilw}; see \cite{berntson2020a} for further details. 

For the physical interpretation of \eqref{2ilw}, it is useful to represent the operators in \eqref{TT} in Fourier space \cite[Eq.\ (A3)]{berntson2020a}:  
\begin{equation*}
\begin{split} 
(\widehat{Tu})(k) &=\mathrm{i}\coth(k\delta)\hat{u}(k),\\ 
(\widehat{\tilde{T}u})(k) &=\mathrm{i}\,\csch(k\delta)\hat{u}(k),
\end{split} 
\end{equation*}
where $\csch(z)\coloneqq 1/\sinh(z)$. Thus, the dispersion relations corresponding to the $T$- and $\tilde{T}$-terms in \eqref{2ilw} are  $\pm k^2\coth(k\delta)$ and $\pm k^2\csch(k\delta)$, respectively. 
This makes manifest that, in the limit $\delta\to 0$, $T$ converges to the Hilbert transform and  $\tilde{T}$ to zero, i.e., in this limit, \eqref{2ilw} reduces to two uncoupled Benjamin-Ono (BO) equations (see \cite{chen1979}  and references therein for  background on the BO equation). Since the BO equation for $u(x,t)$ in this limit is chiral in the sense that it can only describe solitons moving to the right \cite{chen1979}, the corresponding BO equation for $v(x,t)$ is also chiral in that it only describes left-moving solitons (to see the latter, note that the BO equation for $v$ is obtained from the one for $u$ the by the transformation $v(x,t)=u(-x,t)$). 
Thus, one should expect that, for finite $\delta$, the system  \eqref{2ilw} is non-chiral in the sense that it can describe solitons in both directions and where solitons that move in opposite directions can interact in a non-trivial way; this expectation is confirmed by the multi-soliton solutions of \eqref{2ilw} obtained in  \cite[Section VI A]{berntson2020a}.  

It is known that the ILW equation in the limit $\delta\to 0$ reduces to the Korteweg-de Vries (KdV) equation 
\cite{kodama1981}. Moreover, systematic classifications of  integrable two-component systems of KdV-type exist in the literature \cite{foursov2003,karasu1997}. 
This suggests that our system in \eqref{2ilw} might have a known KdV-limit $\delta\to 0$. 
However, at closer inspection, one finds that this is not the case: the system in \eqref{2ilw} has no well-defined non-trivial limit $\delta\to 0$ (the interested reader can find details on this in Appendix~\ref{app:KdV}).
Thus, the interesting case for \eqref{2ilw} is where $\delta$ is in the range $0<\delta<\infty$. 

It is interesting to note that the system in \eqref{2ilw} allows for a reduction $v(x,t)=u(-x,t)$; this yields 
\begin{equation*} 
u_t(x,t) + 2 u(x,t) u_x(x,t) + Tu_{xx}(x,t)+\tilde{T}u_{xx}(-x,t)=0, 
\end{equation*} 
which is a non-local extension of the ILW equation where the temporal change of $u$ at position $x$ is affected by $\tilde{T}u_{xx}$ at position $-x$. 
We note in passing that integrable equations with nonlocalities in the form $u(-x,t)$ have recently attracted considerable attention \cite{ablowitz2017}. 
However, our system in \eqref{2ilw}, without such a non-locality, allows for a more conventional physical interpretation.  

To set our results in perspective, it is interesting to note that there is also a system superficially similar to \eqref{2ilw} but differing by signs which, from a physics point of view, make an important difference: 
\begin{equation*} 
\label{2ilwchiral} 
\begin{split} 
&u_t + 2 u u_x + Tu_{xx}+\tilde{T}v_{xx}=0,\\
&v_t + 2 v v_x + Tv_{xx}+\tilde{T}u_{xx}=0. 
\end{split} 
\end{equation*} 
One can show that this system is integrable \cite{berntson2021}; it is chiral in the sense that it only allows for solitons moving to the right, and it allows for a local reduction $v(x,t)=u(x,t)$: 
\begin{equation*} 
u_t + 2 uu_x + Tu_{xx}+\tilde{T}u_{xx}=0 ; 
\end{equation*} 
however, the latter is just the standard ILW equation with $\delta$ replaced by $\delta/2$ and thus not new (this follows from $T_\delta+\tilde{T}_\delta=T_{\frac{\delta}2}$, which can be easily checked using the definitions). 

From the point of view of applications to physics, it is natural to restrict $u$ and $v$ in the non-chiral ILW equation \eqref{2ilw} to be real-valued functions; 
however, it turns out that the results in this paper extend naturally to complex-valued functions, and we therefore allow $u$ and $v$ to be complex-valued. 

It is interesting to note that the non-chiral ILW equation in \eqref{2ilw} has multi-soliton solutions such that the time evolution of the poles is governed by the dynamics of the hyperbolic $A$-type Calogero-Moser systems  \cite[Section IV A]{berntson2020a}, in natural generalization of a famous result for the BO equation \cite{chen1979}.  Moreover, these solutions can be generalized to a periodic variant of  \eqref{2ilw}; in this case, the pole dynamics is governed by elliptic $A$-type Calogero-Moser systems \cite[Section IV A]{berntson2020a}. As we plan to report elsewhere, the results in the present paper can be generalized to the periodic case.  

\subsection{Possible physics applications}
\label{subsec1.2} 
As already mentioned, we first discovered a quantum version of the non-chiral ILW equation in the context of the fractional quantum Hall effect \cite[Section III]{berntson2020a}. 
This quantum version of equation \eqref{2ilw} arises from an underlying non-chiral conformal field theory  \cite[Section III]{berntson2020a}, different from the conformal field theory giving raise to a quantum version of the BO equation which is chiral \cite{abanov2009}.

Soliton equations are known for being widely applicable \cite{calogero2001}, and we therefore believe that the non-chiral ILW equation will also find applications in other areas of physics. 
To elaborate on one possible such example, we recall that the fundamental equations describing nonlinear water waves are invariant under the parity transformation $x\to -x$, and real solitary waves can move in both directions, left and right. 
However, known soliton equations such as the KdV, BO, and (standard) ILW equations can only describe solitons moving in one direction. One can, of course, extend such a chiral soliton equation by adding a corresponding equation for $v(x,t)=u(-x,t)$, but this system is trivial in the sense that solitons moving in opposite directions do not interact; this is not realistic from a physics point of view. Thus, one can regard \eqref{2ilw} as an integrable extension of such a trivial system where solitons moving in opposite directions interact in a particular way so as to preserve integrability.

The above discussion together with arguments by Calogero \cite{calogero2001} suggest that it would be worthwhile to revisit the derivation of the standard ILW equation from more fundamental parity invariant equations describing nonlinear water waves  \cite{joseph1977}, and thus try to justify \eqref{2ilw} as a description of nonlinear water waves including interaction effects that were missed in previously known soliton equations.

A more detailed discussion of possible applications of the non-chiral ILW equation in other areas of physics can be found in \cite[Section VI]{berntson2020a}.

\subsection{Related work}
\label{subsec1.3} 
We briefly mention some previous work on related integro-differential equations.
An integrable extension of the nonlinear Schr\"odinger equation with a nonlocal term involving the integral operator $T$ in \eqref{TT} was proposed in \cite{pelinovsky1995},  
 and the Hirota form of this so-called  intermediate nonlinear Schr\"odinger (INLS) equation was obtained in \cite{pelinovsky1995,matsuno2000}. 
 It is interesting to note that the Hirota form of INLS equation is similar to the Hirota form of the non-chiral ILW equation in \eqref{hirota_form}, but they are not the same; in particular, the latter includes shifts (see \eqref{shifts}) and the former does not but, instead, involves complex conjugation (see \cite[Eqs.\  (15a--c)]{pelinovsky1995} or \cite[Eqs.\ (8)--(9)]{matsuno2000}). 
The Lax pair and conservation laws for the INLS equation were obtained in \cite{pelinovskygrimshaw1995} and explored in \cite{matsuno2002,matsuno2004}.  
We also mention Ref.~\cite{zhang1994} 
 where a class of integrable integro-differential equations of ILW-type was studied, including a system combining an ILW equation and (linear) Schr\"{o}dinger equations in a nontrivial way; it would be interesting to try to modify this approach to accommodate the non-chiral ILW equation. 
 
\subsection{Plan}
\label{subsect1.4}
The plan of this paper is as follows. In Section \ref{laxsec}, we derive a Lax pair for \eqref{2ilw}. A Hirota bilinear form is presented in Section \ref{hirotasec}, where we additionally prove that the Hirota bilinear form is equivalent to \eqref{2ilw} by constructing explicit transformations from $(u,v)$ to the Hirota variables $(F,G)$ and vice-versa.  A B\"acklund transformation is constructed from the Hirota bilinear form in Section \ref{backlundsec}. Section \ref{conservationsec} contains two independent, complementary derivations of an infinite sequence of conservation laws.  
 Some properties of the operators $T$ and $\tilde{T}$ defined in \eqref{TT} are collected in Appendix~\ref{app:TT}, and details on the KdV-limit $\delta\to 0$ can be found in Appendix~\ref{app:KdV}.

In what follows we assume that the arguments of $T$ and $\tilde{T}$ are sufficiently regular and decay sufficiently rapidly to justify our arguments. We occasionally comment on specific necessary or sufficient conditions for clarity.

\section{Lax pair}\label{laxsec}

In this section we derive a Lax pair for \eqref{2ilw}. The ansatz used to obtain the Lax pair is a generalization of the ansatz used in \cite{ablowitz1982} to find a Lax pair for the standard ILW equation; the operative difference here is that the underlying Riemann-Hilbert (RH) problem on the cylinder has a pair of jumps instead of a single jump.  In Section \ref{conservationsec}, we will illustrate the utility of the Lax pair by using it to derive an infinite number of conservation laws for \eqref{2ilw}.

For fixed $\delta > 0$, let $C$ denote the cylinder $C=\C/2\ii\delta\Z$ and let $\pi:\C\to C$ be the natural projection. 
We can identify $C$ with the strip
\begin{equation*}
C\simeq \{z\in\C: 0 \leq \Im z < 2\delta \}
\end{equation*}
and a function $f:C\to\C$ can be viewed as a function $f:\C\to\C$ which is periodic with period $2\ii\delta$, i.e.,
\begin{equation*}
f(z+2\ii n\delta)=f(z),\qquad n\in\Z.
\end{equation*}
Let $C_0$ and $C_\delta$ denote the images of the lines $\Im z=0$ and $\Im z=\delta$, respectively, under $\pi$. We consider an eigenfunction, $\psi(z,t;k)$,  that will appear in the Lax pair; 
for each $t\in\R$ and $k\in\C$, $\psi(z)=\psi(z,t;k)$ is an analytic function $C\setminus(C_0\cup C_\delta)\to\C$ with jumps across $C_0$ and $C_\delta$. The boundary values of the eigenfunction, $\psi^{\pm}(z)$, are functions $C_0\cup C_\delta\to\C$ defined by
\begin{align}\label{psi_bv}
\psi^{\pm}(x,t;k)\coloneqq\lim\limits_{\epsilon \downarrow 0} \psi(x\pm\ii\epsilon,t;k),\qquad \psi^{\pm}(x+\ii\delta,t;k)\coloneqq\lim\limits_{\epsilon \downarrow 0} \psi(x+\ii\delta\pm\ii\epsilon,t;k).
\end{align}
We consider the following ansatz for a Lax pair for \eqref{2ilw}:
\begin{equation}\label{lax_ansatz}
\begin{cases}
\ii \psi^-_x+(-u-\mu_1)\psi^-=\nu_1\psi^+, \qquad &\text{for } z\in C_0, \\
\ii \psi^+_x+(v-\mu_2)\psi^+=\nu_2\psi^-, \qquad &\text{for } z\in C_\delta, \\
\psi_t+\ii \psi_{xx}-\ii A(z,t;k)\psi-\ii B(z,t;k)\psi_x=0, \qquad & \text{for } z\in C\setminus(C_0\cup C_\delta),
\end{cases}
\end{equation}
where $\mu_1=\mu_1(k)$, $\mu_2=\mu_2(k)$, $\nu_1=\nu_1(k)$, and $\nu_2=\nu_2(k)$ are complex-valued functions of the spectral parameter $k\in\C$, and $A(z,t;k)$ and $B(z,t;k)$ are, for each $t\in \R$ and $k\in \C$,  bounded analytic functions $\C\setminus(C_0\cup C_\delta)\to\C$ to be determined. 

To obtain the compatibility conditions for \eqref{lax_ansatz}, we write the boundary values of the $t$-part of the Lax pair:
\begin{equation*}
\psi^{\pm}_t+\ii \psi^{\pm}_{xx}-\ii A^{\pm}(z,t;k)\psi-\ii B^{\pm}(z,t;k)\psi^{\pm}_x=0, \qquad \text{for } z\in C_0\cup C_\delta.
\end{equation*}
This equation and its $x$-derivative can be used to eliminate $\psi^{\pm}_t$ and $\psi^{\pm}_{tx}$ from the $t$-derivative of the $x$-part of \eqref{lax_ansatz}, leading to 
\begin{align*}
& \ii \nu_1(B^+ - B^-) \psi_x^+ 
+ \ii\nu_1(A^- - A^+ + B_x^- +  2\ii u_x) \psi^+
	\\
&+ \big[-u_t - A_x^- + \ii (\mu_1 + u)B_x^- - 2\mu_1 u_x + \ii B^- u_x - 2uu_x - \ii u_{xx}\big]\psi^+ =  0, \qquad \text{on $C_0$},
\end{align*}
and
\begin{align*}
& \ii \nu_2(B^+ - B^-) \psi_x^+ 
+ \ii\nu_2(A^+ - A^- + B_x^+ - 2\ii v_x) \psi^+
	\\
&+ \big[v_t - A_x^+ + \ii (\mu_2 - v)B_x^+ + 2\mu_2 v_x - \ii B^+ v_x - 2vv_x + \ii v_{xx}\big]\psi^+ =  0, \qquad \text{on $C_\delta$}.
\end{align*}
Setting the coefficients of $\psi_x^\pm$ and $\psi^\pm$ to zero, we find the equations
\begin{subequations}\label{bicond}
\begin{align}\label{biconda2}
& B^+ - B^- = 0, \qquad \text{on $C_0 \cup C_\delta$},
	\\\label{bicondb2}
& A^+ -A^- - B_x^- -  2\ii u_x = 0, \qquad \text{on $C_0$},
	\\\label{bicondc2}
& A^+ - A^- + B_x^+ - 2\ii v_x = 0, \qquad \text{on $C_\delta$},
 	\\\label{biconde2}
& u_t + A_x^- - \ii(\mu_1 + u)B_x^- + 2\mu_1 u_x - \ii B^- u_x + 2uu_x + \ii u_{xx} = 0,\qquad \text{on $C_0$},
	\\\label{bicondd2}
& v_t - A_x^+ + \ii(\mu_2 -v)B_x^+ + 2\mu_2 v_x - \ii B^+ v_x - 2vv_x + \ii v_{xx} =  0, \qquad \text{on $C_\delta$}.
\end{align}
\end{subequations}
Condition \eqref{biconda2} shows that $B$ has no jump across $C_0 \cup C_\delta$. Hence $B$ is a bounded analytic function $C \to \C$, so $B(z,t) =B_0$ must be a constant. Condition \eqref{bicondb2} then shows that $A$ is a solution of the following scalar RH problem on $C$:
\begin{itemize}
\item $A:C \setminus (C_0 \cup C_\delta) \to \C$ is an analytic function, 
\item across $C_0 \cup C_\delta$, $A$ satisfies the jump condition
$$A^+(z) - A^-(z) = \begin{cases} 2\ii u_x(x),  &z = x \in C_0, \\
2\ii v_x(x),  &z = x+ \ii\delta \in C_\delta, 
\end{cases}$$
\item $A(z) = \mathrm{O}(1)$ as $z \to \infty$.
\end{itemize}

To solve this problem we use the following lemma. 

\begin{lemma}[RH problem on $C$ with a jump across $C_0 \cup C_\delta$]\label{RHlemma}
Let $J_0: C_0 \to \C$ and $J_1: C_\delta \to \C$ be continuous functions such that
$$\int_\R J_0(x)\, \mathrm{d}x = \int_\R J_1(x)\,\mathrm{d}x = 0.$$ 
Define $J: C_0 \cup C_\delta \to \C$ by
$$J(z) \coloneqq \begin{cases} J_0(x), & z \in C_0, \\
J_1(x), & z \in C_\delta.
\end{cases}$$
Then the scalar RH problem
\begin{itemize}
\item $A:C \setminus (C_0 \cup C_\delta)$ is analytic, 
\item across $C_0 \cup C_\delta$, $A$ satisfies the jump condition
$$A^+(z) - A^-(z) = J(z), \qquad z \in C_0 \cup C_\delta,$$
\item $A(z) = \mathrm{O}(z^{-1})$ as $z \in C$ approaches infinity, 
\end{itemize}
has the unique solution
\begin{align}\label{rh_soln}
A(z) = \frac{1}{4\delta \ii} \int_{C_0 \cup C_\delta} \coth\bigg(\frac{\pi(z'-z)}{2\delta}\bigg) J(z')\, \mathrm{d}z', \qquad z \in C \setminus (C_0 \cup C_\delta),
\end{align}
where both $C_0$ and $C_\delta$ are oriented from $\re\, z=-\infty$ to $\re\,z=\infty$.

Moreover, this solution satisfies
$$A^\pm(z) = \begin{cases}
\frac{(TJ_0)(x) + (\tilde{T}J_1)(x)}{2\ii} \pm \frac{1}{2} J_0(x), & z = x \in C_0 \cong \R,
	\\
\frac{(\tilde{T}J_0)(x) + (TJ_1)(x)}{2\ii} \pm \frac{1}{2} J_1(x), & z = x + \ii \delta \in C_\delta \cong \R + \ii \delta,
\end{cases}
$$
with the operators $T$ and $\tilde{T}$ as in \eqref{TT}. 
\end{lemma}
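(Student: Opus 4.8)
The plan is to recognize the integrand in \eqref{rh_soln} as the natural Cauchy kernel on the cylinder $C$ and to prove the lemma by the Sokhotski--Plemelj machinery, treating analyticity, the prescribed decay, the jump relation, and uniqueness in turn. The key observation is that $\frac{\pi}{2\delta}\coth\!\big(\frac{\pi w}{2\delta}\big)$ is the $2\ii\delta$-periodization of the ordinary Cauchy kernel $1/w$: it is periodic with period $2\ii\delta$, and near $w=0$ it behaves like $1/w$ with residue $1$. Hence the right-hand side of \eqref{rh_soln} is a genuine Cauchy-type integral on $C$, and for $z\notin C_0\cup C_\delta$ its integrand is analytic in $z$ with a singularity in $z'$ only where $z'$ meets the contour; differentiating under the integral sign then shows that $A$ is analytic on $C\setminus(C_0\cup C_\delta)$. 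This settles analyticity essentially for free.

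For the behaviour at infinity I would use the asymptotics $\coth\!\big(\frac{\pi w}{2\delta}\big)\to \pm 1$ as $\Re w\to\pm\infty$, with exponentially small corrections. As $\Re z\to\pm\infty$ the kernel tends to the constant $\mp 1$ uniformly for $z'$ on the contour, so the leading term of $A(z)$ is proportional to $\int_{C_0\cup C_\delta}J(z')\,\dd z'=\int_\R J_0\,\dd x+\int_\R J_1\,\dd x$, which vanishes by hypothesis. The surviving contribution decays (in fact exponentially), so in particular $A(z)=\mathrm{O}(z^{-1})$; this is exactly where the two zero-integral conditions enter.

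The computational heart of the proof is the evaluation of the boundary values $A^\pm$, which simultaneously verifies the jump condition and yields the explicit formulas in terms of $T$ and $\tilde T$. Fix $z=x\in C_0$ and split the contour. The contribution of $C_0$ is singular: since $\frac{1}{4\delta\ii}\coth\!\big(\frac{\pi(x'-x)}{2\delta}\big)\sim \frac{1}{2\pi\ii(x'-x)}$, the Plemelj formula gives, in the limit $\epsilon\downarrow 0$, the principal-value integral $\frac{(TJ_0)(x)}{2\ii}$ together with the jump term $\pm\tfrac12 J_0(x)$. The contribution of $C_\delta$ is regular at $x$, and here I would invoke the identity $\coth\!\big(w+\tfrac{\ii\pi}{2}\big)=\tanh w$ to rewrite $\coth\!\big(\frac{\pi(x'+\ii\delta-x)}{2\delta}\big)=\tanh\!\big(\frac{\pi(x'-x)}{2\delta}\big)$, turning this piece into $\frac{(\tilde T J_1)(x)}{2\ii}$. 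Adding the two gives precisely the claimed value of $A^\pm$ on $C_0$. The computation for $z=x+\ii\delta\in C_\delta$ is identical after interchanging the roles of the two lines, now using $\coth\!\big(w-\tfrac{\ii\pi}{2}\big)=\tanh w$ for the $C_0$ piece; this produces $\frac{(\tilde T J_0)(x)+(TJ_1)(x)}{2\ii}\pm\tfrac12 J_1(x)$. In particular $A^+-A^-=J_0$ on $C_0$ and $A^+-A^-=J_1$ on $C_\delta$, so $A$ solves the RH problem.

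Finally, for uniqueness I would take the difference $D=A_1-A_2$ of two solutions. It has no jump across $C_0\cup C_\delta$, so by Morera's theorem it extends to a function analytic on all of $C$, and it still satisfies $D=\mathrm{O}(z^{-1})$ at both ends of the cylinder. Passing to the coordinate $\zeta=\ee^{\pi z/\delta}$, which is single-valued on $C$ and maps the two ends of the cylinder to $\zeta=0$ and $\zeta=\infty$, turns $D$ into a function analytic on $\C\setminus\{0\}$ that tends to $0$ at both $0$ and $\infty$; the singularity at $0$ is removable and Liouville's theorem forces $D\equiv0$. The main obstacle is the third step: one must track the orientation of the contour and the side from which each boundary value is taken so that the Plemelj jump carries the correct sign, and must correctly apply the half-period shift identities $\coth(w\pm\tfrac{\ii\pi}{2})=\tanh w$ that convert the off-diagonal contributions from $T$ into $\tilde T$; the convergence of the principal-value integrals and the legitimacy of the $\epsilon\downarrow0$ limit rely on the standing regularity and decay assumptions on $J_0$ and $J_1$.
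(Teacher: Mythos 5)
Your proposal is correct and follows essentially the same route as the paper's proof: analyticity via the $2\ii\delta$-periodicity of the $\coth$ kernel, boundary values via the Plemelj formula with the half-period identity $\coth(w\pm\ii\pi/2)=\tanh w$ converting the off-line contributions into $\tilde{T}$-transforms, the zero-integral hypotheses killing the constant term in the $\coth(x+\ii y)=\pm1+\mathrm{O}(e^{-2|x|})$ asymptotics, and uniqueness by Liouville. Your uniqueness step merely spells out what the paper leaves implicit, by uniformizing the cylinder with $\zeta=e^{\pi z/\delta}$ before applying Liouville's theorem, which is a harmless (and slightly more explicit) rendering of the same argument.
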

\begin{proof}
If $A_1$ and $A_2$ are two different solutions, then $A_1 - A_2$  is analytic on $C$ and of order $\mathrm{O}(z^{-1})$ as $z \in C$. Hence, by Liouville's theorem, $A_1 = A_2$. This proves uniqueness. 

Let $A$ be given by \eqref{rh_soln}. For all $z \in \C$ and $n \in \Z$, we have
$$\coth\bigg(\frac{\pi(z + 2\ii n\delta )}{2\delta}\bigg) = \coth\bigg(\frac{\pi z}{2\delta}\bigg).$$
Hence $A$ descends to an analytic function $C \setminus (C_0 \cup C_\delta) \to \C$. 
For $x \in C_0 \cong \R$, the Plemelj formula gives
\begin{align*}
A^\pm(x)  
= &\;\frac{1}{4\delta \ii} \bigg\{\pvint_{\R} \coth\bigg(\frac{\pi(z'-x)}{2\delta}\bigg) J(z')\,\mathrm{d}z'
+ \int_{\R + \ii\delta} \coth\bigg(\frac{\pi(z'-x)}{2\delta}\bigg) J(z')\,\mathrm{d}z'
	\\
& \pm \ii\pi\,\underset{z' = x}\Res\coth\bigg(\frac{\pi(z'-x)}{2\delta}\bigg) J_0(z')\bigg\}
	\\
= &\; \frac{1}{2\ii} (TJ_0)(x) 
+ \frac{1}{4\delta \ii} \int_{\R} \coth\bigg(\frac{\pi(x'+\ii\delta-x)}{2\delta}\bigg) J_1(x')\, \mathrm{d}x'
\pm \frac{1}{2} J_0(x)
	\\
= &\; \frac{1}{2\ii} (TJ_0)(x) + \frac{1}{2\ii} (\tilde{T}J_1)(x) \pm \frac{1}{2} J_0(x).
\end{align*}

Similarly, for $x + \ii\delta \in C_\delta \cong \R + \ii\delta$,
\begin{align*}
A^\pm(x + \ii\delta) 
= &\; \frac{1}{4\delta \ii} \bigg\{\int_{\R} \coth\bigg(\frac{\pi(z'-x-\ii\delta)}{2\delta}\bigg) J(z') \,\mathrm{d}z'
	\\
& + \pvint_{\R + \ii\delta} \coth\bigg(\frac{\pi(z'-x-\ii\delta)}{2\delta}\bigg) J(z') \,\mathrm{d}z'
	\\
& \pm \ii\pi  \underset{z' = x+\ii\delta}\Res \coth\bigg(\frac{\pi(z'-x-\ii\delta)}{2\delta}\bigg) J(z')\bigg\}
	\\
= &\; \frac{1}{2\ii} (\tilde{T}J_0)(x) + \frac{1}{2\ii} (TJ_1)(x) \pm \frac{1}{2} J_1(x).
\end{align*}
This proves the expressions for the boundary values and shows that $A$ satisfies the correct jump condition. 

As $x \to \pm\infty$, we have $\coth(x+\ii y) = 1 + 2e^{\mp 2\ii y} e^{-2|x|} + \mathrm{O}(e^{-4|x|})$ uniformly for $y\in\R$. Hence the assumption that $\int_\R J_0(x)\,\mathrm{d}x = 0$ and $\int_\R J_1(x)\,\mathrm{d}x = 0$ implies that $A(z)\to0$ as $z \in C$ tends to $\infty$.
\end{proof}

Using \ref{RHlemma}, we find
\begin{align}\label{A_rh_soln}
A(z,t;k)=&\; \frac{1}{2\delta}\int_\R \coth\bigg(\frac{\pi(x'-z)}{2\delta}\bigg) u_x(x')\,\mathrm{d}x' \\
&+\frac{1}{2\delta}\int_\R \coth\bigg(\frac{\pi(x'+\ii\delta-z)}{2\delta}\bigg) v_x(x')\,\mathrm{d}x'+ A_0(k), \qquad z\in C\setminus (C_0\cup C_\delta), \nonumber
\end{align}
and 
\begin{align*}
A^\pm(z,t;k) = \begin{cases} 
(Tu_x)(x)  + (\tilde{T}v_x)(x) \pm \ii u_x(x) + A_0(k), & z = x \in C_0,
	\\
(Tv_x)(x)  + (\tilde{T}u_x)(x) \pm \ii v_x(x)+ A_0(k), & z = x + \ii\delta \in C_\delta. 
\end{cases}
\end{align*}
Substituting these expressions for $A^\pm$ into \eqref{bicondd2} and \eqref{biconde2} and using that $T$ and $\tilde{T}$ commute with $\partial_x$ from Proposition \ref{TpropertiesR}, we arrive at the two-component equation
\begin{align*}
& u_t + Tu_{xx} + \tilde{T}v_{xx} + 2\mu_1 u_x - \ii B_0 u_x + 2uu_x  = 0,
	\\
& v_t - Tv_{xx} - \tilde{T}u_{xx} + 2\mu_2 v_x - \ii B_0 v_x - 2vv_x =  0.
\end{align*}
Choosing $\mu_1 = \mu_2 = \mu$ and $B_0 = -2\ii\mu$, this becomes the non-chiral ILW equation \eqref{2ilw}. 
We summarize the results above in a theorem.

\begin{theorem}[Lax pair for the non-chiral ILW equation]
The non-chiral ILW equation \eqref{2ilw} is the compatibility condition of the Lax pair
\begin{align}\label{2ilw_laxpair}
\begin{cases}
\ii\psi_x^- + (-u-\mu)\psi^- = \nu_1 \psi^+, & \text{on $C_0$},
	\\
\ii\psi_x^+ + (v-\mu)\psi^+ = \nu_2 \psi^-, & \text{on $C_\delta$},
	\\
\psi_t^\pm + i \psi_{xx}^\pm - 2\mu \psi_x^\pm - \ii (Tu_x + \tilde{T}v_x \pm \ii u_x + A_0) \psi^\pm=0, & \text{on $C_0$},
	\\
\psi_t^\pm + i \psi_{xx}^\pm - 2\mu \psi_x^\pm - \ii (Tv_x  + \tilde{T}u_x  \pm \ii v_x + A_0) \psi^\pm=0, & \text{on $C_\delta$},
\end{cases}
\end{align}
where $\mu=\mu(k)$, $\nu_1=\nu_1(k)$, $\nu_2=\nu_2(k)$, and $A_0=A_0(k)$ are complex parameters which may depend on the spectral parameter $k$. 
\end{theorem}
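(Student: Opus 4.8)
The plan is to prove the theorem by verifying the compatibility condition of \eqref{2ilw_laxpair} directly: the $x$-part and $t$-part are consistent precisely when $(u,v)$ solves \eqref{2ilw}. Since the functions $A$ and $B$ left undetermined in the ansatz \eqref{lax_ansatz} have now been fixed --- $B$ to the constant $B_0=-2\ii\mu$ and $A$ to the explicit solution \eqref{A_rh_soln} of the Riemann--Hilbert problem of Lemma \ref{RHlemma} --- the content of the theorem is that, with these choices and the normalization $\mu_1=\mu_2=\mu$, the calculation sketched before the statement closes. Concretely, I would differentiate each of the two $x$-part equations of \eqref{2ilw_laxpair} with respect to $t$ and use the $t$-part equations, together with their $x$-derivatives, to eliminate $\psi_t^\pm$ and $\psi_{tx}^\pm$.

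After this elimination I collect the coefficients of the boundary quantities $\psi^\pm$ and $\psi_x^\pm$; using the two $x$-part relations to rewrite everything in terms of the single pair $\psi^+,\psi_x^+$, I obtain a linear expression in $\psi^+,\psi_x^+$ that must vanish identically on each of $C_0$ and $C_\delta$. Setting the coefficients to zero reproduces the system \eqref{bicond}. The three structural relations \eqref{biconda2}--\eqref{bicondc2} then hold by construction: $B\equiv B_0$ is jump-free, so \eqref{biconda2} is immediate, while \eqref{bicondb2} and \eqref{bicondc2} are exactly the jumps $A^+-A^-=2\ii u_x$ on $C_0$ and $A^+-A^-=2\ii v_x$ on $C_\delta$ built into the Riemann--Hilbert problem. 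It remains to feed the boundary values $A^\pm$ from Lemma \ref{RHlemma} into \eqref{biconde2} and \eqref{bicondd2} and, using that $T,\tilde T$ commute with $\partial_x$ (Proposition \ref{TpropertiesR}), to check that these two conditions collapse onto the two components of \eqref{2ilw}.

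The main obstacle is bookkeeping the nonlocal, jump-carrying structure of $A$, since the mechanism rests on a cancellation between local and nonlocal pieces. On $C_0$ Lemma \ref{RHlemma} gives $A^\pm=Tu_x+\tilde T v_x\pm\ii u_x+A_0$, where the principal-value parts carry the nonlocal operators and the sign-dependent half-residue $\pm\ii u_x$ carries the jump. Taking $B\equiv B_0=-2\ii\mu$ makes the $\mu$-dependent transport terms in \eqref{biconde2} cancel, and $A_x^-$ contributes $Tu_{xx}+\tilde T v_{xx}-\ii u_{xx}$; the local $-\ii u_{xx}$ coming from the half-residue is then exactly what annihilates the bare $+\ii u_{xx}$ produced by the $\ii\psi_{xx}^-$ term, leaving precisely the dispersion of \eqref{2ilw} (and symmetrically on $C_\delta$). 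I would therefore track carefully which part of $A^\pm$ jumps across each contour and confirm the solvability hypothesis $\int_\R u_x\,\dd x=\int_\R v_x\,\dd x=0$ of Lemma \ref{RHlemma} --- automatic whenever $u,v$ decay at $\pm\infty$ --- so that \eqref{A_rh_soln} is legitimate.
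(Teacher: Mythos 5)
Your proposal is correct and follows essentially the same route as the paper: eliminating $\psi_t^\pm,\psi_{tx}^\pm$ to obtain the compatibility system \eqref{bicond}, concluding $B\equiv B_0$ from the absence of a jump, solving for $A$ via the Riemann--Hilbert problem of Lemma \ref{RHlemma}, and substituting the boundary values $A^\pm$ (with the half-residue terms $\pm\ii u_x$, $\pm\ii v_x$ cancelling the bare $\ii u_{xx}$, $\ii v_{xx}$ after the choices $\mu_1=\mu_2=\mu$, $B_0=-2\ii\mu$) to recover \eqref{2ilw}. Your added remark verifying the zero-mean hypothesis $\int_\R u_x\,\dd x=\int_\R v_x\,\dd x=0$ of Lemma \ref{RHlemma} is a sound point that the paper leaves implicit under its standing decay assumptions.
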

\begin{remark}\upshape
The $t$-parts of \eqref{2ilw_laxpair} have an analytic extension to $C \setminus (C_0 \cup C_\delta)$ and can be alternatively written as
$$\psi_t + \ii \psi_{xx} - 2\mu \psi_x - \ii A \psi = 0, \qquad \ z\in C \setminus (C_0 \cup C_\delta),$$
where $A=A(z,t;k)$ is given by \eqref{A_rh_soln}. 
\end{remark}

\section{Hirota bilinear form}\label{hirotasec}
In this section, we show that the bilinear system
\begin{subequations}\label{hirota_form}
\begin{align}
&(\ii D_t-D_x^2) F^-\cdot G^+=0, \label{hirota_form_a}\\
&(\ii D_t-D_x^2)F^+\cdot G^-=0, \label{hirota_form_b}
\end{align}
\end{subequations}
where $D_t$ and $D_x$ are the usual Hirota derivatives:
\begin{equation}
D_t^mD_x^n F\cdot G \coloneqq (\partial_t-\partial_{t'})^m(\partial_x-\partial_{x'})^n FG|_{t'=t,x'=x},
\end{equation}
and
\begin{equation}\label{shifts}
F^{\pm}(x,t)=F(x\pm\ii\delta/2,t), \quad G^{\pm}(x,t)=G(x\pm\ii\delta/2,t),
\end{equation}
is equivalent to \eqref{2ilw} in the sense of the following theorem.

\begin{theorem}[Hirota bilinear form of the non-chiral ILW equation]\label{bilinearth}
\hfill \break\nopagebreak 
\begin{roster}
\item
Let $F(z,t)$ and $G(z,t)$ be functions of $z\in\C$ and $t\in  \R$ such that $\log F(z,t)$ and $\log G(z,t)$ are analytic for $-\delta/2 < \im\, z<\delta/2$ and continuous for $-\delta/2 \leq \im\, z \leq \delta/2$, and
\begin{align}\label{FGlimits}
\begin{cases}
\log F(x+\ii y,t)=f_0\pm f_1 x+\mathrm{O}(x^{-1}) \\ \log G(x+\ii y,t)=g_0\pm f_1 x+\mathrm{O}(x^{-1})
\end{cases} \text{ as } x\to\pm\infty, \qquad -\frac{\delta}{2}\leq y\leq \frac{\delta}{2},
\end{align}
for some constants $f_0,g_0,f_1\in \C$. Then $F,G$ satisfy the bilinear system \eqref{hirota_form} if and only if the functions
\begin{equation}\label{FG_to_uv}
u=\ii \partial_x\log \frac{F^-}{G^+},\qquad v=\ii\partial_x\log \frac{G^-}{F^+},
\end{equation}
satisfy \eqref{2ilw}. 
\\
\item 
Suppose $u(x,t)$, $v(x,t)$ are solutions of \eqref{2ilw} and the transforms $Tu$, $Tv$, $\tilde{T}u$, $\tilde{T}v$ exist. Then $F(x,t)$, $G(x,t)$, defined up to inessential multiplicative constants by 
\begin{equation}
\begin{split}\label{FG_from_uv}
\begin{dcases}
\ii \partial_z \log F(z,t)=\frac{1}{2\delta\ii} \int_\R \tanh\bigg(\frac{\pi(x'-z)}{\delta}\bigg)\big(u_{+}(x')+v_{+}(x')\big)\,\mathrm{d}x', \\ 
\ii\partial_z \log G(z,t)=\frac{1}{2\delta\ii}\int_\R \tanh\bigg(\frac{\pi(x'-z)}{\delta}\bigg)\big(u_{-}(x')+v_{-}(x')\big)\,\mathrm{d}x',
\end{dcases}
\end{split}
\end{equation}
where
\begin{equation}
\begin{split}\label{uvprojections}
\begin{cases}
u_{\pm}\coloneqq\frac12 u\mp\frac{\ii}{2}\big(Tu+\tilde{T}v \big)\\
v_{\pm}\coloneqq\frac12 v\pm\frac{\ii}{2}\big(Tv+\tilde{T}u\big)
\end{cases}\qquad x,t\in \R,
\end{split}
\end{equation}
and \eqref{FGlimits}, are analytic for $-\delta/2<\im\,z<\delta/2$, continuous for $-\delta/2\leq\im\, z\leq \delta/2$, and satisfy the Hirota equations \eqref{hirota_form}.
\end{roster}
\end{theorem}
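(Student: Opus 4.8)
The plan is to prove both directions through a single computation that converts the bilinear system \eqref{hirota_form} into \eqref{2ilw} via the substitution \eqref{FG_to_uv}, with analyticity across the strip supplying the operators $T$ and $\tilde T$.

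For part (A), I would first divide \eqref{hirota_form_a} by $F^-G^+$ and \eqref{hirota_form_b} by $F^+G^-$ and apply the standard Hirota identities $\frac{D_x a\cdot b}{ab}=\partial_x\log(a/b)$, $\frac{D_x^2 a\cdot b}{ab}=\partial_x^2\log(ab)+[\partial_x\log(a/b)]^2$, and $\frac{D_t a\cdot b}{ab}=\partial_t\log(a/b)$. Since $u=\ii\partial_x\log(F^-/G^+)$ and $v=\ii\partial_x\log(G^-/F^+)$, this turns the two bilinear equations into the scalar equations $\ii\partial_t\log(F^-/G^+)-\partial_x^2\log(F^-G^+)+u^2=0$ and $\ii\partial_t\log(F^+/G^-)-\partial_x^2\log(F^+G^-)+v^2=0$. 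Differentiating in $x$ and using $\partial_x\log(F^-/G^+)=-\ii u$ and $\partial_x\log(F^+/G^-)=\ii v$ reduces them to $u_t+2uu_x-\partial_x^3\log(F^-G^+)=0$ and $v_t-2vv_x+\partial_x^3\log(F^+G^-)=0$.

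The crux is to express the ``sum'' combinations $\partial_x\log(F^-G^+)$ and $\partial_x\log(F^+G^-)$ in terms of $u,v$. Writing $U=\partial_x\log F$ and $V=\partial_x\log G$, the substitution gives $U^--V^+=-\ii u$ and $U^+-V^-=\ii v$, whereas the quantities I need are $U^-+V^+$ and $U^++V^-$. Here I would invoke analyticity of $U,V$ across the strip of width $\delta$: in Fourier space the shift by $\pm\ii\delta/2$ acts as $e^{\mp k\delta/2}$ while $\widehat{Tf}=\ii\coth(k\delta)\hat f$ and $\widehat{\tilde T f}=\ii\,\csch(k\delta)\hat f$, so solving the resulting $2\times2$ system yields $U^-+V^+=-(Tu+\tilde T v)$ and $U^++V^-=-(Tv+\tilde T u)$ (equivalently, these follow from a Cauchy representation on the cylinder in the spirit of Lemma \ref{RHlemma}). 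Substituting and using that $T,\tilde T$ commute with $\partial_x$ (Proposition \ref{TpropertiesR}) produces exactly \eqref{2ilw}, proving the ``only if'' direction. The ``if'' direction is the same chain read backwards: the PDE shows the $x$-derivative of each scalar equation vanishes, so each equals a function of $t$ alone, and the asymptotics \eqref{FGlimits}---which force $u,v\to0$ and the remaining terms $\to0$ as $x\to\pm\infty$---pin that function to zero, recovering \eqref{hirota_form}.

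For part (B), I would verify that the $F,G$ built from \eqref{FG_from_uv} meet the hypotheses of (A) and satisfy \eqref{FG_to_uv}, and then quote (A). Analyticity on the open strip and continuity up to its boundary follow from the $\tanh(\pi(\cdot)/\delta)$ kernel, whose $z$-poles sit precisely on $\im z=\pm\delta/2$. To check $u=\ii\partial_x\log(F^-/G^+)$ (and its analogue for $v$), I would take boundary values of $\ii\partial_z\log F$ at $\im z=-\delta/2$ and of $\ii\partial_z\log G$ at $\im z=+\delta/2$; the half-period shift $\tanh(w\pm\ii\pi/2)=\coth w$ converts the kernel into the principal-value ($T$- and $\tilde T$-type) kernel, and Plemelj adds the half-residue terms. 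Combining with the definitions \eqref{uvprojections} of $u_\pm,v_\pm$, which are exactly the projections of $u,v$ onto functions analytic in the upper/lower strip, collapses the result to $u$ and $v$; the asymptotics \eqref{FGlimits} then follow from the decay of $u,v$ and the large-$x$ behaviour of the integral. I expect the main obstacle in both parts to be these strip boundary-value identities, where analyticity across the strip is traded for the nonlocal operators $T,\tilde T$ and one must handle principal values, the $\tanh\leftrightarrow\coth$ shift, and the decay hypotheses carefully; the remainder is routine manipulation of the Hirota identities.
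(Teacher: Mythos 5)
Your proposal is correct and follows essentially the same route as the paper: your key identities $U^-+V^+=-(Tu+\tilde{T}v)$ and $U^++V^-=-(Tv+\tilde{T}u)$ are exactly the content of Lemma \ref{lemma2} (with $u_\pm,v_\pm$ the boundary values \eqref{uvplusminusdef}), which the paper proves by the contour-deformation argument (Lemma \ref{lemma1}) that you offer as your ``Cauchy representation'' fallback, and your part (B) verification---boundary values of the $\tanh$ kernel, the shift $\tanh(w\pm\ii\pi/2)=\coth w$, Plemelj half-residues, and the collapse via the projection property---is precisely the paper's argument through Lemma \ref{RHlemma2}, $T_{\delta/2}=T+\tilde{T}$, and $(T+\tilde{T})(T-\tilde{T})f=-f$ from Proposition \ref{TpropertiesR}. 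The one caution is that your primary Fourier-multiplier derivation is formally singular at $k=0$ (since $\partial_x\log F,\,\partial_x\log G\to\pm f_1$ need not decay, while $\coth(k\delta)$ and $\csch(k\delta)$ have poles there), so a rigorous write-up should use your contour alternative, which is what the paper does and which handles the non-decay through the cancellation condition \eqref{limgplus}.
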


\subsection{Proof of Theorem \ref{bilinearth}A}
Suppose $(F,G)$ and $(u,v)$ are related as in \eqref{FG_to_uv}. We write 
\begin{equation*}
u(x,t)=u_+(x,t)+u_-(x,t),\qquad v(x,t)=v_+(x,t)+v_-(x,t),
\end{equation*}
where $u_{\pm}$ and $v_{\pm}$ are defined by
\begin{equation}\label{uvplusminusdef}
\begin{split}
&u_+(z,t)\coloneqq\ii\partial_z\log F(z-\ii\delta/2,t),\qquad u_-(z,t)\coloneqq-\ii\partial_z\log G(z+\ii\delta/2,t), \\
&v_+(z,t)\coloneqq-\ii\partial_z\log F(z+\ii\delta/2,t),\qquad v_-(z,t)\coloneqq\ii\partial_z\log G(z-\ii\delta/2,t),
\end{split}
\end{equation}
with $z$ the complex extension of $x$. By our assumptions on the analyticity of $\log F$ and $\log G$, we see that $u_+$ and $v_-$ are analytic in the strip $0< \im z<\delta$ and $u_-$ and $v_+$ are analytic in the strip $-\delta<\im z< 0$. Additionally, we observe that
\begin{equation}\label{uvplusminusrelations}
v_+(z,t)=-u_+(z+\ii\delta,t),\qquad v_-(z,t)=-u_-(z-\ii\delta,t).
\end{equation}

\begin{lemma}\label{lemma1}
If $g_+(z)$ is analytic in the strip $0 < \im z < \delta$, continuous in the strip $0 \leq \im z \leq \delta$, and 
\begin{align}\label{limgplus}
\lim_{R \to \infty} \bigg(\int_{-R}^{-R+ \ii\delta} + \int_{R}^{R+ \ii\delta}\bigg) g_+(z)\,\mathrm{d}z = 0,
\end{align}
then
\begin{align}\label{TTtildegplus}
(Tg_+)(x) - (\tilde{T}[g_+(\cdot + \ii\delta)])(x) = \ii g_+(x), \qquad x \in \R.
\end{align}
Similarly, if $g_-(z)$ is analytic in the strip $-\delta < \im z < 0$, continuous in the strip $-\delta \leq \im z \leq 0$, and 
\begin{align}\label{limgminus}
\lim_{R \to \infty} \bigg(\int_{-R-\ii\delta}^{-R} + \int_{R- \ii\delta}^{R}\bigg) g_-(z)\,\mathrm{d}z = 0,
\end{align}
then
\begin{align}\label{TTtildegminus}
(Tg_-)(x) - (\tilde{T}[g_-(\cdot - \ii\delta)])(x) = -\ii g_-(x), \qquad x \in \R.
\end{align}
\end{lemma}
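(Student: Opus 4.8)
The plan is to convert both operators on the left-hand side of \eqref{TTtildegplus} into integrals with a common $\coth$-kernel and then read the result off from the residue theorem applied to the boundary of the strip $0\le\im z\le\delta$. First I would rewrite the $\tilde{T}$-term: in $(\tilde{T}[g_+(\cdot+\ii\delta)])(x)=\frac{1}{2\delta}\int_\R\tanh\big(\frac{\pi}{2\delta}(x'-x)\big)g_+(x'+\ii\delta)\,\dd x'$ I substitute $z'=x'+\ii\delta$ to move the contour to $\R+\ii\delta$, so that the argument of $\tanh$ becomes $\frac{\pi}{2\delta}(z'-x)-\ii\pi/2$. Using the elementary periodicity identity $\tanh(\zeta-\ii\pi/2)=\coth\zeta$, the $\tanh$-kernel becomes exactly the $\coth$-kernel appearing in $T$. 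Hence the left-hand side of \eqref{TTtildegplus} equals
\[
\frac{1}{2\delta}\bigg(\pvint_\R-\int_{\R+\ii\delta}\bigg)\coth\bigg(\frac{\pi}{2\delta}(z'-x)\bigg)\,g_+(z')\,\dd z'.
\]

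Next I would interpret this difference as part of a contour integral. Set $h(z')\coloneqq\coth\big(\frac{\pi}{2\delta}(z'-x)\big)g_+(z')$. Since $g_+$ is analytic in the open strip $0<\im z'<\delta$, the only singularity of $h$ in the closed strip is the simple pole of $\coth$ at $z'=x$ on the lower boundary $C_0$, whose residue is $\frac{2\delta}{\pi}g_+(x)$. I would integrate $h$ counterclockwise around the rectangle with corners $\pm R$ and $\pm R+\ii\delta$, indenting the lower edge with a small semicircle that bulges upward into the strip so as to exclude $z'=x$; by Cauchy's theorem the indented contour integral vanishes. The bottom edge then contributes $\pvint_\R h$ together with the semicircle, which is a clockwise half-turn about the pole and therefore contributes $-\ii\pi\cdot\frac{2\delta}{\pi}g_+(x)=-2\ii\delta\,g_+(x)$, while the top edge, traversed right-to-left, contributes $-\int_{\R+\ii\delta}h$.

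It remains to show that the two vertical edges drop out as $R\to\infty$. On these edges $\coth\big(\frac{\pi}{2\delta}(z'-x)\big)\to\pm1$ as $\re z'\to\pm\infty$, exponentially fast and uniformly in $\im z'\in[0,\delta]$ — this is the estimate $\coth(x+\ii y)=\pm1+\mathrm{O}(e^{-2|x|})$ already used in Lemma~\ref{RHlemma} — so, given mild growth control on $g_+$ along the verticals, the vertical contributions reduce in the limit to $\big(\int_{R}^{R+\ii\delta}+\int_{-R}^{-R+\ii\delta}\big)g_+(z)\,\dd z$, which vanishes by hypothesis \eqref{limgplus}. Collecting the surviving pieces gives $\pvint_\R h-\int_{\R+\ii\delta}h=2\ii\delta\,g_+(x)$, and division by $2\delta$ yields \eqref{TTtildegplus}. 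The identity \eqref{TTtildegminus} follows from the mirror-image argument on the strip $-\delta\le\im z'\le0$: the substitution $z'=x'-\ii\delta$ together with $\tanh(\zeta+\ii\pi/2)=\coth\zeta$ again produces a common $\coth$-kernel, and hypothesis \eqref{limgminus} kills the vertical edges; now the pole at $z'=x$ sits on the upper boundary of the rectangle, so the principal-value integral over $\R$ is traversed in the negative sense, which reverses the overall sign and produces $-\ii\,g_-(x)$.

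I expect the main obstacle to be bookkeeping rather than anything deep: keeping the orientation of the indentation (and hence the sign of the half-residue) consistent between the two cases, and making the vanishing of the vertical edges genuinely rigorous, since that step requires combining the decay hypothesis \eqref{limgplus}, which is stated for $g_+$ itself, with the precise $\coth\to\pm1$ asymptotics rather than simply assuming that $g_+$ decays.
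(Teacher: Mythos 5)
Your proposal is correct and follows essentially the same route as the paper's proof: the substitution $z'=x'+\ii\delta$, the periodicity identity $\tanh(\zeta\mp\ii\pi/2)=\coth\zeta$, the vanishing of the vertical edges via \eqref{limgplus} combined with the uniform $\coth\to\pm1$ asymptotics, and the half-residue $\frac{2\delta}{\pi}g_\pm(x)$ at $z'=x$. The only cosmetic difference is that you package the contour deformation as Cauchy's theorem on an indented rectangle, whereas the paper deforms the shifted contour directly and invokes the Plemelj formula — these are the same computation, and your explicit flagging of the growth control needed on the vertical edges matches the paper's standing regularity assumptions.
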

\begin{proof}
Suppose $g_+(z)$ is a function which is analytic in $0 < \im z < \delta$, continuous in $0 \leq \im z \leq \delta$, and which satisfies \eqref{limgplus}. Using the definition of $\tilde{T}$ and then changing variables to $z' = x' +\ii\delta$, we find
\begin{align*}
(\tilde{T}[g_+(\cdot + \ii\delta)])(x) 
& = \frac{1}{2\delta} \int_\R \tanh\bigg(\frac{\pi(x'-x)}{2\delta}\bigg) g_+(x' + \ii\delta)\,\mathrm{d}x'
  	\\
& = \frac{1}{2\delta} \int_{\R + \ii\delta} \tanh\bigg(\frac{\pi(z'-x)}{2\delta} + \frac{\ii\pi }{2}\bigg) g_+(z')\,\mathrm{d}z'.
\end{align*}
We next use the identity $\tanh(z + \ii\pi/2) = \coth(z)$ and deform the contour down towards the real axis. Utilizing the Plemelj formula to evaluate the contribution from the simple pole at $z' = x$, we obtain
\begin{align}\nonumber
(\tilde{T}[g_+(\cdot + \ii\delta)])(x) 
= &\; \frac{1}{2\delta} \pvint_{\R} \tanh\bigg(\frac{\pi(z'-x)}{2\delta} + \frac{\ii\pi }{2}\bigg) g_+(z')\, \mathrm{d}z'
	\\ \label{tildeTgpluscdot}
& - \frac{\pi \ii}{2\delta}\, \underset{z'=x}\res \coth\bigg(\frac{\pi(z'-x)}{2\delta}\bigg) g_+(z')
+ E(x),
\end{align}
where $E(x)$ is defined by
$$E(x)\coloneqq \frac{1}{2\delta}  \lim_{R \to \infty} 
\bigg(\int_{-R+ \ii\delta}^{-R} + \int_{R}^{R+ \ii\delta} \bigg)\coth\bigg(\frac{\pi(z'-x)}{2\delta}\bigg) g_+(z') \,\mathrm{d}z'.$$
As $R \to \infty$, the function $\coth(\frac{\pi(z'-x)}{2\delta})$ tends to $1$ (resp. $-1$) uniformly for $z' \in [R, R+\ii\delta]$ (resp. $z' \in [-R, -R+\ii\delta]$). Hence, thanks to the assumption \eqref{limgplus}, we have $E = 0$.
Thus, using that 
$$\underset{z' = x}{\res} \coth\bigg(\frac{\pi(z'-x)}{2\delta}\bigg) = \frac{2\delta}{\pi},$$
equation \eqref{tildeTgpluscdot} reduces to 
\begin{align}\nonumber
(\tilde{T}[g_+(\cdot + \ii\delta)])(x) = (T g_+)(x) - \ii g_+(x'),
\end{align}
which is \eqref{TTtildegplus}. The proof of \eqref{TTtildegminus} is similar.
\end{proof}

\begin{lemma}\label{lemma2}
The functions $u$ and $v$ obey the identities
$$  \begin{cases}
  Tu + \tilde{T}v = \ii(u_+ - u_-), \\
  Tv + \tilde{T}u = -\ii(v_+ - v_-), 
\end{cases} \qquad x,t \in \R.$$
\end{lemma}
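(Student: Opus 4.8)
The plan is to reduce both identities to the two transform relations established in Lemma~\ref{lemma1}, applied separately to the analytic pieces $u_\pm$ and $v_\pm$ defined in \eqref{uvplusminusdef}. Recall that $u=u_++u_-$ and $v=v_++v_-$, that $u_+$ and $v_-$ are analytic in $0<\im z<\delta$ while $u_-$ and $v_+$ are analytic in $-\delta<\im z<0$, and that the relations \eqref{uvplusminusrelations} express $v_+$ as a shift of $u_+$ and $v_-$ as a shift of $u_-$ (and, inverting these, $u_\pm$ as shifts of $v_\mp$). These shift relations are exactly what convert a $\tilde{T}$ acting on one family into a $\tilde{T}$ acting on a vertically shifted member of the other family, which is the form appearing in Lemma~\ref{lemma1}.

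For the first identity I would use linearity to write $Tu+\tilde{T}v=(Tu_++Tu_-)+(\tilde{T}v_++\tilde{T}v_-)$, and then substitute $v_+(\cdot)=-u_+(\cdot+\ii\delta)$ and $v_-(\cdot)=-u_-(\cdot-\ii\delta)$ from \eqref{uvplusminusrelations} to replace $\tilde{T}v_+=-\tilde{T}[u_+(\cdot+\ii\delta)]$ and $\tilde{T}v_-=-\tilde{T}[u_-(\cdot-\ii\delta)]$. Regrouping gives $\big(Tu_+-\tilde{T}[u_+(\cdot+\ii\delta)]\big)+\big(Tu_--\tilde{T}[u_-(\cdot-\ii\delta)]\big)$, and the two brackets are precisely the left-hand sides of \eqref{TTtildegplus} and \eqref{TTtildegminus} with $g_+=u_+$ and $g_-=u_-$; Lemma~\ref{lemma1} then yields $\ii u_+-\ii u_-=\ii(u_+-u_-)$. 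The second identity is entirely parallel: I would invert \eqref{uvplusminusrelations} to write $u_+(\cdot)=-v_+(\cdot-\ii\delta)$ and $u_-(\cdot)=-v_-(\cdot+\ii\delta)$, expand $Tv+\tilde{T}u$ the same way, and apply Lemma~\ref{lemma1} with $v_+$ now playing the role of $g_-$ (it is analytic in the lower strip) and $v_-$ the role of $g_+$, obtaining $-\ii v_++\ii v_-=-\ii(v_+-v_-)$.

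Before invoking Lemma~\ref{lemma1} I must check its hypotheses for each of the four pieces. Analyticity and continuity on the relevant closed strips are immediate from the assumed analyticity and continuity of $\log F$ and $\log G$. The nontrivial point, and the step I expect to be the main obstacle, is verifying the vanishing of the vertical-segment integrals \eqref{limgplus} and \eqref{limgminus}. Here the precise form of the asymptotics \eqref{FGlimits} is essential: differentiating $\log F,\log G=f_0,g_0\pm f_1 x+\mathrm{O}(x^{-1})$ gives $\partial_z\log F,\partial_z\log G\to\pm f_1$ as $\re z\to\pm\infty$, with correction $\mathrm{O}(x^{-2})$, so each of $u_\pm,v_\pm$ tends to \emph{opposite} constants at $+\infty$ and $-\infty$ (e.g.\ $u_+\to\pm\ii f_1$). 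Consequently the integral over the right segment $[R,R+\ii\delta]$ and the integral over the left segment $[-R,-R+\ii\delta]$ approach values that are negatives of one another, so their sum vanishes as $R\to\infty$, while the $\mathrm{O}(x^{-2})$ corrections integrate to $\mathrm{O}(R^{-2})$ over segments of fixed length $\delta$. This establishes conditions \eqref{limgplus} and \eqref{limgminus} for every piece, so Lemma~\ref{lemma1} applies and the two displayed identities follow.
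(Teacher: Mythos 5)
Your main line of argument coincides exactly with the paper's proof of Lemma \ref{lemma2}: the same splitting $u=u_++u_-$, $v=v_++v_-$, the same use of \eqref{uvplusminusrelations} (and its inversion, for the second identity) to turn $\tilde{T}$ acting on one family into $\tilde{T}$ of a vertically shifted member of the other, and the same application of Lemma \ref{lemma1} with the same role assignments ($u_+$ and $v_-$ as $g_+$; $u_-$ and $v_+$ as $g_-$).

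The one place you deviate is the verification of \eqref{limgplus} and \eqref{limgminus}, and there your argument has a genuine gap. From \eqref{FGlimits} you cannot conclude $\partial_z\log F=\pm f_1+\mathrm{O}(x^{-2})$: an asymptotic relation cannot in general be differentiated, and Cauchy estimates do not repair this here because the vertical segments span the full closed strip $-\delta/2\leq\im z\leq\delta/2$, on whose boundary $\log F$ is only assumed continuous, so there is no room for a Cauchy disk at points near $\im z=\pm\delta/2$. The fix is simple and makes your left--right cancellation unnecessary: since $u_+(z)=\ii\partial_z\log F(z-\ii\delta/2)$ is an exact derivative, the fundamental theorem of calculus gives
\begin{align*}
\int_R^{R+\ii\delta} u_+(z)\,\mathrm{d}z=\ii\big(\log F(R+\ii\delta/2)-\log F(R-\ii\delta/2)\big),
\end{align*}
and by \eqref{FGlimits} the terms $f_0+f_1R$ cancel within this single difference, leaving $\mathrm{O}(R^{-1})$; the same computation applies on $[-R,-R+\ii\delta]$ and to the other three pieces $u_-,v_\pm$ using $\log G$ or $\log F$ as appropriate. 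Thus each vertical-segment integral vanishes individually as $R\to\infty$ — stronger than the cancellation between segments that you invoked — and \eqref{limgplus}, \eqref{limgminus} follow directly from \eqref{FGlimits} without differentiating anything. With that repair, your proof is the paper's proof.
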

\begin{proof}
By \eqref{uvplusminusrelations}, we have
\begin{align}\nonumber
Tu + \tilde{T}v = &\; T(u_+ + u_-) + \tilde{T}(v_+ + v_-) \\
= &\; Tu_+ - \tilde{T}[u_+(\cdot + \ii\delta)] + Tu_- - \tilde{T}[u_-(\cdot - \ii\delta)]. \label{TutildeTv}
\end{align}
We see from \eqref{uvplusminusdef} that $u_+$ is analytic for $0 < \im z < \delta$, continuous for $0 \leq \im z \leq \delta$, and satisfies \eqref{limgplus} by virtue of \eqref{FGlimits}. Similarly, $u_-$ is analytic for $-\delta < \im z < 0$, continuous for $-\delta \leq \im z \leq 0$, and satisfies \eqref{limgminus}. Hence, the identity $Tu + \tilde{T}v = \ii(u_+ - u_-)$ follows from \eqref{TutildeTv} and Lemma \ref{lemma1}. 

The proof of the identity $Tv + \tilde{T}u = -\ii(v_+ - v_-)$ is similar. Indeed, by \eqref{uvplusminusrelations}, we have
\begin{align*}
Tv + \tilde{T}u = T(v_+ + v_-) + \tilde{T}(u_+ + u_-)
= Tv_+ - \tilde{T}[v_+(\cdot - \ii\delta)] + Tv_- - \tilde{T}[v_-(\cdot + \ii\delta)].
\end{align*}
We see from \eqref{uvplusminusdef} that $v_-$ is analytic for $0 < \im z < \delta$, continuous for $0 \leq \im z \leq \delta$, and satisfies \eqref{limgplus} by virtue of \eqref{FGlimits}. Similarly, $v_+$ is analytic for $-\delta < \im z < 0$, continuous for $-\delta \leq \im z \leq 0$, and satisfies \eqref{limgminus}.
Hence, the identity $Tv + \tilde{T}u = -\ii(v_+ - v_-)$ follows from Lemma \ref{lemma1}.
\end{proof}

We are now in a position to prove Theorem \ref{bilinearth}A.

\begin{proof}[Proof of Theorem \ref{bilinearth}A]
According to Lemma \ref{lemma2}, the non-chiral ILW equation \eqref{2ilw} can be written as
\begin{align}\label{2ILW2}
\begin{cases}
u_t  + 2uu_x + \ii(u_+ - u_-)_{xx} =  0,
	\\
v_t  - 2vv_x + \ii(v_+ - v_-)_{xx} =  0.
\end{cases}
\end{align}
Since $u_t =\ii\big(\log \frac{F^-}{{G}^+}\big)_{xt}$ and $v_t = \ii\big(\log \frac{G^-}{{F}^+}\big)_{xt}$, integration of \eqref{2ILW2} with respect to $x$ gives
\begin{align*}
\begin{cases}
\ii\big(\log \frac{F^-}{G^+}\big)_{t}  + u^2 + \ii(u_+ - u_-)_{x} = \lambda_1(t),
	\\
\ii\big(\log \frac{G^-}{{F}^+} \big)_{t}  - v^2 + \ii(v_+ - v_-)_{x} =-\lambda_2(t),
\end{cases}
\end{align*}
where $\lambda_1(t)$ and $\lambda_2(t)$ are complex-valued functions. By considering the limit $x \to \infty$ and using \eqref{FGlimits} and \eqref{uvplusminusdef}, we conclude that $\lambda_1(t) = \lambda_2(t) = 0$. 
Rewriting the system in terms of $F$ and $G$, we obtain
\begin{align*}
\begin{cases}
\ii\big(\frac{F^-_t}{F^-} - \frac{{G}^+_t}{{G}^+}\big) - \big(\frac{F^-_x}{F^-} - \frac{{G}^+_x}{{G}^+}\big) ^2 - \big(\frac{F^-_x}{F^-} + \frac{{G}^+_x}{{G}^+}\big)_{x} = 0,
	\\
-\ii\big(\frac{{F}^+_t}{{F}^+} - \frac{G^-_t}{F^-} \big) 
+ \big(\frac{G^-_x}{G^-}- \frac{{F}^+_x}{{F}^+}\big)^2 
+ \big(\frac{{F}^+_x}{{F}^+}+ \frac{G^-_x}{G^-}\big)_{x} = 0.
\end{cases}
\end{align*}
Simplification shows that the first equation can be rewritten as
\begin{align*}
\ii\bigg(\frac{F^-_t}{F^-} - \frac{{G}^+_t}{{G}^+}\bigg) - \frac{F^-_{xx}}{F^-} + \frac{2F^-_x{G}^+_x}{F^-{G}^+} - \frac{{G}^+_{xx}}{{G}^+} = 0,
\end{align*}
i.e.,
\begin{subequations}\label{2ILW3}
\begin{align}\label{2ILW3a}
\frac{(\ii D_t - D_x^2)F^- \cdot {G}^+}{F^-{G}^+} = 0.
\end{align}
In the same way, the second equation can be written as
\begin{align}\label{2ILW3b}
\frac{(-\ii D_t + D_x^2){F}^+\cdot G^-}{{F}^+G^-} = 0.
\end{align}
\end{subequations}
Multiplying \eqref{2ILW3a} and \eqref{2ILW3b} by $F^-{G}^+$ and ${F}^+G^-$, respectively, we conclude that \eqref{2ilw} is equivalent to the bilinear system \eqref{hirota_form}.
This completes the proof. 
\end{proof}

\subsection{Proof of Theorem \ref{bilinearth}B}
We decompose the solution $u,v$ of \eqref{2ilw} as $u=u_++u_-$, $v=v_++v_-$, with $u_{\pm}$, $v_{\pm}$ as in 
\eqref{uvprojections}. We view \eqref{FG_to_uv} as a pair of differential-difference equations for $F$, $G$ and seek solutions satisfying
\begin{equation}\label{implicit_ansatz}
\begin{split}
&\ii \partial_z \log F(x-\ii\delta/2,t)=u_+(x,t),\qquad \ii\partial_z \log F(x+\ii\delta/2,t)=-v_+(x,t), \\
&\ii \partial_z \log G(x-\ii\delta/2,t)=v_-(x,t),\qquad \ii\partial_z \log G(x+\ii\delta/2,t)=-u_-(x,t),
\end{split}
\end{equation}
which implies
\begin{equation}\label{RHsystem}
\begin{split}
\begin{cases}\ii \partial_z \log F(x-\ii\delta/2,t)- \ii\partial_z\log F(x-\ii\delta/2,t)=u_{+}(x,t)+v_{+}(x,t), \\
\ii \partial_z  \log G(x-\ii\delta/2,t)- \ii\partial_z\log G(x+\ii\delta/2,t)=v_{-}(x,t)+u_{-}(x,t).
\end{cases}
\end{split}
\end{equation}

We note that, using the hyperbolic identity for $\csch\, z\coloneqq1/\sinh z$
\begin{equation*}
\csch \,2z=\frac12(\coth z-\tanh z),
\end{equation*}
the functions $u_{\pm}+v_{\pm}$ may be written as 
\begin{align*}
u_{\pm}+v_{\pm}=\frac12(u+v)\mp\frac{\ii}{2} \pvint_{\R} \csch\bigg(\frac{\pi(x'-x)}{\delta}\bigg)(u(x')-v(x'))\,\mathrm{d}x'.
\end{align*}
As $x\to\pm\infty$, we have $\csch\, x\to \pm 2e^{-|x|}+\mathrm{O}(e^{-3|x|})$ uniformly, so the second term decays rapidly. Hence the existence of $T(u_{\pm}+v_{\pm})$ and $\tilde{T}(u_{\pm}+v_{\pm})$ follows from the existence of $Tu$, $Tv$, $\tilde{T}u$, $\tilde{T}v$.

Let $\tilde{C}$ denote the cylinder $\C/\ii\delta \Z$, $\tilde{\pi}$ the natural projection $\C\to\tilde{C}$, and $\tilde{C}_0$ the image of $\im \,z=0$ under $\tilde{\pi}$. Then \eqref{RHsystem} defines a pair of RH problems for the functions $\partial_z\log F$ and $\partial_z\log G$ on $\tilde{C}$. The following lemma can be proved similarly to Lemma \ref{RHlemma}.

\begin{lemma}[RH problem on $\tilde{C}$ with a jump across $\tilde{C}_0$]\label{RHlemma2}
Let $J: \tilde{C}_0 \to \C$ be a continuous function such that
$$\int_\R J(x)\, \mathrm{d}x= \bar{J}.$$ 
Then the scalar RH problem
\begin{itemize}
\item $A:\tilde{C} \setminus C_0$ is analytic.
\item Across $\tilde{C}_0$, $A$ satisfies the jump condition
$$A^+(x) - A^-(x) = J(x), \qquad x \in \tilde{C}_0\cong \R$$
\item $A(z) = \mp \bar{J}+\mathrm{O}(z^{-1}) $ as $x\to \pm\infty$, $z=x+\ii y \in \tilde{C}$
\end{itemize}
has the unique solution
\begin{align}\label{rh_soln2}
A(x) = \frac{1}{2\delta \ii} \int_{\R} \coth\bigg(\frac{\pi(x'-z)}{\delta}\bigg)J(x') \, \mathrm{d}x', \qquad x \in \tilde{C} \setminus \tilde{C}_0.
\end{align}
Moreover, this solution satisfies
$$A^\pm(x) = 
\frac{(T_{\frac{\delta}{2}}J)(x)}{2\ii} \pm \frac{1}{2} J(x), \qquad  x \in \tilde{C}_0 \cong \R.
$$
where
\begin{equation}
(T_{\frac{\delta}{2}}f)(x)\coloneqq\frac1{\delta}\pvint_\R \coth\bigg(\frac{\pi(x'-x)}{\delta}\bigg)f(x')\,\mathrm{d}x'.
\end{equation}
\end{lemma}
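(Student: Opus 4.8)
The plan is to follow the proof of Lemma~\ref{RHlemma} verbatim, adapting it to the cylinder $\tilde{C} = \C/\ii\delta\Z$ carrying a single jump curve $\tilde{C}_0$ instead of the two curves $C_0\cup C_\delta$; the halving of the period is precisely what replaces the kernel $\coth(\pi(z'-z)/2\delta)$ by $\coth(\pi(x'-z)/\delta)$ in \eqref{rh_soln2}. First I would dispose of uniqueness: if $A_1,A_2$ both solve the problem, then $D \coloneqq A_1 - A_2$ extends analytically across $\tilde{C}_0$ (the jumps $J$ cancel) and, since both solutions are prescribed to approach the same constants $\mp\bar{J}$ as $\Re z\to\pm\infty$, we have $D = \mathrm{O}(z^{-1})$ at both ends. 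Passing to the conformal coordinate $w = \ee^{2\pi z/\delta}$, which identifies $\tilde{C}$ with $\C\setminus\{0\}$ and sends the ends $\Re z\to\pm\infty$ to $w\to\infty$ and $w\to 0$, the function $D$ becomes analytic on $\C\setminus\{0\}$ and vanishes at both $0$ and $\infty$; removing the two singularities and invoking Liouville's theorem gives $D\equiv 0$.

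For existence I would take $A$ to be the integral \eqref{rh_soln2} and verify the three conditions in turn. That $A$ descends to $\tilde{C}$ is the periodicity check: since $\coth$ has period $\ii\pi$ and the shift $z\mapsto z+\ii\delta$ moves the argument $\pi(x'-z)/\delta$ by $-\ii\pi$, the kernel is $\ii\delta$-periodic in $z$, so $A$ is a well-defined analytic function on $\tilde{C}\setminus\tilde{C}_0$. Next I would compute the boundary values by the Plemelj--Sokhotski formula, exactly as in Lemma~\ref{RHlemma}. The kernel has a simple pole at $z = x'$ with $\Res_{z'=x}\coth(\pi(z'-x)/\delta) = \delta/\pi$, so splitting the limit $z\to x\pm\ii 0$ into its principal-value part and half the residue gives $A^\pm(x) = \tfrac{1}{2\ii}(T_{\frac{\delta}{2}}J)(x) \pm \tfrac{1}{2}J(x)$; here the principal-value term is exactly $\tfrac{1}{2\ii}(T_{\frac{\delta}{2}}J)(x)$ by the definition of $T_{\frac{\delta}{2}}$, and subtracting the two boundary values recovers the jump $A^+ - A^- = J$.

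Finally, for the behaviour at infinity I would use that $\coth(\pi(x'-z)/\delta)\to\mp 1$ uniformly for $x'$ in bounded sets as $\Re z\to\pm\infty$; integrating this limit against $J$ and using $\int_\R J\,\mathrm{d}x = \bar{J}$ produces the prescribed constant limits, with the $\mathrm{O}(z^{-1})$ remainder coming from the decay of $J$ together with the exponential rate at which $\coth$ approaches $\pm1$. I expect the main obstacle to lie precisely in this last step and its coupling to uniqueness: unlike Lemma~\ref{RHlemma}, where the hypotheses $\int J_0 = \int J_1 = 0$ forced $A$ to decay and thereby trivialized both the Liouville argument and the behaviour at $\infty$, here the nonzero total mass $\bar{J}$ makes $A$ tend to \emph{different} nonzero constants at the two ends. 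One must therefore pin these limits down with error terms that are uniform in $\Im z$ across the cylinder, and separately confirm that the difference of two solutions genuinely decays at both ends, so that the cylinder version of Liouville's theorem used for uniqueness really does apply.
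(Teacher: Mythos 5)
Your proposal is correct and takes exactly the route the paper intends: the paper gives no separate proof, stating only that the lemma ``can be proved similarly to Lemma \ref{RHlemma}'', and your adaptation (periodicity of the halved-period kernel under $z\mapsto z+\ii\delta$, Plemelj with residue $\delta/\pi$ yielding $A^\pm = \tfrac{1}{2\ii}T_{\frac{\delta}{2}}J \pm \tfrac12 J$, uniform limits $\coth\to\mp1$ at the two ends, and Liouville on the cylinder via $w=\ee^{2\pi z/\delta}$ for uniqueness) is precisely that argument, with the end-behaviour and uniqueness handled more carefully than in Lemma \ref{RHlemma} because of the nonzero total mass. One pedantic remark: integrating the limiting kernel values against $J$ gives $A(z)\to\mp\bar{J}/(2\ii\delta)$ rather than $\mp\bar{J}$, so the normalization constant in the lemma's third bullet holds only up to this factor --- an imprecision inherited from the paper's own statement, not a flaw in your argument.
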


We are now in a position to prove Theorem \ref{bilinearth}B.

\begin{proof}[Proof of Theorem \ref{bilinearth}B]

Lemma \ref{RHlemma2} shows that the unique solution to the RH problem: 
\begin{itemize}
\item $A:\tilde{C}\setminus\tilde{C}_0$ is analytic, 
\item across $\tilde{C}_0$, $A$ satisfies the jump condition,  
$$A^+(x)-A^-(x)=u_{\pm}(x)+v_{\pm}(x)$$
\item $A(z)=\mp \bar{J}+\mathrm{O}(z^{-1})$ as $x\to\pm\infty$, $z=x+\ii y\in \tilde{C}$, 
\end{itemize}
is given by
\begin{equation}\label{RH2solnA}
A(z)=\frac{1}{2\ii\delta}\int_\R \coth\bigg(\frac{\pi(x'-z)}{\delta}\bigg) (u_{\pm}(x')+v_{\pm}(x'))\,\mathrm{d}x'
\end{equation}
with corresponding boundary values
\begin{equation}\label{RH2solnAbv}
A^{\pm}(x)=\frac{T_{\frac{\delta}{2}}(u_{\pm}+v_{\pm})(x)}{2\ii}\pm\frac12(u_{\pm}(x)+v_{\pm}(x)).
\end{equation}
Hence, using $\coth(z+\ii\pi/2)=\tanh z$ in \eqref{RH2solnA}, we see that \eqref{FG_from_uv} verifies \eqref{RHsystem}. 

The boundary values of \eqref{FG_from_uv} follow from \eqref{RH2solnAbv}:
\begin{equation}
\begin{split}\label{logFGbv}
\begin{dcases}\ii\partial_z \log F(x\pm\ii\delta/2)=\frac{T_{\frac{\delta}{2}}[u_++v_+](x)}{2\ii}\pm \frac{1}{2}(u_+(x)+v_+(x)), \\
\ii\partial_z \log G(x\pm\ii\delta/2)=\frac{T_{\frac{\delta}{2}}[u_-+v_-](x)}{2\ii}\pm \frac{1}{2}(u_-(x)+v_-(x)). 
\end{dcases}
\end{split}
\end{equation}
We next show that \eqref{implicit_ansatz} is satisfied. Using the hyperbolic identity
\begin{equation*}
\coth 2z=\frac12(\coth z+\tanh z),
\end{equation*}
we see that $T_\frac{\delta}{2}$ can be written as $T_{\frac{\delta}{2}}=T+\tilde{T}$. Hence,
\begin{align*}
T_{\frac{\delta}{2}}(u_{\pm}+v_{\pm})=&(T+\tilde{T})(u_{\pm}+v_{\pm})\\
=&\frac12(T+\tilde{T})(u+v)\mp\frac{\ii}{2}(T+\tilde{T})(T-\tilde{T})(u-v).
\end{align*}
Then the identity $(T+\tilde{T})(T-\tilde{T})f=-f$ from Proposition \ref{TpropertiesR}, shows that
\begin{align*}
T_{\frac{\delta}{2}}(u_{\pm}+v_{\pm})=\frac12(T+\tilde{T})(u+v)\pm\frac{\ii}{2}(u-v)=\pm\ii(u_{\pm}-v_{\pm});
\end{align*}
it follows from \eqref{logFGbv} that \eqref{implicit_ansatz} is satisfied. Thus, \eqref{FG_to_uv} holds and Theorem \ref{bilinearth}A shows that \eqref{hirota_form} is satisfied. 
\end{proof}

\begin{remark}
We note that the ansatz
\begin{align}
F(x,t)=\prod_{j=1}^N \sinh \bigg(\frac{\pi}{2\delta}(z-z_j(t))\bigg),\quad G(x,t)=\prod_{j=1}^M \sinh \bigg(\frac{\pi}{2\delta}(z-w_j(t))\bigg),
\end{align}
for \eqref{hirota_form}, together with Theorem \ref{bilinearth}A, leads to an alternative proof of the following result in \cite{berntson2020a}. One can get real-valued solutions $u(x,t)$ and $v(x,t)$ by restricting to $M=N$ and $b_j=\bar a_j$ for $j=1,\ldots,N$ below.

\begin{corollary}[Soliton solutions of the non-chiral ILW equation]\label{solitoncorollary}
For arbitrary non-negative integers $N,M$ and complex parameters $a_j,$ $j=1,\ldots,N$ and $b_j,$ $j=1,\ldots,M,$ satisfying
\begin{equation*}
\im(a_j\pm\ii\delta/2)\neq 2\delta n,\quad \im(b_j\pm\ii\delta/2)\neq 2\delta n,
\end{equation*}
for all integers $n$, the functions
\begin{align*}
\begin{dcases}
u(x,t)=\frac{\ii\pi}{2\delta} \sum_{j=1}^N \coth\bigg(\frac{\pi}{2\delta}(z-z_j(t)-\ii\delta/2)  \bigg)- \frac{\ii\pi}{2\delta} \sum_{j=1}^M \coth\bigg(\frac{\pi}{2\delta}(z-w_j(t)+\ii\delta/2)  \bigg)    \\
v(x,t)=-\frac{\ii\pi}{2\delta} \sum_{j=1}^N \coth\bigg(\frac{\pi}{2\delta}(z-z_j(t)+\ii\delta/2)  \bigg)+\frac{\ii\pi}{2\delta} \sum_{j=1}^M \coth\bigg(\frac{\pi}{2\delta}(z-w_j(t)-\ii\delta/2)  \bigg)   
\end{dcases}
\end{align*}
provide a solution of the non-chiral ILW equation \eqref{2ilw} provided the poles $z_j(t)$ and $w_j(t)$ satisfy
\begin{align*}
\begin{dcases}\ddot{z}_j=-\frac{\pi^2}{\delta^2}\underset{k\neq j}{\sum_{j=1}^N} \csch^2\bigg(\frac{\pi}{\delta}(z_j-z_k)\bigg),\qquad \im(z_j\pm\ii\delta/2)\neq 2\delta n, \\
\ddot{w}_j=-\frac{\pi^2}{\delta^2}\underset{k\neq j}{\sum_{j=1}^M} \csch^2\bigg(\frac{\pi}{2\delta}(w_j-w_k)\bigg),\qquad \im(w_j\pm\ii\delta/2)\neq 2\delta n,
\end{dcases}
\end{align*}
with initial conditions
\begin{align*}
&z_j(0)=a_j,\quad w_j(0)=b_j, \\
&\begin{dcases}\dot{z}_j(0)=\frac{\ii\pi}{\delta}\underset{k\neq j}{\sum_{k=1}^N} \coth\bigg(\frac{\pi}{2\delta}(a_j-a_k)\bigg)-\frac{\ii\pi}{\delta}\sum_{k=1}^M \coth\bigg(\frac{\pi}{2\delta}(a_j-b_k+\ii\delta)\bigg) \\
\dot{w}_j(0)=-\frac{\ii\pi}{\delta}\underset{k\neq j}{\sum_{k=1}^M} \coth\bigg(\frac{\pi}{2\delta}(b_j-b_k)\bigg)+\frac{\ii\pi}{\delta}\sum_{k=1}^M \coth\bigg(\frac{\pi}{2\delta}(b_j-a_k+\ii\delta)\bigg).
\end{dcases}
\end{align*}
\end{corollary}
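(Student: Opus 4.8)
\subsection*{Proof proposal}

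The plan is to feed the Remark's sinh-product ansatz into the bilinear system and reduce everything to a residue computation, invoking Theorem~\ref{bilinearth}A at the end. Set $\beta=\pi/(2\delta)$ and take $F,G$ as in the Remark. Since $\ii\partial_z\log\sinh(\beta(z-a))=\ii\beta\coth(\beta(z-a))$, a direct evaluation of \eqref{FG_to_uv} using the shifts \eqref{shifts} reproduces the stated $\coth$-sum formulas for $u$ and $v$, so it remains to check the hypotheses of Theorem~\ref{bilinearth}A and the bilinear relations. The non-degeneracy conditions $\im(a_j\pm\ii\delta/2)\neq 2\delta n$ and $\im(b_j\pm\ii\delta/2)\neq 2\delta n$ keep the zeros of $F^\pm,G^\pm$ off the closed strip $-\delta/2\le\im z\le\delta/2$ and off $\R$, so $\log F,\log G$ are analytic in the strip, and $\log\sinh(\beta w)\sim\pm\beta w$ as $\re w\to\pm\infty$ gives the linear asymptotics \eqref{FGlimits} (with $f_1=\beta N=\beta M$ in the balanced case $N=M$). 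The key point is that the integration step in the proof of Theorem~\ref{bilinearth}A identifies the left-hand side of the integrated equation with
\[
\Phi_a\coloneqq\frac{(\ii D_t-D_x^2)F^-\cdot G^+}{F^-G^+},\qquad \Phi_b\coloneqq\frac{(\ii D_t-D_x^2)F^+\cdot G^-}{F^+G^-},
\]
so that $u,v$ solve \eqref{2ilw} as soon as $\Phi_a,\Phi_b$ are \emph{independent of $x$}: an $x$-independent term is exactly the integration function $\lambda_i(t)$ and is annihilated by the final $\partial_x$, while for $N=M$ these functions vanish and one recovers \eqref{hirota_form} verbatim.

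To analyze $\Phi_a$ I would expand
\[
\Phi_a=\ii\bigg(\frac{F^-_t}{F^-}-\frac{G^+_t}{G^+}\bigg)-\frac{F^-_{xx}}{F^-}+2\frac{F^-_xG^+_x}{F^-G^+}-\frac{G^+_{xx}}{G^+},
\]
a $2\ii\delta$-periodic meromorphic function whose only possible singularities lie at the zeros $z_k+\ii\delta/2$ of $F^-$ and $w_j-\ii\delta/2$ of $G^+$. Since $\coth(\beta\,\cdot)$ has simple poles of residue $\beta^{-1}$, the logarithmic derivative $F^-_x/F^-$ has residue $1$ at each zero of $F^-$, and a short Laurent expansion shows the double poles of $F^-_{xx}/F^-$ cancel, leaving only simple poles. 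Setting the residue at $z_k+\ii\delta/2$ to zero gives precisely
\[
\dot z_k=\frac{\ii\pi}{\delta}\underset{j\neq k}{\sum_{j=1}^N}\coth\big(\beta(z_k-z_j)\big)-\frac{\ii\pi}{\delta}\sum_{j=1}^M\coth\big(\beta(z_k-w_j+\ii\delta)\big),
\]
and the residue at $w_j-\ii\delta/2$ gives the companion relation for $\dot w_j$. The analysis of $\Phi_b$ produces the same relations with $z_k-w_j-\ii\delta$ in place of $z_k-w_j+\ii\delta$; these agree because $\beta\ii\delta=\ii\pi/2$ and $\coth(\zeta\pm\ii\pi/2)=\tanh\zeta$, so the two bilinear equations impose consistent first-order constraints. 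Evaluated at $t=0$, they are exactly the prescribed initial velocities $\dot z_j(0),\dot w_j(0)$.

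It then remains to show these first-order relations persist along the prescribed second-order flow and to run Liouville's theorem. For persistence, let $V_k,W_j$ denote the right-hand sides above; differentiating $\dot z_k=V_k$ in $t$ and substituting $\dot z_j=V_j,\ \dot w_j=W_j$ on the right should collapse, via $\tfrac{d}{d\zeta}\coth\zeta=-\csch^2\zeta$ and the cotangent addition formula, to the stated Calogero--Moser force law $\ddot z_k=-\tfrac{\pi^2}{\delta^2}\sum_{j\neq k}\csch^2(\cdots)$, with the $w$-dependent cross terms cancelling. Granting this, the first-order system and the prescribed second-order system share the full initial data, so uniqueness for the second-order ODE forces the first-order relations to hold for all $t$; hence all residues vanish and $\Phi_a,\Phi_b$ become holomorphic, bounded, and $2\ii\delta$-periodic, therefore constant in $x$ by Liouville's theorem. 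Comparing the limits $\re z\to\pm\infty$ pins this constant at $-\beta^2(N-M)^2$ (and yields the by-product $\sum_k\dot z_k=\sum_j\dot w_j$); being $x$-independent it is harmless, and it vanishes precisely when $N=M$. By the integration argument of Theorem~\ref{bilinearth}A, $u$ and $v$ therefore solve \eqref{2ilw}.

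The main obstacle is the persistence step: verifying that differentiating the first-order velocity relation and resubstituting reproduces exactly the second-order Calogero--Moser dynamics, with the two families $\{z_k\}$ and $\{w_j\}$ decoupling in the force. This functional-equation computation is the technical heart of the pole-dynamics correspondence; it hinges on $\coth(\zeta\pm\ii\pi/2)=\tanh\zeta$ together with the derivative and addition formulas for $\coth$, and the cancellation of the cross terms is what produces the two independent $A$-type systems recorded in the statement.
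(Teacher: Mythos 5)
Your proposal is correct and takes essentially the same approach as the paper, whose entire ``proof'' of Corollary \ref{solitoncorollary} is the surrounding remark that the $\sinh$-product ansatz inserted into the bilinear system \eqref{hirota_form}, combined with Theorem \ref{bilinearth}A, yields the result: your residue computation producing the first-order pole equations (matching the stated initial velocities), the consistency check via $\coth(\zeta\pm\ii\pi/2)=\tanh\zeta$, the ODE-uniqueness persistence argument, and the Liouville step are exactly the expected fleshing-out of that remark. The one computation you leave with ``granting this''---that differentiating the first-order pole equations collapses, with cross terms cancelling, to the decoupled Calogero--Moser forces---is precisely what the paper itself outsources to \cite{berntson2020a}, and your handling of $N\neq M$, where \eqref{FGlimits} fails and the bilinear equations hold only up to the $x$-independent constant $-\beta^2(N-M)^2$ that is annihilated by the final $\partial_x$, is if anything more scrupulous than the paper's treatment.
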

\end{remark}

\section{B\"{a}cklund transformation}\label{backlundsec}
Our derivation of a B\"{a}cklund transformation for \eqref{2ilw} is inspired by the analogous derivation for the Benjamin-Ono equation presented in \cite[Chapter 3]{matsuno1984}.

Suppose $(u,v)$ and $(\tilde{u},\tilde{v})$ are two solutions of \eqref{2ilw} with associated Hirota bilinear forms \eqref{hirota_form} and
\begin{subequations}\label{hirota_F2G2}
\begin{align}
&(\ii D_t-D_x^2)\tilde{F}^-\cdot \tilde{G}^+=0,\label{hirota_F2G2_a} \\
&(\ii D_t-D_x^2)\tilde{F}^+\cdot \tilde{G}^-=0,\label{hirota_F2G2_b}
\end{align}
\end{subequations}
respectively, where
\begin{align*}
&u=\ii\partial_x\log \frac{F^+}{G^-},\qquad v=\ii\partial_x\log \frac{G^-}{F^+}, \\
&\tilde{u}=\ii\partial_x\log \frac{\tilde{F}^+}{\tilde{G}^-},\qquad \tilde{v}=\ii\partial_x\log \frac{\tilde{G}^-}{\tilde{F}^+}.
\end{align*}
Then, in terms of the variables $F$, $G$, $\tilde{F}$, $\tilde{G}$, the B\"{a}cklund transformation of \eqref{2ilw} is given by 
\begin{subequations}
\label{backlund_hirota}
\begin{align}
&(\ii D_t-2\ii\alpha D_x-D_x^2-\gamma)F^-\cdot \tilde{F}^-=0, \label{backlund_hirota_a}\\
&(\ii D_t-2\ii\alpha D_x-D_x^2-\gamma)G^+\cdot \tilde{G}^+=0, \label{backlund_hirota_b}\\
&(D_x+\ii\alpha)G^+\cdot \tilde{F}^-=\ii\beta F^-\cdot \tilde{G}^+,  \label{backlund_hirota_c}\\
&(\ii D_t-2\ii\alpha D_x-D_x^2-\gamma)F^+\cdot\tilde{F}^+=0,      \label{backlund_hirota_d} \\
&(\ii D_t-2\ii\alpha D_x-D_x^2-\gamma)G^-\cdot\tilde{G}^-=0, \label{backlund_hirota_e}\\
&(D_x+\ii\alpha)G^-\cdot\tilde{F}^+=\ii\beta F^+\cdot \tilde{G}^-,\label{backlund_hirota_f}
\end{align}
\end{subequations}
where $\alpha,\beta,\gamma\in\C$ are arbitrary constants. 

\begin{proposition}[B\"{a}cklund transformation in terms of bilinear variables]\label{backlundbilinear}
Suppose $(F,G)$ and $(\tilde{F},\tilde{G})$ satisfy the relations in \eqref{backlund_hirota}. Then $(F,G)$ is a solution of \eqref{hirota_form} if and only $(\tilde{F},\tilde{G})$ is a solution of \eqref{hirota_F2G2}. 
\end{proposition}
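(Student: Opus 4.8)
The plan is to prove the sharper statement that, whenever the Bäcklund relations hold, the two "potential" Hirota expressions coincide, i.e.
\[
\frac{(\ii D_t - D_x^2)\tilde{F}^-\cdot\tilde{G}^+}{\tilde{F}^-\tilde{G}^+}
= \frac{(\ii D_t - D_x^2)F^-\cdot G^+}{F^- G^+},
\]
and separately the analogous identity with $(F^-,G^+)\to(F^+,G^-)$ and $(\tilde{F}^-,\tilde{G}^+)\to(\tilde{F}^+,\tilde{G}^-)$. Clearing the (nonvanishing) denominator $F^-G^+$, the first identity is equivalent to showing that
\[
\mathcal{I} \coloneqq [(\ii D_t - D_x^2)\tilde{F}^-\cdot\tilde{G}^+]\,F^- G^+ - [(\ii D_t - D_x^2)F^-\cdot G^+]\,\tilde{F}^-\tilde{G}^+ = 0
\]
follows from \eqref{backlund_hirota_a}, \eqref{backlund_hirota_b}, \eqref{backlund_hirota_c} alone. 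Once $\mathcal{I}=0$ is established, the quantity $(\ii D_t-D_x^2)\tilde{F}^-\cdot\tilde{G}^+$ vanishes if and only if $(\ii D_t-D_x^2)F^-\cdot G^+$ does, i.e.\ \eqref{hirota_F2G2_a} $\Leftrightarrow$ \eqref{hirota_form_a}, and the $+/-$ swapped computation gives \eqref{hirota_F2G2_b} $\Leftrightarrow$ \eqref{hirota_form_b}; together these are exactly the asserted equivalence.

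The main tool will be the standard Hirota exchange identities, the first-order one being
\[
(D_x a\cdot b)\,cd - ab\,(D_x c\cdot d) = (D_x a\cdot c)\,bd - (D_x b\cdot d)\,ac,
\]
valid verbatim with $D_x$ replaced by $D_t$, together with its second-order counterpart, which rewrites $(D_x^2 a\cdot b)cd - ab\,(D_x^2 c\cdot d)$ as a total $x$-derivative of first-order bilinear products. First I would apply the first-order identity (in $t$) to the $\ii D_t$ part of $\mathcal{I}$ with $(a,b,c,d)=(\tilde{F}^-,\tilde{G}^+,F^-,G^+)$; this trades the off-diagonal pairing $D_t\tilde{F}^-\cdot\tilde{G}^+$ for the diagonal pairings $D_t\tilde{F}^-\cdot F^-$ and $D_t\tilde{G}^+\cdot G^+$. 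I would then invoke \eqref{backlund_hirota_a} and \eqref{backlund_hirota_b} (using the antisymmetry of $D_t$) to replace each $\ii D_t$ diagonal pairing by the corresponding $x$-expression $-(2\ii\alpha D_x + D_x^2 + \gamma)(\cdot)$. At this point the two $\gamma$-contributions cancel identically, which is a reassuring consistency check and shows that $\gamma$ plays no role in the reduction.

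The crux, and the step I expect to be the main obstacle, is to combine the surviving $(2\ii\alpha D_x + D_x^2)$ diagonal terms with the $-D_x^2$ part of $\mathcal{I}$ — handled through the second-order exchange identity — and to show that the entire remainder collapses into multiples of the mixed quantity $(D_x + \ii\alpha)G^+\cdot\tilde{F}^- - \ii\beta\,F^-\cdot\tilde{G}^+$ and its first $x$-derivative, each of which vanishes by \eqref{backlund_hirota_c}. The difficulty is purely combinatorial: one must reorganize the second-order products into the mixed bilinears $G^+\cdot\tilde{F}^-$ and $F^-\cdot\tilde{G}^+$ on which \eqref{backlund_hirota_c} acts, and verify that the coefficients of $\alpha$ and $\beta$ conspire to produce exactly (derivatives of) \eqref{backlund_hirota_c} with no residual terms, yielding $\mathcal{I}=0$. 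Finally, the second pair of Hirota equations requires no separate argument: the identity for the $F^+\cdot G^-$ and $\tilde{F}^+\cdot\tilde{G}^-$ expressions is obtained by repeating the computation under the replacements $F^-\leftrightarrow F^+$, $G^+\leftrightarrow G^-$ and using \eqref{backlund_hirota_d}, \eqref{backlund_hirota_e}, \eqref{backlund_hirota_f} in place of \eqref{backlund_hirota_a}, \eqref{backlund_hirota_b}, \eqref{backlund_hirota_c}.
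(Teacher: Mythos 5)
Your plan is correct and follows essentially the same route as the paper: the paper forms the same cross-difference $Q$ (your $\mathcal{I}$ up to sign), applies the exchange identities \eqref{hirota_identity1}--\eqref{hirota_identity2} to trade off-diagonal $D_t$ pairings for the diagonal ones governed by \eqref{backlund_hirota_a}--\eqref{backlund_hirota_b} (with the $\gamma$-terms cancelling, just as you note), and the collapse you anticipate at the crux does occur, yielding $Q = 2D_x\bigl((F^-\tilde{G}^+)\cdot(\ii\alpha+D_x)G^+\cdot\tilde{F}^-\bigr)$, which vanishes upon substituting \eqref{backlund_hirota_c} together with the antisymmetry $D_x a\cdot a=0$. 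The swapped $(+/-)$ case using \eqref{backlund_hirota_d}--\eqref{backlund_hirota_f} is likewise dispatched by symmetry in the paper, exactly as you propose.
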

\begin{proof}
Suppose the relations in \eqref{backlund_hirota} hold and that $(F,G)$ satisfy \eqref{hirota_form}. We will show that $(\tilde{F},\tilde{G})$ satisfies \eqref{hirota_F2G2}. The reverse implication then follows by symmetry. 

Consider the quantity
\begin{align}
Q=&\big((\ii D_t-D_x^2)F^-\cdot G^+\big)\tilde{F}^-\tilde{G}^+-F^-G^+\big((\ii D_t-D_x^2)\tilde{F}^-\cdot\tilde{G}^+\big).  \label{Q_def}
\end{align}
To simplify this, we need a pair of identities \cite[Appendix I]{matsuno1984}:
\begin{subequations}\label{hirota_identities}
\begin{align}\label{hirota_identity1}
&(D_x a\cdot b)cd-ab(D_x c\cdot d)=(D_x a\cdot c)bd-ac(D_x b\cdot d), \\
\label{hirota_identity2}
&(D_x^2a\cdot b)cd-ab(D_x^2c\cdot d)=D_x\big((D_t a\cdot d)\cdot(bc)+(ad)\cdot(D_tb\cdot d)\big).
\end{align}
\end{subequations}
Using \eqref{hirota_identity1}, we have 
\begin{align*}
Q=&(\mathrm{i}D_tF^-\cdot\tilde{F}^-)G^+\tilde{G}^+-F^-\tilde{F}^-(\mathrm{i}D_tG^-\cdot\tilde{G}^-)\\
&-(D_x^2F^-\cdot G^+)\tilde{F}^-\tilde{G}^++F^-G^+(D_x^2\tilde{F}^-\cdot\tilde{G}^+). 
\end{align*}
We now use \eqref{backlund_hirota_a} and \eqref{backlund_hirota_b} to write
\begin{align*}
Q=&2\ii\alpha\big( (D_x F^-\cdot \tilde{F}^-)G^+\tilde{G}^+- F^-\tilde{F}^- (D_xG^+\cdot\tilde{G}^+)     \big) \\
&+\big( (D_x^2 F^-\cdot \tilde{F}^-)G^+\tilde{G}^+- F^-\tilde{F}^- (D_x^2G^+\cdot\tilde{G}^+)      \big) \\
&-\big( (D_x^2 F^-\cdot G^+)\tilde{F}^-\tilde{G}^+- F^-G^+ (D_x^2\tilde{F}^-\cdot\tilde{G}^+)      \big).
\end{align*}
Using both identities in \eqref{hirota_identities}, we find
\begin{align*}
Q=&2\ii\alpha D_x (F^-\tilde{G}^+)\cdot (\tilde{F}^-G^+)\\
&+D_x\big((D_x F^-\cdot \tilde{G}^+)\cdot(G^+\tilde{F}^-)+(F^-\tilde{G}^+)\cdot (D_x  G^+\cdot \tilde{F}^-)  \big) \\
&-D_x\big((D_x F^-\cdot \tilde{G}^+)\cdot(\tilde{F}^-G^+)+(F^-\tilde{G}^+)\cdot (D_x \tilde{F}^-\cdot G^+) \big) \\
=&2D_x\big( (F^-\tilde{G}^+)\cdot(\ii\alpha+D_x)G^+\cdot\tilde{F}^-           \big).
\end{align*}
Finally, using \eqref{backlund_hirota_a}, we see that $Q=0$. Thus, recalling \eqref{Q_def}, we see that \eqref{hirota_form_a} is satisfied if and only if \eqref{hirota_F2G2_a} is satisfied. The proof for \eqref{hirota_form_b} and \eqref{hirota_F2G2_b} is similar. 
\end{proof}

\subsection{B\"{a}cklund transformation in terms of $u$, $v$, $\tilde{u}$, $\tilde{v}$}

To transform \eqref{backlund_hirota} into a form written in the original variables, we introduce potential functions $U$, $V$, $\tilde{U}$, $\tilde{V}$ by
\begin{subequations}\label{UVdef}
\begin{align}
U\coloneqq\ii\log \frac{F^-}{G^+},\qquad V\coloneqq\ii\log\frac{G^-}{F^+}, \\
\tilde{U}\coloneqq\ii\log \frac{\tilde{F}^-}{\tilde{G}^+},\qquad \tilde{V}\coloneqq\ii\log\frac{\tilde{G}^-}{\tilde{F}^+} ,
\end{align}
\end{subequations}
so that
\begin{equation*}
U_x=u, \qquad V_x=v,\qquad \tilde{U}_x=\tilde{u},\qquad \tilde{V}_x=\tilde{v}. 
\end{equation*}

\begin{theorem}[B\"{a}cklund transformation for the non-chiral ILW equation]\label{backlundth}
Suppose the following relations hold:
  \begin{subequations}\label{uvbacklund}
\begin{align}\label{uvbacklunda}
& u = \frac{1 - e^{-W}}{\epsilon} - \ii P_- W_x - \frac{1}{2}\tilde{T}Z_x,
	\\ \label{uvbacklundb}
& W_t = -\frac{2}{\epsilon}(1 - e^{-W})W_x - TW_{xx} - \tilde{T}Z_{xx}
 + W_xTW_x + W_x \tilde{T}Z_x,
	\\ \label{uvbacklundc}
& v = -\frac{1- e^Z}{\epsilon} + \ii P_+Z_x + \frac{1}{2}\tilde{T}W_x,
	\\ \label{uvbacklundd}
& Z_t = -\frac{2}{\epsilon} (1- e^{Z}) Z_x + TZ_{xx} + \tilde{T}W_{xx}
 + Z_xTZ_x + Z_x\tilde{T}W_x,
\end{align}
\end{subequations}
where
\begin{equation}\label{WZ}
W\coloneqq \ii(U - \tilde{U}),\qquad  Z\coloneqq\ii(V - \tilde{V}),
\end{equation}
and 
\begin{equation}\label{Ppm}
P_{\pm}\coloneqq-\frac12(\ii T\pm 1).
\end{equation} 
Then $(u,v)$ satisfy the non-chiral ILW equation \eqref{2ilw} if and only if $(\tilde{u}, \tilde{v})$ do.
\end{theorem}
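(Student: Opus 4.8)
The plan is to reduce everything to the bilinear B\"acklund transformation already established in Proposition \ref{backlundbilinear}, by showing that \eqref{uvbacklund} is precisely the bilinear system \eqref{backlund_hirota} rewritten through the correspondence \eqref{FG_to_uv}. First I would set up the dictionary between logarithmic derivatives of the tau-functions and the fields. Abbreviating $f_\pm=\partial_x\log F^\pm$, $g_\pm=\partial_x\log G^\pm$ and likewise $\tilde f_\pm,\tilde g_\pm$, the definitions \eqref{FG_to_uv} read $u=\ii(f_--g_+)$ and $v=\ii(g_--f_+)$, while the boundary-value identities of Lemma \ref{lemma2} combined with \eqref{uvplusminusdef} give $f_-+g_+=-(Tu+\tilde Tv)$ and $f_++g_-=-(Tv+\tilde Tu)$. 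Solving these pairs expresses $f_-,g_+,f_+,g_-$ (and their tilded analogues) linearly in $u,v$ and the nonlocal quantities $Tu+\tilde Tv$, $Tv+\tilde Tu$; this is the only place where $T$ and $\tilde T$ enter the computation.

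Next I would dispose of the two algebraic relations. Since \eqref{WZ} and \eqref{UVdef} give $e^{-W}=F^-\tilde G^+/(G^+\tilde F^-)$ and $e^{Z}=F^+\tilde G^-/(G^-\tilde F^+)$, dividing \eqref{backlund_hirota_c} by $G^+\tilde F^-$ and \eqref{backlund_hirota_f} by $G^-\tilde F^+$ turns them into $g_+-\tilde f_-+\ii\alpha=\ii\beta e^{-W}$ and $g_--\tilde f_++\ii\alpha=\ii\beta e^{Z}$. Inserting the dictionary, using $u-\tilde u=-\ii W_x$ and $v-\tilde v=-\ii Z_x$ (from \eqref{WZ}) and the definition \eqref{Ppm} of $P_\pm$, these identities collapse exactly to \eqref{uvbacklunda} and \eqref{uvbacklundc} precisely when $\alpha=\beta=-1/\epsilon$. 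Every manipulation is reversible, so \eqref{backlund_hirota_c} and \eqref{backlund_hirota_f} are equivalent to \eqref{uvbacklunda} and \eqref{uvbacklundc}.

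For the evolution relations I would work with differences rather than with the four $D_t$-equations separately. From \eqref{WZ} and \eqref{UVdef} one has $W=\log(G^+/\tilde G^+)-\log(F^-/\tilde F^-)$, so subtracting \eqref{backlund_hirota_a} (divided by $F^-\tilde F^-$) from \eqref{backlund_hirota_b} (divided by $G^+\tilde G^+$) cancels $\gamma$ and yields
\[
\ii W_t=2\ii\alpha W_x+\partial_x^2\log\frac{G^+\tilde G^+}{F^-\tilde F^-}+\Big[\big(\partial_x\log\tfrac{G^+}{\tilde G^+}\big)^2-\big(\partial_x\log\tfrac{F^-}{\tilde F^-}\big)^2\Big].
\]
The difference of squares factors as $W_x\big[(g_+-\tilde g_+)+(f_--\tilde f_-)\big]$, and the dictionary gives the two simplifications $g_++\tilde g_+-f_--\tilde f_-=\ii(u+\tilde u)$ and $(g_+-\tilde g_+)+(f_--\tilde f_-)=\ii TW_x+\ii\tilde TZ_x$. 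Substituting these, dividing by $\ii$, and finally eliminating $u_x$ by differentiating \eqref{uvbacklunda} reproduces \eqref{uvbacklundb}, again with $\alpha=-1/\epsilon$. The mirror computation with \eqref{backlund_hirota_d} and \eqref{backlund_hirota_e} gives \eqref{uvbacklundd}.

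It remains to assemble the biconditional. The steps above show that, under \eqref{FG_to_uv} and the identification $\alpha=\beta=-1/\epsilon$, the system \eqref{uvbacklund} is equivalent to \eqref{backlund_hirota}. Thus, starting from a solution $(u,v)$ of \eqref{2ilw}, Theorem \ref{bilinearth}B produces tau-functions $F,G$ solving \eqref{hirota_form}; the algebraic relations \eqref{backlund_hirota_c} and \eqref{backlund_hirota_f} then serve to define $\tilde F,\tilde G$, Proposition \ref{backlundbilinear} shows that $(\tilde F,\tilde G)$ solves \eqref{hirota_F2G2}, and Theorem \ref{bilinearth}A applied to $(\tilde F,\tilde G)$ gives that $(\tilde u,\tilde v)$ solves \eqref{2ilw}; the converse is symmetric. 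I expect the main difficulty to be bookkeeping rather than conceptual: producing the quadratic terms $W_xTW_x+W_x\tilde TZ_x$ correctly from the $D_x^2$-Hirota contributions, keeping the constants $\alpha,\beta,\gamma,\epsilon$ and the integration constants $f_0,g_0$ of \eqref{FGlimits} mutually consistent (the role of $\gamma$ being a pure gauge freedom that drops out of the difference equations), and verifying that the tau-functions reconstructed from the algebraic relations inherit exactly the analyticity and decay required to invoke Theorem \ref{bilinearth}.
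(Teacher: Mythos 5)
Your proposal is correct and takes essentially the same route as the paper: your ``dictionary'' of logarithmic derivatives is exactly the content of Lemmas \ref{uPlemma} and \ref{vPlemma} (both consequences of Lemma \ref{lemma2}), dividing \eqref{backlund_hirota_c} and \eqref{backlund_hirota_f} by $G^+\tilde{F}^-$ and $G^-\tilde{F}^+$ with $\alpha=\beta=-1/\epsilon$ reproduces the paper's derivation of \eqref{uvbacklunda} and \eqref{uvbacklundc}, and subtracting the divided $t$-parts to cancel $\gamma$ and then eliminating $(u+\tilde{u})_x$ via the $x$-part relation is precisely how the paper obtains \eqref{uvbacklundb} and \eqref{uvbacklundd}. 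Your concluding assembly through Proposition \ref{backlundbilinear} and Theorem \ref{bilinearth} is the same (in fact slightly more explicit than the paper's), so there is nothing substantive to correct.
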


\subsection{Proof of Theorem \ref{backlundth}}
We show that the equations \eqref{backlund_hirota} take the form \eqref{uvbacklund} when rewritten in terms of $u$, $v$, $\tilde{u}$, $\tilde{v}$.
Let us first rewrite \eqref{backlund_hirota_c}. Dividing \eqref{backlund_hirota_c} by $G^+\tilde{F}^-$ yields
$$\frac{{G}^+_{x}}{G^+} - \frac{\tilde{F}^-_{x}}{\tilde{F}^-} + \ii\alpha = \ii\beta \frac{F^- \tilde{G}^+}{\tilde{F}^- G^+},$$
i.e.,
\begin{align}\label{fgbacklundcrewrite}
u_- + \tilde{u}_+ = -\alpha + \beta e^{-\ii(U - \tilde{U})},
\end{align}
with $u_{\pm},$ $v_{\pm}$ as in \eqref{uvplusminusdef} and where $\tilde{u}_{\pm},$ $\tilde{v}_{\pm}$ are defined analogously. 

\begin{lemma}\label{uPlemma}
The following identities hold:
  \begin{align}
\begin{cases}
    u_+ = P_- u - \frac{\ii}{2}\tilde{T}v, \\
    u_- = -P_- u + \frac{\ii}{2}\tilde{T}v + u,
    \end{cases} \qquad
  \begin{cases}
    \tilde{u}_+ = P_- \tilde{u} - \frac{\ii}{2}\tilde{T}\tilde{v}, \\
    \tilde{u}_- = -P_- \tilde{u} + \frac{\ii}{2}\tilde{T}\tilde{v} + \tilde{u}.
    \end{cases}   
  \end{align}
\end{lemma}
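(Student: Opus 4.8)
The plan is to obtain all four identities by pure linear algebra from two facts already established: the additive decomposition $u = u_+ + u_-$ and the operator identity of Lemma \ref{lemma2}. First I would observe that, by the definitions \eqref{uvplusminusdef}, the functions $u_\pm$ sum to $u$. Indeed, evaluating on the real axis, $u_+ = \ii\partial_x\log F^-$ and $u_- = -\ii\partial_x\log G^+$, so $u_+ + u_- = \ii\partial_x\log(F^-/G^+) = U_x = u$ by \eqref{UVdef}; the analogous statement $\tilde{u} = \tilde{u}_+ + \tilde{u}_-$ holds for the tilde data.

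Next I would recall from Lemma \ref{lemma2} the identity $Tu + \tilde{T}v = \ii(u_+ - u_-)$, which I rewrite as $u_+ - u_- = -\ii(Tu + \tilde{T}v)$. Together with $u_+ + u_- = u$ this is a $2\times 2$ linear system for the pair $(u_+,u_-)$, and adding and subtracting the two relations yields
\begin{equation*}
u_+ = \tfrac12 u - \tfrac{\ii}{2}(Tu + \tilde{T}v), \qquad u_- = \tfrac12 u + \tfrac{\ii}{2}(Tu + \tilde{T}v).
\end{equation*}
Finally I would substitute the definition \eqref{Ppm}, namely $P_- = \tfrac12(1 - \ii T)$, so that $P_- u = \tfrac12 u - \tfrac{\ii}{2}Tu$. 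This converts the first formula into $u_+ = P_- u - \tfrac{\ii}{2}\tilde{T}v$, and, using $u_- = u - u_+$ once more, the second into $u_- = -P_- u + \tfrac{\ii}{2}\tilde{T}v + u$, which are exactly the claimed expressions.

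The identities for $\tilde{u}_\pm$ follow verbatim with $(u,v,F,G)$ replaced by $(\tilde{u},\tilde{v},\tilde{F},\tilde{G})$ throughout: the tilde data satisfy the same analyticity and asymptotic hypotheses \eqref{FGlimits}, so the decomposition $\tilde{u} = \tilde{u}_+ + \tilde{u}_-$ and the tilde version of Lemma \ref{lemma2}, $T\tilde{u} + \tilde{T}\tilde{v} = \ii(\tilde{u}_+ - \tilde{u}_-)$, both hold, and the same linear solve applies. I do not anticipate any genuine obstacle here: all the analytic content—the contour deformations and the cancellation of the boundary terms at infinity—is already packaged into Lemma \ref{lemma1} and Lemma \ref{lemma2}, so what remains is only the elementary rearrangement above together with the bookkeeping needed to match $P_-$ to the operator $\tfrac12(1-\ii T)$.
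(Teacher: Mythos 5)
Your proof is correct and follows essentially the same route as the paper: the paper likewise combines the identity $Tu + \tilde{T}v = \ii(u_+ - u_-)$ from Lemma \ref{lemma2} with the decomposition $u = u_+ + u_-$ to get $Tu + \tilde{T}v + \ii u = 2\ii u_+$, solves for $u_+$, obtains $u_-$ from $u_- = u - u_+$, and treats the tilde variables identically. Your version merely makes the $2\times 2$ linear solve and the matching of $P_- = \tfrac12(1-\ii T)$ explicit, which the paper leaves as ``simplification.''
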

\begin{proof}
By Lemma \ref{lemma2},
$$Tu + \tilde{T}v + \ii u = 2\ii u_+$$
and the expression for $u_+$ follows after simplification. The expression for $u_-$ then follows because $u = u_+ + u_-$.
The expressions for $\tilde{u}_\pm$ follow in the same way. 
\end{proof}

Utilizing Lemma \ref{uPlemma}, equation \eqref{fgbacklundcrewrite} can be rewritten as
\begin{align}\label{Pminusutildeu}
-P_- (u-\tilde{u}) + \frac{\ii}{2}\tilde{T}(v - \tilde{v}) + u = -\alpha+ \beta e^{-\ii(U - \tilde{U})}.
\end{align}
Using \eqref{WZ} and setting
and
\begin{equation}\label{alphabetavalue}
\alpha=-\frac{1}{\epsilon}, \quad \beta =-\frac{1}{\epsilon},
\end{equation}
this yields
\begin{align*}
u = \frac{1 - e^{-W}}{\epsilon} - \ii P_- W_x - \frac{1}{2}\tilde{T}Z_x,
\end{align*}
which is \eqref{uvbacklunda}.

We next rewrite the $t$-parts \eqref{backlund_hirota_a}--\eqref{backlund_hirota_b} of the B\"acklund transformation as
\begin{align*}
\big(\ii \partial_t - 2\ii\alpha\partial_x)\log\frac{F^-}{\tilde{F}^-} - \partial_x^2 \log(F^-\tilde{F}^-) - \bigg(\partial_x\log\frac{F^-}{\tilde{F}^-}\bigg)^2 - \gamma = 0,\\
\big(\ii \partial_t - 2\ii\alpha\partial_x)\log\frac{G^+}{\tilde{G}^+} - \partial_x^2 \log(G^+\tilde{G}^+) - \bigg(\partial_x\log\frac{G^+}{\tilde{G}^+}\bigg)^2 - \gamma= 0.
\end{align*}
Subtracting the second of these equations from the first gives
\begin{equation}
\begin{split}\label{ipartalt2ilambda}
& \big(\ii \partial_t - 2\ii\alpha\partial_x)\bigg(\log\frac{F^-}{G^+} - \log \frac{\tilde{F}^-}{\tilde{G}^+}\bigg)
- \bigg(\log\frac{F^-}{G^+} + \log\frac{\tilde{F}^-}{\tilde{G}^+}\bigg)_{xx}
	\\ 
& - \bigg(\log\frac{F^-}{G^+}- \log\frac{\tilde{F}^-}{\tilde{G}^+}\bigg)_x\big(\log({F^-}{G^+}) - \log({\tilde{F}^-}{\tilde{G}^+})\big)_x = 0. 
\end{split}
\end{equation}
Multiplying by $\ii$ and using the definitions \eqref{uvplusminusdef} and \eqref{UVdef} of $u_\pm$, $\tilde{u}_\pm$ and $U, \tilde{U}$, this becomes
\begin{align*}
& \big(\ii \partial_t - 2\ii\alpha\partial_x)(U - \tilde{U}) - (U + \tilde{U})_{xx}
 +\ii (U - \tilde{U})_x\big(u_+ - u_- - (\tilde{u}_+ - \tilde{u}_-)\big) = 0. 
\end{align*}
Recalling \eqref{WZ} and using Lemma \ref{lemma2}, we find
\begin{align*}
& W_t - 2\alpha W_x - (U + \tilde{U})_{xx}
 - \ii W_x\big(TU + \tilde{T}V - T\tilde{U} - \tilde{T}\tilde{V}\big)_x = 0. 
\end{align*}
Equation \eqref{Pminusutildeu} can be written as
$$\frac{1}{2}(U + \tilde{U})_x = -\alpha + \beta e^{-W} - \frac{1}{2} TW_x - \frac{1}{2} \tilde{T}Z_x.$$
Using this relation to eliminate $(U + \tilde{U})_{xx}$, we arrive at 
\begin{align*}
& W_t - 2\alpha W_x + 2\beta W_x e^{-W} + TW_{xx} + \tilde{T}Z_{xx}
 - W_x\big(TW_x + \tilde{T}Z_x\big) = 0. 
\end{align*}
That is,
\begin{align*}
& W_t = -\frac{2}{\epsilon}(1 - e^{-W})W_x - TW_{xx} - \tilde{T}Z_{xx}
 + W_xTW_x + W_x \tilde{T}Z_x,
\end{align*}
which is \eqref{uvbacklundb}.

We next rewrite the $x$-part \eqref{backlund_hirota_f}. Dividing \eqref{backlund_hirota_f} by $G^-\tilde{F}^+$ yields
$$\frac{{G}^-_{x}}{G^-} - \frac{{F}^+_{2,x}}{\tilde{F}^+} + \ii\alpha = \ii\beta \frac{F^+ \tilde{G}^-}{\tilde{F}^+ G^-},$$
i.e.,
\begin{align}\label{fgbacklundfrewrite}
v_-  + \tilde{v}_+ = \alpha - \beta e^{\ii(V - \tilde{V})}.
\end{align}

\begin{lemma}\label{vPlemma}
The following identities hold:
  \begin{align}
\begin{cases}
    v_+ = -P_+ v + \frac{\ii}{2}\tilde{T}u, \\
    v_- = P_+ v - \frac{\ii}{2}\tilde{T}u + v,
    \end{cases} \qquad
  \begin{cases}
      \tilde{v}_+ = -P_+ \tilde{v} + \frac{\ii}{2}\tilde{T}\tilde{u}, \\
    \tilde{v}_- = P_+ \tilde{v} - \frac{\ii}{2}\tilde{T}\tilde{u} + \tilde{v}.
  \end{cases}   
  \end{align}
\end{lemma}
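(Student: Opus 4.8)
The plan is to mirror the proof of Lemma~\ref{uPlemma} verbatim, the only change being that we invoke the \emph{second} identity of Lemma~\ref{lemma2}, namely $Tv + \tilde{T}u = -\ii(v_+ - v_-)$, in place of the first. First I would use the decomposition $v = v_+ + v_-$ to replace $v_+ - v_-$ by $2v_+ - v$, which turns the identity into
\begin{align*}
2\ii v_+ = \ii v - (Tv + \tilde{T}u).
\end{align*}
Dividing by $2\ii$ gives $v_+ = \tfrac12 v + \tfrac{\ii}{2}(Tv + \tilde{T}u)$.

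Next I would recognize the combination on the right in terms of the operator $P_+$ from \eqref{Ppm}. Since $P_+ = -\tfrac12(\ii T + 1)$, we have $-P_+ v = \tfrac12 v + \tfrac{\ii}{2}Tv$, so that $v_+ = -P_+ v + \tfrac{\ii}{2}\tilde{T}u$, which is the first claimed identity. The expression for $v_-$ then follows immediately from $v_- = v - v_+$. The expressions for $\tilde{v}_\pm$ follow in exactly the same way: since $(\tilde{u},\tilde{v})$ is also a solution of \eqref{2ilw} with the associated bilinear structure, Lemma~\ref{lemma2} applies to it as well and yields $T\tilde{v} + \tilde{T}\tilde{u} = -\ii(\tilde{v}_+ - \tilde{v}_-)$, whence the same manipulation produces the right-hand pair of identities.

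There is essentially no genuine obstacle here; the computation is routine sign-bookkeeping. The one point that merits care is the sign discrepancy between this lemma and Lemma~\ref{uPlemma}: the identities for $u$ feature $P_-$ while those for $v$ feature $-P_+$. I would emphasize that this is not an accident but is forced precisely by the opposite sign appearing in the two identities of Lemma~\ref{lemma2} (the $+\ii(u_+-u_-)$ versus $-\ii(v_+-v_-)$), together with the relation $P_+ - P_- = -1$ coming from \eqref{Ppm}. Tracking this single sign correctly is all that is needed to complete the proof.
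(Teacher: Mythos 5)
Your proposal is correct and follows exactly the paper's argument: apply the second identity of Lemma \ref{lemma2} together with $v = v_+ + v_-$ to solve for $v_+$, recover $v_-$ by subtraction, and repeat verbatim for $(\tilde{u},\tilde{v})$. Your added remark explaining why $-P_+$ replaces $P_-$ (via the sign flip in Lemma \ref{lemma2} and $P_+ - P_- = -1$) is a correct and harmless elaboration beyond the paper's terser "follows after simplification."
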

\begin{proof}
By Lemma \ref{lemma2},
$$Tv + \tilde{T}u - \ii v= -2\ii v_+$$
and the expression for $v_+$ follows after simplification. The expression for $v_-$ then follows because $v = v_+ + v_-$.
The expressions for $\tilde{v}_\pm$ follow in the same way. 
\end{proof}

Utilizing Lemma \ref{vPlemma}, equation \eqref{fgbacklundfrewrite} can be rewritten as
\begin{align}\label{Pminusvtildev}
P_+ (v- \tilde{v}) - \frac{\ii}{2}\tilde{T}(u - \tilde{u}) + v
= \alpha - \beta e^{\ii(V - \tilde{V})}.
\end{align}
With \eqref{WZ} and \eqref{alphabetavalue}
this becomes
\begin{align*}
v = -\frac{1- e^Z}{\epsilon} + \ii P_+Z_x + \frac{1}{2}\tilde{T}W_x,
\end{align*}
which is \eqref{uvbacklundc}.

We next rewrite the $t$-parts \eqref{backlund_hirota_d}--\eqref{backlund_hirota_e} of the B\"acklund transformation.
As before, we find that \eqref{ipartalt2ilambda} holds except that $F,\tilde{F}$ and $G,\tilde{G}$ are now evaluated at $x + \ii\delta/2$ and $x - \ii\delta/2$, respectively, i.e.,
\begin{equation}
\begin{split}
& \big(\ii \partial_t - 2\ii\alpha\partial_x)\bigg(\log\frac{{F^+}}{G^-} - \log \frac{\tilde{F}^+}{\tilde{G}^-}\bigg)
- \bigg(\log\frac{F^+}{G^-} + \log\frac{\tilde{F}^+}{\tilde{G}^-}\bigg)_{xx}
	\\
& - \bigg(\log\frac{F^+}{G^-}- \log\frac{\tilde{F}^+}{\tilde{G}^-}\bigg)_x\big(\log({F^+}{G^-}) - \log({\tilde{F}^+}{\tilde{G}^-})\big)_x = 0. 
\end{split}
\end{equation}
Multiplying by $\ii$ and using the definitions \eqref{uvplusminusdef} and \eqref{UVdef} of $v^\pm$, $\tilde{v}^\pm$ and $V, \tilde{V}$, this becomes
\begin{align}\nonumber
& \big(\ii \partial_t - 2\ii\alpha\partial_x)(-V + \tilde{V}) 
+ (V + \tilde{V})_{xx}
 + \ii (-V + \tilde{V})_x\big(-v_+ + v_- + \tilde{v}_+ - \tilde{v}_-\big) = 0. 
\end{align}
Recalling \eqref{WZ} and using Lemma \ref{lemma2}, we find
\begin{align*}
& -Z_t + 2\alpha Z_x + (V + \tilde{V})_{xx}
 + \ii Z_x\big(TV + \tilde{T}U - T\tilde{V} - \tilde{T}\tilde{U}\big)_x = 0. 
\end{align*}
Equation \eqref{Pminusvtildev} can be written as
$$\frac{1}{2} (V + \tilde{V})_x  
= \alpha - \beta e^{Z} + \frac{1}{2} TZ_x + \frac{1}{2}\tilde{T}W_x.$$
Using this relation to eliminate $(V + \tilde{V})_{xx}$, we arrive at 
\begin{align*}
& -Z_t + 2\alpha Z_x - 2\beta Z_x e^{Z} + TZ_{xx} + \tilde{T}W_{xx}
 + Z_x\big(TZ_x + \tilde{T}W_x\big) = 0. 
\end{align*}
That is,
\begin{align*}
Z_t = -\frac{2}{\epsilon} (1- e^{Z}) Z_x + TZ_{xx} + \tilde{T}W_{xx}
 + Z_xTZ_x + Z_x\tilde{T}W_x,
\end{align*}
which is \eqref{uvbacklundd}.
This completes the  proof of Theorem \ref{backlundth}.

\section{Conservation laws}\label{conservationsec}

In this section, we provide two complementary proofs of the following theorem.
\begin{theorem}[Conservation laws of the non-chiral ILW equation]\label{conservation_laws}
The non-chiral ILW equation \eqref{2ilw} has an infinite number of conservation laws
\begin{equation}
I_n=\int_\R (W_n+Z_n)\,\mathrm{d}x,
\end{equation}
where $W_n$ and $Z_n$ can be computed recursively from the following formal power series in $\epsilon$, 
\begin{subequations}\label{WZimplicit}
\begin{align}
& u = \frac{1 - \exp\bigg(-\sum\limits_{n=1}^\infty W_n\epsilon^n\bigg) }{\epsilon} -\ii P_- \sum_{n=1}^\infty W_{n,x} \epsilon^n - \frac{1}{2}\tilde{T}\sum_{n=1}^\infty Z_{n,x} \epsilon^n,
	\\
& v = -\frac{1- \exp\bigg(\sum\limits_{n=1}^\infty Z_{n} \epsilon^n\bigg)}{\epsilon} +\ii P_+\sum_{n=1}^\infty Z_{n,x} \epsilon^n  + \frac{1}{2}\tilde{T}\sum_{n=1}^\infty W_{n,x} \epsilon^n,
\end{align}
\end{subequations}
with $P_{\pm}$ as in \eqref{Ppm}. The first four conservation laws are 
\begin{subequations}\label{3conservation_laws}
\begin{align}
&I_{1,u}=\int_\R u\,\mathrm{d}x, \quad  I_{1,v}=\int_\R v\,\mathrm{d}x,  \label{conservation_law1}\\
&I_2= \frac12\int_\R (u^2-v^2)\,\mathrm{d}x, \label{conservation_law2}\\
&I_3=\int_\R \bigg( \frac13(u^3+v^3)+\frac12( uTu_x+vTv_x+u\tilde{T}v_x+v\tilde{T}u_x    \bigg)\mathrm{d}x. \label{conservation_law3} \\
&I_4=\int_\R \bigg(\frac{u^4-v^4}{4}+\frac{u_x^2-v_x^2}{8}+\frac38\big( (Tu_x)^2-(Tv_x)^2-(\tilde{T}u_x)^2+(\tilde{T}v_x)^2\big) \label{conservation_law4}\\
&+\frac34\big(u^2 Tu_x-v^2 Tv_x\big)+\frac34\big(u^2\tilde{T}v_x-v^2\tilde{T}u_x\big)\bigg)\,\mathrm{d}x.\nonumber
\end{align}
\end{subequations}
\end{theorem}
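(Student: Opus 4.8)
The plan is to exploit the B\"{a}cklund transformation from Theorem \ref{backlundth} as a generating device for conservation laws, following the classical strategy used for the ILW and BO equations. The key observation is that equations \eqref{uvbacklunda} and \eqref{uvbacklundc} express $u$ and $v$ in terms of $W$, $Z$ and the parameter $\epsilon=-1/\alpha$, and that when we pass to the one-parameter family of B\"{a}cklund-related solutions parametrized by $\epsilon$, the quantities $W$ and $Z$ admit formal power-series expansions $W=\sum_{n\geq 1}W_n\epsilon^n$ and $Z=\sum_{n\geq 1}Z_n\epsilon^n$. Substituting these expansions into the defining relations \eqref{WZimplicit} and matching powers of $\epsilon$ yields a recursive scheme: at each order $n$ one solves algebraically (and via the operators $T,\tilde{T}$) for $W_n$ and $Z_n$ in terms of $u$, $v$, and the previously-determined lower-order quantities. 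First I would expand the exponentials $\exp(-\sum W_n\epsilon^n)$ and $\exp(\sum Z_n\epsilon^n)$ as power series in $\epsilon$, equate the coefficient of each power $\epsilon^n$ on both sides of \eqref{WZimplicit}, and thereby obtain explicit recursion relations determining $(W_n,Z_n)$.

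The heart of the argument is to show that each $I_n=\int_\R(W_n+Z_n)\,\mathrm{d}x$ is conserved. For this I would return to the $t$-part relations \eqref{uvbacklundb} and \eqref{uvbacklundd}, which govern $W_t$ and $Z_t$. The crucial structural feature is that the right-hand sides of these two equations are, up to sign, total $x$-derivatives once combined appropriately. Concretely, I would add \eqref{uvbacklundb} and \eqref{uvbacklundd} and show that $(W+Z)_t$ is a total $x$-derivative of a density built from $W$, $Z$, and their $T,\tilde{T}$-transforms; the terms $-TW_{xx}-\tilde{T}Z_{xx}$ and $TZ_{xx}+\tilde{T}W_{xx}$ integrate against the identities in Proposition \ref{TpropertiesR} (in particular antisymmetry of $T$ and $\tilde{T}$ under $\int_\R$, so that $\int_\R Tf_x\,\mathrm{d}x=0$ and $\int_\R f\,\tilde{T}g\,\mathrm{d}x=-\int_\R (\tilde{T}f)g\,\mathrm{d}x$), while the nonlinear terms $W_xTW_x+W_x\tilde{T}Z_x$ and their counterparts must be shown to integrate to zero or to a boundary term. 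Granting that $\frac{\mathrm{d}}{\mathrm{d}t}\int_\R(W+Z)\,\mathrm{d}x=0$ holds for the full $\epsilon$-dependent quantities, expanding in powers of $\epsilon$ and reading off coefficients gives $\frac{\mathrm{d}}{\mathrm{d}t}I_n=0$ for every $n$, which is the assertion of the theorem.

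The main obstacle I anticipate is verifying that the nonlinear and nonlocal terms genuinely combine into a total derivative at the level of the densities, rather than merely after integration. This requires the operator identities for $T$ and $\tilde{T}$ collected in Proposition \ref{TpropertiesR}---especially the skew-adjointness properties and the key relation $(T+\tilde{T})(T-\tilde{T})=-\mathrm{id}$ used already in the proof of Theorem \ref{bilinearth}B---together with the ``Leibniz-type'' manipulations that let one rewrite products such as $W_x\tilde{T}Z_x$ as an $x$-derivative plus a term that cancels against the companion equation for $Z$. A careful bookkeeping of which transforms act on $u$-quantities versus $v$-quantities is essential, since the antisymmetric coupling (note the sign differences between the $W$- and $Z$-equations) is precisely what makes the \emph{sum} $W+Z$, rather than either alone, yield conserved densities.

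Once conservation of $I_n$ is established in general, computing the explicit low-order conservation laws \eqref{3conservation_laws} is a direct but tedious matter: I would solve the recursion at orders $n=1,2,3,4$, express $W_n$ and $Z_n$ in terms of $u$, $v$ and their $T,\tilde{T}$-transforms, and simplify the resulting integrands using integration by parts and the operator identities to bring them into the stated forms. The order-one computation should reproduce $I_1=\int_\R(u+v)\,\mathrm{d}x$ (equivalently the two separate momenta $I_{1,u}$ and $I_{1,v}$, since $u$ and $v$ are conserved independently at leading order), with the higher orders producing the Hamiltonian-type densities in \eqref{conservation_law2}--\eqref{conservation_law4}.
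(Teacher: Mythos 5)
Your proposal is correct and takes essentially the same approach as the paper's first proof: add the $t$-parts \eqref{uvbacklundb} and \eqref{uvbacklundd}, observe that the $\epsilon^{-1}$-terms and the dispersive terms integrate to zero while the quadratic terms $W_xTW_x+W_x\tilde{T}Z_x+Z_xTZ_x+Z_x\tilde{T}W_x$ vanish directly by the anti-self-adjointness \eqref{anti_self_adjointTTtilde} (so your anticipated obstacle of exhibiting a total-derivative structure at the density level never arises), and then expand \eqref{uvbacklunda} and \eqref{uvbacklundc} formally in powers of $\epsilon$ to obtain the recursion \eqref{WZimplicit} and the explicit $I_1,\dots,I_4$. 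The paper additionally gives an independent second proof via the Lax pair, but your single Bäcklund-based argument suffices for the theorem.
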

The first proof uses the B\"{a}cklund transformation \eqref{uvbacklund} while the second proof uses the Lax pair \eqref{2ilw_laxpair} and illustrates its utility. In both proofs we construct a conservation law with dependence on an auxiliary parameter; an infinite number of conservation laws are obtained by an appropriate expansion in this parameter.
At the end of this section, we verify by direct computation that the three first three quantities in \eqref{3conservation_laws} are conserved. 

\subsection{Proof of Theorem \ref{conservation_laws} using the B\"{a}cklund transformation}\label{conservationslawsBacklundproof}

Adding equations \eqref{uvbacklundb} and \eqref{uvbacklundd}, we find
\begin{align*}
W_t + Z_t = & -\frac{2}{\epsilon}(1 - e^{-W})W_x - TW_{xx} - \tilde{T}Z_{xx}
 + W_xTW_x + W_x \tilde{T}Z_x
 	\\
& -\frac{2}{\epsilon} (1- e^{Z}) Z_x + TZ_{xx} + \tilde{T}W_{xx}
 + Z_xTZ_x + Z_x\tilde{T}W_x.
\end{align*}
Thus,
$$\frac{\mathrm{d}}{\mathrm{d}t} \int_\R (W + Z)\,\mathrm{d}x
= \int_\R \big(W_xTW_x + W_x \tilde{T}Z_x + Z_xTZ_x + Z_x\tilde{T}W_x\big)\,\mathrm{d}x.$$
Using the anti-self-adjointness \eqref{anti_self_adjointTTtilde} of the operators $T$ and $\tilde{T}$, the right-hand side vanishes. 
Hence $\int_\R (W + Z)\,\mathrm{d}x$ is a conserved quantity. 
If we expand $W$ and $Z$ formally in powers of $\epsilon$ as
\begin{align}\label{WZexpansions}
W = \sum_{n=1}^\infty W_n \epsilon^n,  \qquad Z = \sum_{n=1}^\infty Z_n \epsilon^n,
\end{align}
we find that
\begin{align}\label{sumdIndt}
\sum_{n=1}^\infty \frac{\mathrm{d}I_n}{\mathrm{d}t} \epsilon^n = 0,
\end{align}
where
\begin{align}\label{Indef}
I_n = \int_\R (W_n + Z_n)\, \mathrm{d}x.
\end{align}
The identity \eqref{sumdIndt} must hold for arbitrary $\epsilon$, therefore
$$\frac{\mathrm{d}I_n}{\mathrm{d}t} = 0, \qquad n \in \mathbb{N}.$$

The $I_n$ defined in \eqref{Indef} is the $n$th conserved quantity of \eqref{2ilw}. To derive the explicit functional forms of $W_n$ and $Z_n$, we substitute \eqref{WZexpansions} into \eqref{uvbacklunda} and \eqref{uvbacklundc}, which gives \eqref{WZimplicit}
Note that
\begin{align*}
\frac{1 - \exp\bigg(-\sum\limits_{n=1}^\infty W_n\epsilon^n\bigg) }{\epsilon}
= &\; W_1 + \frac{-W_1^2+ 2W_2}{2} \epsilon + \frac{W_1^3 - 6W_1 W_2 + 6W_3}{6}\epsilon^2 
	\\
& + \frac{-W_1^4 + 12W_1^2W_2 - 12 W_2^2 - 24W_1 W_3 + 24 W_4}{24} \epsilon^3 + \mathrm{O}(\epsilon^4),
\end{align*}
and
\begin{align*}
-\frac{1- \exp\bigg(\sum\limits_{n=1}^\infty Z_{n} \epsilon^n\bigg)}{\epsilon}
= &\; Z_1 + \frac{Z_1^2+ 2Z_2}{2} \epsilon + \frac{Z_1^3 + 6Z_1 Z_2 + 6Z_3}{6}\epsilon^2 
	\\
& + \frac{Z_1^4 + 12Z_1^2Z_2 + 12 Z_2^2 + 24 Z_1 Z_3 + 24 Z_4}{24} \epsilon^3 + \mathrm{O}(\epsilon^4).
\end{align*}
The terms of $\mathrm{O}(1)$ yield
\begin{align}\label{W1Z1}
& u = W_1, \qquad v = Z_1,
\end{align}
which gives the conservation law
\begin{equation}
I_1=\int_\R (u+v)\,\mathrm{d}x.
\end{equation}
In fact the quantities $I_{1,u}$ and $I_{2,v},$ defined in \eqref{conservation_law1}, are individually conserved. We verify this by direct computation in Section \ref{directverification} below.

The terms of $\mathrm{O}(\epsilon)$ give
\begin{align*}
& 0 = \frac{-W_1^2+ 2W_2}{2} - \ii P_- W_{1,x} - \frac{1}{2}\tilde{T}Z_{1,x},
	\\
& 0 = \frac{Z_1^2+ 2Z_2}{2} + \ii P_+Z_{1,x} + \frac{1}{2}\tilde{T} W_{1,x},
\end{align*}
that is,
\begin{align}
\begin{split}\label{W2Z2}
& W_2 = \frac{u^2}{2} + \ii P_- u_x + \frac{1}{2}\tilde{T}v_x,
\qquad Z_2 = -\frac{v^2}{2} - \ii P_+ v_x - \frac{1}{2}\tilde{T} u_x,
\end{split}
\end{align}
leading to the expression \eqref{conservation_law2} for $I_2$. 
The terms of $\mathrm{O}(\epsilon^2)$ give
\begin{align*}
& 0 = \frac{W_1^3 - 6W_1 W_2 + 6W_3}{6} - \ii P_- W_{2,x} - \frac{1}{2}\tilde{T}Z_{2,x},
	\\
& 0 = \frac{Z_1^3 + 6Z_1 Z_2 + 6Z_3}{6} + \ii P_+Z_{2,x} + \frac{1}{2}\tilde{T} W_{2,x},
\end{align*}
i.e., 
\begin{equation}
\begin{split}\label{W3Z3}
& W_3 = - \frac{u^3 - 6u W_2}{6} + \ii P_- W_{2,x} + \frac{1}{2}\tilde{T}Z_{2,x},
	\\
& Z_3 = -\frac{v^3 + 6v Z_2}{6} - \ii P_+Z_{2,x} - \frac{1}{2}\tilde{T} W_{2,x}.
\end{split}
\end{equation}
After some simplifications, this gives the expression \eqref{conservation_law3} for $I_3$:
\begin{align*}
I_3 = &\; \int_\R \bigg(- \frac{u^3 - 6u W_2}{6} - \frac{v^3 + 6v Z_2}{6}\bigg) \,\mathrm{d}x
	\\
= &\; \int_\R \bigg(- \frac{u^3 + v^3}{6} + u W_2 - v Z_2\bigg) \,\mathrm{d}x
	\\
= &\; \int_\R \bigg(- \frac{u^3 + v^3}{6} 
+ \frac{u^3}{2} + \ii uP_- u_x + \frac{u}{2}\tilde{T}v_x
+ \frac{v^3}{2} + \ii vP_+ v_x + \frac{v}{2}\tilde{T} u_x\bigg) \,\mathrm{d}x
	\\
= &\; \int \bigg(\frac{1}{3}(u^3 + v^3)  + \frac{1}{2}(u T u_x + v T v_x + u\tilde{T}v_x+ v\tilde{T} u_x)\bigg) \,\mathrm{d}x.
\end{align*}
The terms of $\mathrm{O}(\epsilon^3)$ give
\begin{align*}
& 0 = \frac{-W_1^4 + 12W_1^2W_2 - 12 W_2^2 - 24W_1 W_3 + 24 W_4}{24} - \ii P_- W_{3,x} - \frac{1}{2}\tilde{T}Z_{3,x},
	\\
& 0 = \frac{Z_1^4 + 12Z_1^2Z_2 + 12 Z_2^2 + 24 Z_1 Z_3 + 24 Z_4}{24} + \ii P_+Z_{3,x} + \frac{1}{2}\tilde{T} W_{3,x},
\end{align*}
i.e., 
\begin{equation}
\begin{split}\label{W4Z4}
& W_4 = -\frac{-W_1^4 + 12W_1^2W_2 - 12 W_2^2 - 24W_1 W_3}{24} + \ii P_- W_{3,x} + \frac{1}{2}\tilde{T}Z_{3,x},
	\\
& Z_4 = -\frac{Z_1^4 + 12Z_1^2Z_2 + 12 Z_2^2 + 24 Z_1 Z_3}{24} - \ii P_+Z_{3,x} - \frac{1}{2}\tilde{T} W_{3,x}.
\end{split}
\end{equation}
Then, a lengthy calculation using \eqref{W1Z1}--\eqref{W3Z3} and the identities \eqref{anti_self_adjointTTtilde}--\eqref{Tcommutator} in \eqref{Indef} gives \eqref{conservation_law4}.

\subsection{Proof of Theorem \ref{conservation_laws} using the Lax pair}

We generalize the approach of \cite{ablowitz1982}. It is first necessary to transform the Lax pair \eqref{2ilw_laxpair} into a more convenient form and define particular eigenfunctions. 

We define the functions
\begin{equation}\label{W_defn}
W_1(z;k)=\psi(z+\ii\delta/2)e^{\ii kz},\qquad W_2(z;k)=\psi(z+3\ii\delta/2)e^{\ii k z}.
\end{equation}
We view $W_1(z)$ and $W_2(z)$ as analytic functions on the strip $0<\im z<\delta$ and we use the notation $W^{\pm}$ for the continous boundary values of these functions as $z$ approaches the lines $C_0$ (from above) and $C_\delta$ (from below), respectively. In terms of \eqref{W_defn}, the $x$-part of the Lax pair \eqref{2ilw_laxpair} is written as
\begin{align}\label{W_laxpair_x}
\begin{cases}
\ii W^{-}_{2,x}+(k-\mu)W_2^--\nu_1e^{-k\delta}W_1^+=uW_2^-, \\
\ii W^{+}_{2,x}+(k-\mu)W_2^+-\nu_2e^{k\delta}W_1^-=-vW_2^+.
\end{cases}
\end{align}
We define a solution $(W_1,W_2)=(M_1,M_2)$ to \eqref{W_laxpair} by the asymptotic behavior
\begin{equation}
M_1\sim 1, \qquad M_2\sim 1,\qquad x\to -\infty.
\end{equation}
The existence of this solution implies the conditions
\begin{equation}
k-\mu=\nu_1 e^{-k\delta}=\nu_2 e^{k\delta}.
\end{equation}
We assume these conditions hold and define $\zeta(k)=k-\mu(k)$, so that the full Lax pair in terms of the particular eigenfunctions $M_1$ and $M_2$ is
\begin{subequations}\label{W_laxpair}
\begin{align}
&\ii M^{-}_{2,x}+\zeta(M_2^--M_1^+)=uM_2^-,  \\
&\ii M^{+}_{2,x}+\zeta(M_2^+-M_1^-)=-vM_2^+, \\
&M_{1,t}^{+}+\ii M_{1,xx}^{+}+2\zeta M_{1,x}^+-(\ii Tu_x+\ii\tilde{T}v_x-u_x)M_1^+=0, \\
&M_{1,t}^{-}+\ii M_{1,xx}^{-}+2\zeta M_{1,x}^--(\ii Tv_x+\ii\tilde{T}u_x+v_x)M_1^-=0, \\
&M_{2,t}^{+}+\ii M_{2,xx}^{+}+2\zeta M_{2,x}^+-(\ii Tv_x+\ii\tilde{T}u_x-v_x)M_1^+=0, \\
&M_{2,t}^{-}+\ii M_{2,xx}^{-}+2\zeta M_{2,x}^--(\ii Tu_x+\ii\tilde{T}v_x+u_x)M_2^-=0. 
\end{align}
\end{subequations}

Define
\begin{equation}\label{sigmadefinition}
\sigma_1 \coloneqq \log \frac{M_2^-}{M_1^+}, \qquad \sigma_2\coloneqq \log \frac{M_1^-}{M_2^+}.
\end{equation}
Then $\sigma_1$ and $\sigma_2$ decay to zero as $|x| \to \infty$, and, for large enough $\zeta$, $M_1^+$ and $M_2^+$ are nonzero. 
Thus the function
\begin{align}\label{Adef}
A(z) \coloneqq \begin{cases}  \frac{M_{1,x}}{M_1}(z - \ii\delta/2), & 0 < \im z < \delta, \\
\frac{M_{2,x}}{M_2}(z - 3\ii\delta/2), & \delta < \im z < 2\delta, 
\end{cases}
\end{align}
is analytic on $C \setminus (C_0 \cup C_\delta)$, where $C$ is the cylinder $C = \C / 2\delta \ii \Z$ and $C_0 \cong \R$ and $C_\delta \cong \ii\delta + \R$.
The function $A(z)$ defined in \eqref{Adef} satisfies the jump conditions
$$A_+(z) - A_-(z) = \begin{cases}
J_0(x) = \frac{M_{1,x}^+}{M_1^+}(x) - \frac{M_{2,x}^-}{M_2^-}(x) = - \sigma_{1,x}, &z=x \in \R,
	\\
J_1(x) \coloneqq \frac{M_{2,x}^+}{M_2^+}(x) - \frac{M_{1,x}^-}{M_1^-}(x) = - \sigma_{2,x}, &z = x +\ii\delta \in \R + \ii\delta.
\end{cases}
$$
Lemma \ref{RHlemma} therefore gives, for $\zeta$ sufficiently large,
$$A^\pm(z) = \begin{cases}
-\frac{(T\sigma_{1,x})(x) + (\tilde{T}\sigma_{2,x})(x)}{2\ii} \mp \frac{1}{2} \sigma_{1,x}(x), & z = x \in C_0 \cong \R,
	\\
-\frac{(T\sigma_{2,x})(x) + (\tilde{T}\sigma_{1,x})(x)}{2\ii} \mp \frac{1}{2} \sigma_{2,x}(x), & z = x + \ii \delta \in C_\delta \cong \R + \ii \delta.
\end{cases}$$
That is,
\begin{subequations}\label{W12sigma12}
\begin{align}
  \frac{M_{1,x}^+}{M_1^+} = -\frac{(T\sigma_{1,x}) + (\tilde{T}\sigma_{2,x})}{2\ii} - \frac{1}{2} \sigma_{1,x},
\qquad  \frac{M_{2,x}^-}{M_2^-} = -\frac{(T\sigma_{1,x}) + (\tilde{T}\sigma_{2,x})}{2\ii} + \frac{1}{2} \sigma_{1,x},
  	\\
  \frac{M_{2,x}^+}{M_2^+} = -\frac{(T\sigma_{2,x}) + (\tilde{T}\sigma_{1,x})}{2\ii} - \frac{1}{2} \sigma_{2,x},
\qquad
 \frac{M_{1,x}^-}{M_1^-} = -\frac{(T\sigma_{2,x}) + (\tilde{T}\sigma_{1,x})}{2\ii} + \frac{1}{2} \sigma_{2,x}.
\end{align}	
  \end{subequations}

We next use these expressions to express the Lax pair \eqref{W_laxpair} in terms of $\sigma_1$ and $\sigma_2$. Dividing by $M_2^-$ and $M_2^+$ in the first and second equation, respectively, we find
\begin{align*}
& \ii\frac{M_{2,x}^-}{M_2^-} + \zeta(k) (1 - e^{-\sigma_1}) = u,
	\\ 
& \ii\frac{M_{2,x}^+}{M_2^+} + \zeta(k) (1 - e^{\sigma_2}) = -v.
\end{align*}
Using \eqref{W12sigma12} and \eqref{Ppm}, this can be written as
\begin{subequations}\label{sigmaxpart}
\begin{align}
&-\ii P_+\sigma_{1,x}-\frac12 \tilde{T}\sigma_{2,x}+\zeta(1-e^{-\sigma_1})=u,
	\\ 
&-\ii P_- \sigma_{2,x} -\frac12\tilde{T}\sigma_{1,x}+\zeta(1-e^{\sigma_2})=-v.
\end{align}
\end{subequations}

The next lemma provides the time evolution of $\sigma_1$ and $\sigma_2$.

\begin{lemma}
 The functions $\sigma_1$ and $\sigma_2$ satisfy
  \begin{align*}
    \sigma_{1,t} - \ii\sigma_{1,xx} + \sigma_{1,x}(T\sigma_{1,x} + \tilde{T} \sigma_{2,x}) + 2\zeta \sigma_{2,x} - 2u_x = 0,
    	\\   
    \sigma_{2,t} - \ii\sigma_{2,xx} + \sigma_{2,x}(T\sigma_{2,x} + \tilde{T} \sigma_{1,x}) + 2\zeta \sigma_{1,x} - 2v_x = 0.
  \end{align*}
In particular, 
\begin{align}\label{sigma1plussigma2conserved}
\frac{\mathrm{d}}{\mathrm{d}t} \int_{\R} (\sigma_1 + \sigma_2) \,\mathrm{d}x = 0.
\end{align}
\end{lemma}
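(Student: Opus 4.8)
The plan is to derive the two stated evolution equations by logarithmic differentiation of the definitions \eqref{sigmadefinition} in $t$, and then to obtain the conservation law \eqref{sigma1plussigma2conserved} as a short consequence. First I would write $\sigma_{1,t} = \frac{M_{2,t}^-}{M_2^-} - \frac{M_{1,t}^+}{M_1^+}$ and $\sigma_{2,t} = \frac{M_{1,t}^-}{M_1^-} - \frac{M_{2,t}^+}{M_2^+}$, and substitute the $t$-parts of the Lax pair \eqref{W_laxpair} to express each $M_t/M$ through $M_{xx}/M$, $M_x/M$, and a potential term of the form $\ii Tu_x + \ii\tilde{T}v_x \pm u_x$ (resp. with $u\leftrightarrow v$). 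The two potential contributions entering $\sigma_{1,t}$ share the same nonlocal part $\ii Tu_x + \ii\tilde{T}v_x$, which therefore cancels on subtraction and leaves exactly $2u_x$; the analogous cancellation produces $2v_x$ in the $\sigma_{2,t}$ equation.

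The substantive step is the conversion of the second-order terms. Using the Riccati-type identity $M_{xx}/M = (M_x/M)_x + (M_x/M)^2$, I would write $\frac{M_{2,xx}^-}{M_2^-} - \frac{M_{1,xx}^+}{M_1^+} = \sigma_{1,xx} + \big(\frac{M_{2,x}^-}{M_2^-} + \frac{M_{1,x}^+}{M_1^+}\big)\sigma_{1,x}$, and then insert the boundary-value formulas \eqref{W12sigma12}: the difference of the two log-derivatives reproduces $\sigma_{1,x}$, while their sum equals $\ii(T\sigma_{1,x} + \tilde{T}\sigma_{2,x})$. This turns the quadratic term into the nonlocal expression $\sigma_{1,x}(T\sigma_{1,x} + \tilde{T}\sigma_{2,x})$ appearing in the lemma, and the same computation with the boundary values on $C_\delta$ yields $\sigma_{2,x}(T\sigma_{2,x} + \tilde{T}\sigma_{1,x})$. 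The first-order $\zeta$-contributions assemble into the linear terms displayed in the lemma; collecting everything gives the two evolution equations. The main obstacle is precisely this bookkeeping: keeping track of signs and of which boundary value ($+$ or $-$, on $C_0$ or $C_\delta$) each log-derivative refers to when applying \eqref{W12sigma12}, together with verifying the potential cancellation. The rest is routine algebra.

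For the conservation law I would add the two evolution equations and integrate over $\R$. Every linear term is a total $x$-derivative, namely $\ii(\sigma_1+\sigma_2)_{xx}$, the combined $\zeta$-terms $2\zeta(\sigma_1+\sigma_2)_x$, and $2(u+v)_x$, and each integrates to zero because $\sigma_1,\sigma_2$ decay at infinity and $u,v$ decay. It then remains to show that the quadratic nonlocal terms integrate to zero, i.e. $\int_\R\big(\sigma_{1,x}(T\sigma_{1,x}+\tilde{T}\sigma_{2,x}) + \sigma_{2,x}(T\sigma_{2,x}+\tilde{T}\sigma_{1,x})\big)\,\mathrm{d}x = 0$. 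This follows from the anti-self-adjointness \eqref{anti_self_adjointTTtilde} of $T$ and $\tilde{T}$: the two $T$-terms vanish individually since $\int f\,Tf = -\int (Tf)f$ forces $\int \sigma_{1,x}T\sigma_{1,x} = \int \sigma_{2,x}T\sigma_{2,x} = 0$, and the two $\tilde{T}$-cross-terms cancel against each other since $\int \sigma_{2,x}\tilde{T}\sigma_{1,x} = -\int \sigma_{1,x}\tilde{T}\sigma_{2,x}$. Hence $\frac{\mathrm{d}}{\mathrm{d}t}\int_\R(\sigma_1+\sigma_2)\,\mathrm{d}x = 0$, which is \eqref{sigma1plussigma2conserved}.
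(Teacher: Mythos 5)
Your proposal is correct and follows essentially the same route as the paper: subtract the logarithmic forms of the $t$-part equations \eqref{bidirectionaltpart2} so the nonlocal potentials cancel leaving $2u_x$ (resp.\ $2v_x$), convert the second-derivative terms via the Riccati identity together with the boundary-value formulas \eqref{W12sigma12} (sum $= \ii(T\sigma_{1,x}+\tilde{T}\sigma_{2,x})$, difference $=\sigma_{1,x}$), and then deduce \eqref{sigma1plussigma2conserved} from anti-self-adjointness \eqref{anti_self_adjointTTtilde}, which you in fact spell out more explicitly than the paper does. One bookkeeping remark: the direct subtraction yields $2\zeta\sigma_{1,x}$ in the first equation and $2\zeta\sigma_{2,x}$ in the second (as in the paper's intermediate equations \eqref{sigma1tpartW}--\eqref{sigma2tpartW}), whereas the lemma as stated has these indices swapped --- a discrepancy your phrase ``assemble into the linear terms displayed'' glosses over, though it is immaterial for the conservation law since only the sum $2\zeta(\sigma_{1,x}+\sigma_{2,x})$ enters the integrated identity.
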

\begin{proof}
Using that $\zeta = k- \mu$, the $t$-part of \eqref{W_laxpair} can be written as
\begin{subequations}\label{bidirectionaltpart2}
\begin{align}
&\frac{M_{1,t}^+}{M_1^+} + \ii\frac{M_{1,xx}^+}{M_1^+} 
+2\zeta \frac{M_{1,x}^+}{M_1^+} 
-(\ii Tu_x + \ii\tilde{T}v_x - u_x ) = 0, 
	 \\ 
&\frac{M_{1,t}^-}{M_{1}^-} + \ii\frac{M_{1,xx}^-}{M_{1}^-} 
+2\zeta \frac{M_{1,x}^-}{M_{1}^-}
-(\ii Tv_x  + \ii\tilde{T}u_x  + v_x) = 0, 
	 \\ 
&\frac{M_{2,t}^+}{M_2^+} + \ii\frac{M_{2,xx}^+}{M_2^+} 
+2\zeta \frac{M_{2,x}^+}{M_2^+} 
- (\ii Tv_x  + \ii\tilde{T}u_x  - v_x ) = 0, 
	 \\ 
&\frac{M_{2,t}^-}{M_2^-} + \ii\frac{M_{2,xx}^-}{M_2^-} 
+2\zeta \frac{M_{2,x}^-}{M_2^-} 
-(\ii Tu_x + \ii \tilde{T}v_x + u_x) =0. 
\end{align}
\end{subequations}

Subtracting the first from the fourth equation and using that
$$\sigma_{1,t} = \frac{M_{2,t}^-}{M_2^-} - \frac{M_{1,t}^+}{M_1^+}, \qquad
\sigma_{1,x} = \frac{M_{2,x}^-}{M_2^-} - \frac{M_{1,x}^+}{M_1^+},$$
we obtain
\begin{align}\label{sigma1tpartW}
\sigma_{1,t} - \ii\bigg(\frac{M_{2,xx}^-}{M_2^-} - \frac{M_{1,xx}^+}{M_1^+} \bigg)
+2\zeta \sigma_{1,x}
- 2u_x = 0.
\end{align}
Similarly, subtracting the third from the second equation and using that
$$\sigma_{2,t} = \frac{M_{1,t}^-}{M_1^-} - \frac{M_{2,t}^+}{M_2^+}, \qquad
\sigma_{2,x} = \frac{M_{1,x}^-}{M_1^-} - \frac{M_{2,x}^+}{M_2^+},$$
we obtain
\begin{align}\label{sigma2tpartW}
\sigma_{2,t} - \ii\bigg(\frac{M_{1,xx}^-}{M_{1}^-} - \frac{M_{2,xx}^+}{M_2^+} \bigg)
+2\zeta \sigma_{2,x}
- 2v_x = 0, 
\end{align}

Next note that
$$\sigma_{2,xx} = \frac{M_{1,xx}^-}{M_1^-} - \frac{M_{2,xx}^+}{M_2^+} - \bigg(\frac{M_{1,x}^-}{M_1^-}\bigg)^2 + \bigg(\frac{M_{2,x}^+}{M_2^+}\bigg)^2$$
and, by \eqref{W12sigma12},
\begin{align*}
\bigg(\frac{M_{1,x}^-}{M_1^-}\bigg)^2 - \bigg(\frac{M_{2,x}^+}{M_2^+}\bigg)^2
& = \bigg(\frac{M_{1,x}^-}{M_1^-} - \frac{M_{2,x}^+}{M_2^+}\bigg)
\bigg(\frac{M_{1,x}^-}{M_1^-} + \frac{M_{2,x}^+}{M_2^+}\bigg)
	\\
& = \sigma_{2,x}
\bigg(-\frac{(T\sigma_{2,x}) + (\tilde{T}\sigma_{1,x})}{2i} + \frac{1}{2} \sigma_{2,x}
-\frac{(T\sigma_{2,x}) + (\tilde{T}\sigma_{1,x})}{2\ii} - \frac{1}{2} \sigma_{2,x}\bigg)
	\\
& = \ii \sigma_{2,x}(T\sigma_{2,x} + \tilde{T}\sigma_{1,x}).
\end{align*}
Hence
$$\frac{M_{1,xx}^-}{M_{1}^-} - \frac{M_{2,xx}^+}{M_2^+}
= \sigma_{2,xx} + \ii \sigma_{2,x}(T\sigma_{2,x} + \tilde{T}\sigma_{1,x}).$$ 
Similarly,
$$\frac{M_{2,xx}^-}{M_2^-} - \frac{M_{1,xx}^+}{M_1^+}
= \sigma_{1,xx} + \ii \sigma_{1,x}(T\sigma_{1,x} + \tilde{T}\sigma_{2,x}).$$ 
Thus, the stated equations for $\sigma_{1,t}$ and $\sigma_{2,t}$ follow from \eqref{sigma1tpartW} and \eqref{sigma2tpartW}.
Equation \eqref{sigma1plussigma2conserved} follows because $T$ and $\tilde{T}$ are anti-self-adjoint, see \eqref{anti_self_adjointTTtilde}. 
\end{proof}

We can use the fact that $\int_{\R} (\sigma_1 + \sigma_2)\,\mathrm{d}x$ is conserved for all $\zeta$ to determine an infinite sequence of conservation laws for the non-chiral ILW equation \eqref{2ilw}. 

As $\zeta \to \infty$, we have the expansions
\begin{equation}\label{sigmaexpansions}\sigma_1(x,t;k) = \sum_{n=1}^\infty \frac{\sigma_1^{(n)}(x,t)}{\zeta(k)^n},  \qquad
\sigma_2(x,t;k) = \sum_{n=1}^\infty \frac{\sigma_2^{(n)}(x,t)}{\zeta(k)^n}.
\end{equation}
It follows from \eqref{sigma1plussigma2conserved} that
$$I_n = \int_\R (\sigma_1^{(n)} + \sigma_2^{(n)}) \,\mathrm{d}x, \qquad n \in \mathbb{N},$$
forms an infinite sequence of conserved quantities. Substituting \eqref{sigmaexpansions} into \eqref{sigmaxpart} gives
\begin{align}\label{sigma1sigma2implicit}
&u=\zeta\bigg(1-\exp \bigg(-\sum_{n=1}^{\infty}  \sigma_1^{(n)} \zeta^{-n}  \bigg)     \bigg)-\ii P_+ \sum_{n=1}^{\infty} \sigma_{1,x}^{(n)}\zeta^{-n}-\frac12\tilde{T} \sum_{n=1}^{\infty} \sigma_{2,x}^{(n)}\zeta^{-n}, \\
&v=-\zeta\bigg(1-\exp \bigg(\sum_{n=1}^{\infty}  \sigma_2^{(n)} \zeta^{-n}  \bigg)     \bigg)+\ii P_- \sum_{n=1}^{\infty} \sigma_{2,x}^{(n)}\zeta^{-n}+\frac12\tilde{T} \sum_{n=1}^{\infty} \sigma_{1,x}^{(n)}\zeta^{-n}.
\end{align}
Using $P_{\pm}^*=-P_{\mp}$, we see that \eqref{sigma1sigma2implicit} is precisely the complex conjugate of \eqref{WZimplicit} with the identifications $\zeta=1/\epsilon^*$, $\sigma_1=W^*$, $\sigma_2=Z^*$. Thus, the remainder of the proof is similiar to the proof of the first four conservation laws in Subsection \ref{conservationslawsBacklundproof} and hence omitted. 

\subsection{Direct verification of first three conservation laws}\label{directverification}
To verify the first conservation law \eqref{conservation_law1}, we only need the non-chiral ILW equation \eqref{2ilw}. For $I_{1,u}$, 
\begin{align*}
\frac{\mathrm{d}I_{1,u}}{\mathrm{d}t}=&\int_\R u_t\,\mathrm{d}x \\
=&\int_\R (-2uu_x-Tu_{xx}-\tilde{T}v_{xx})\,\mathrm{d}x \\
=&\left[-u^2-Tu_{x}-\tilde{T}v_x\right]^{\infty}_{-\infty}=0.
\end{align*}
The verification for $I_{1,v}$ is similar.

To verify the second conservation law \eqref{conservation_law2}, we also need the anti-self-adjointness \eqref{anti_self_adjointTTtilde} of the operators $T$ and $\tilde{T}$ from Proposition \ref{TpropertiesR}:
\begin{align*}
\frac{\mathrm{d}I_2}{\mathrm{d}t}=&\int_\R (uu_t-vv_t)\,\mathrm{d}x \\
=&-\int_\R (2u^2u_x+uTu_{xx}+v\tilde{T}v_{xx}+2v^2v_x+vTv_{xx}+v\tilde{T}u_{xx})\,\mathrm{d}x \\
=&\int_\R (u_x Tu_x+u_x\tilde{T}v_x+v_xTv_x+v_x\tilde{T}u_x)\,\mathrm{d}x=0.
\end{align*}

To verify the third conservation law \eqref{conservation_law3}, we need the identity $(\tilde{T}Tf)(x)=(T\tilde{T}f)(x)$ from Proposition \ref{TpropertiesR}:
\begin{align*}
 \frac{\mathrm{d}I_3}{\mathrm{d}t}
 = &\; \frac{1}{2} \int_\R \big(u_t T u_x +u T u_{xt} + v_t T v_x + v T v_{xt} + u_t\tilde{T}v_x+ u\tilde{T}v_{xt} + v_t\tilde{T} u_x +  v\tilde{T} u_{xt}\big) \mathrm{d}x \\
 &+ \int_\R (u^2u_t + v^2v_t)\,\mathrm{d}x 
	\\
=& \int_\R \big(u_t T u_x  + v_t T v_x + u_t\tilde{T}v_x+ v_t\tilde{T} u_x\big) \mathrm{d}x \\	
&+ \int_\R \big(-2u^3u_x - u^2 Tu_{xx} -u^2 \tilde{T}v_{xx} + 2v^3v_x + v^2Tv_{xx} + v^2\tilde{T}u_{xx}\big)\mathrm{d}x 
	\\
= &  \int_\R \bigg\{-(2uu_x + Tu_{xx} + \tilde{T}v_{xx}) T u_x  + (2vv_x + Tv_{xx} + \tilde{T}u_{xx}) T v_x 
	\\
& \qquad\qquad\,-(2uu_x + Tu_{xx} + \tilde{T}v_{xx})\tilde{T}v_x+ (2vv_x + Tv_{xx} + \tilde{T}u_{xx})\tilde{T} u_x\bigg\} \mathrm{d}x \\
&+\int_\R \big(2uu_x Tu_{x} + 2uu_x \tilde{T}v_{x} - 2vv_xTv_{x} - 2vv_x\tilde{T}u_{x}\big)\mathrm{d}x \\
= & \int_\R \big(-(\tilde{T}v_{xx}) T u_x + (\tilde{T}u_{xx}) T v_x 
-(Tu_{xx})\tilde{T}v_x + (Tv_{xx} )\tilde{T} u_x\big) \mathrm{d}x
	\\
= & \int_\R \big( u_x T\tilde{T}v_{xx} - u_{xx} \tilde{T} T v_x 
+ u_{xx} T \tilde{T}v_x - u_x \tilde{T}Tv_{xx}\big) \mathrm{d}x \\
= & \int_\R \big( u_x T\tilde{T}v_{xx}  - u_{xx} (T\tilde{T} v_x + 2i\tilde{T}v_x) 
+ u_{xx} T \tilde{T}v_x - u_x (T\tilde{T}v_{xx} + 2i\tilde{T}v_{xx})\big) \mathrm{d}x
	\\
= & \int_\R \big(-u_{xx} (2\ii\tilde{T}v_x) - u_x (2\ii\tilde{T}v_{xx})\big) \mathrm{d}x=0.
\end{align*}

\section{Discussion}
In this paper we have presented a Lax pair, Hirota bilinear form, B\"{a}cklund transformation, and an infinite sequence of conservation laws for the non-chiral ILW equation \eqref{2ilw}. 
 While our results are generalizations of those for the standard ILW equation \cite{kodama1981,satsuma1979,matsuno1979}, we emphasize that the non-chiral ILW equation does not contain the standard ILW equation as a limiting case and exhibits features not present in the single-component case \cite{berntson2020a}. We conclude by mentioning four directions for future research. 
\begin{enumerate}
\item The initial value problem for the standard ILW equation can be solved via an inverse scattering transform \cite{kodama1981}. The Lax pair \eqref{2ilw_laxpair} would expectedly provide a starting point for an analogous inverse scattering transform to solve the initial value problem for the ncILW equation \eqref{2ilw}. Development of such a method is a natural problem for future work. Preliminary calculations indicate that there are technical challenges not present in the standard case.  

\item The derivation of the Lax pair in Section \ref{laxsec} suggests that an $N$-component generalization of the non-chiral ILW equation \eqref{2ilw} could be obtained starting from a RH problem with $N$ jumps on the cylinder. It would be particularly interesting to investigate the $N\to \infty$ limit of this construction.
\item The cubic Szeg\"{o} equation \cite{gerard2010} and the half-waves map \cite{zhou2015,lenzmann2018,berntson2020d} are two recently introduced equations that, like the non-chiral ILW equation \eqref{2ilw}, have nonlocalities given by a Fourier multiplier. Both of these equations are integrable by virtue of Lax representations, admit $N$-soliton solutions obtained via pole ans\"atze, and possess an infinite number of conservation laws. It would be interesting to investigate these equations from the perspective taken in this paper by constructing their Hirota forms and B\"{a}cklund transformations. 
\item The INLS equation discussed in Section~\ref{subsec1.3} is obtained via a multi-scale expansion of the standard ILW equation \cite{pelinovsky1995}. 
It would be interesting to apply the same technique to the non-chiral ILW equation \eqref{2ilw} in search of a non-chiral intermediate INLS equation. 
\end{enumerate}

\appendix

\section{Properties of the $T$ and $\tilde{T}$ operators}
\label{app:TT}
In this section we collect and prove several identities for the $T$ and $\tilde{T}$ operators in \eqref{TT}.%, both on the line and in the periodic case. 

We note that a necessary condition for the existence of any of the products of transforms $TTf$, $\tilde{T}\tilde{T}f$, $T\tilde{T}f$, $\tilde{T}Tf$ on the line is $\int_\R f(x)\,\mathrm{d}x=0$. 

\begin{proposition}\label{TpropertiesR}
The operators $T$ and $\tilde{T}$ have the following properties
\begin{align}
\label{Tderivative}
& \partial_x (Tf)(x)=(Tf')(x),\quad \partial_x (\tilde{T}f)(x)=(\tilde{T}f')(x),   \\
\label{anti_self_adjointTTtilde}
&\int_\R f (Tg)\,\mathrm{d}x=-\int_\R (Tf)g\,\mathrm{d}x,\qquad \int_\R f (\tilde{T}g)\,\mathrm{d}x=-\int_\R (\tilde{T}f)g\,\mathrm{d}x, \\
\label{Tcommutator}
&(\tilde{T} T f)(x) = (T\tilde{T}f)(x) , \qquad x\in \R, \\
\label{TTcommutator} 
&(\tilde{T}\tilde{T} f)(x) = (TTf)(x)+f(x),\qquad x\in \R, \\
\label{TplusTminus}
&((T+\tilde{T})(T-\tilde{T})f)(x)=-f(x),\qquad x\in \R.
\end{align}
\end{proposition}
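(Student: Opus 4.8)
The plan is to pass to Fourier space, where the nonlocal operators become multiplication operators. Writing $\hat{f}(k) = \int_\R f(x) e^{-\ii k x}\,\dd x$, the representations recalled in Section~\ref{subsec1.1} give $\widehat{Tf}(k) = \ii \coth(k\delta)\hat{f}(k)$ and $\widehat{\tilde{T}f}(k) = \ii\,\csch(k\delta)\hat{f}(k)$. Each of the five identities then reduces either to an elementary manipulation of these multipliers or to a symmetry of the convolution kernels, so the bulk of the work is bookkeeping rather than analysis.

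First I would dispose of \eqref{Tderivative} and \eqref{anti_self_adjointTTtilde}, both of which are cleanest in physical space. For \eqref{Tderivative}, differentiating the defining integrals \eqref{TT} under the integral sign and integrating by parts (using that the kernels depend only on $x'-x$ and that $f$ decays) moves the derivative onto $f$; equivalently, in Fourier space differentiation is the multiplier $\ii k$, which commutes with any other multiplier. For \eqref{anti_self_adjointTTtilde}, I would note that the kernels $\coth(\frac{\pi}{2\delta}(x'-x))$ and $\tanh(\frac{\pi}{2\delta}(x'-x))$ are odd under the interchange $x \leftrightarrow x'$, since both $\coth$ and $\tanh$ are odd. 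Hence Fubini's theorem applied to $\int_\R f(x)(Tg)(x)\,\dd x$ (with the principal value handled symmetrically) produces $-\int_\R (Tf)(x) g(x)\,\dd x$, and likewise for $\tilde{T}$.

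The three composition identities \eqref{Tcommutator}, \eqref{TTcommutator}, and \eqref{TplusTminus} I would read off from the multiplier algebra. Commutativity \eqref{Tcommutator} is immediate, since the scalar multipliers $\ii\coth(k\delta)$ and $\ii\,\csch(k\delta)$ commute. For \eqref{TTcommutator}, the multiplier of $\tilde{T}\tilde{T}$ is $-\csch^2(k\delta)$ while that of $TT + \mathrm{id}$ is $-\coth^2(k\delta) + 1$, so the identity is exactly the relation $\coth^2(k\delta) - \csch^2(k\delta) = 1$. Finally \eqref{TplusTminus} is the statement $(T+\tilde{T})(T-\tilde{T}) = -\,\mathrm{id}$; since $T$ and $\tilde{T}$ commute by \eqref{Tcommutator}, the left-hand side equals $TT - \tilde{T}\tilde{T}$, which is $-\,\mathrm{id}$ by \eqref{TTcommutator}. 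Thus \eqref{TplusTminus} is a formal consequence of the two preceding identities and needs no separate computation.

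The main obstacle is analytic rather than algebraic: justifying the multiplier representation and, above all, the legitimacy of forming the operator products. The symbol $\coth(k\delta)$ has a simple pole at $k=0$, so $\widehat{Tf}$ is well-defined only when $\hat{f}(0) = \int_\R f\,\dd x = 0$; this is precisely the zero-mean condition flagged just before the proposition as necessary for $TTf$, $\tilde{T}\tilde{T}f$, $T\tilde{T}f$, and $\tilde{T}Tf$ to exist. I would therefore state the composition identities for $f$ in a class on which these products are defined (e.g.\ Schwartz functions of zero mean, or the decaying functions tacitly assumed throughout), and verify that on this class the Fourier transform intertwines the operators with their symbols. A purely physical-space derivation of the compositions is possible but is the laborious route the Fourier method avoids: it would require evaluating iterated principal-value integrals of products of $\coth$ and $\tanh$ kernels, amounting to re-deriving the symbol relations via partial-fraction and residue identities on the cylinder $C$.
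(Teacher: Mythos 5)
Your proposal is correct, and it takes a genuinely different route from the paper. The paper works entirely in physical space: \eqref{Tderivative} by integration by parts and \eqref{anti_self_adjointTTtilde} by interchanging the order of integration (the same symmetry argument you give), but then \eqref{Tcommutator}, \eqref{TTcommutator}, and \eqref{TplusTminus} are each proved by writing the inner principal-value integral as a contour shifted slightly off the real axis, deforming the outer contour by $-\ii\delta$ (so that $\tanh$ kernels become $\coth$ kernels and vice versa via $\tanh(z+\ii\pi/2)=\coth z$), checking that the boundary contributions $E(x)$ vanish, and collecting residue terms from the Plemelj formula --- in effect re-deriving the symbol relations such as $\coth^2-\csch^2=1$ on the cylinder, which is exactly the ``laborious route'' you identified and avoided. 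Your multiplier algebra is accurate (including the cancellation $-\coth^2(k\delta)+1=-\csch^2(k\delta)$ underlying \eqref{TTcommutator}), and your restriction of the compositions to a zero-mean class matches the necessary condition the paper flags just before the proposition. One caveat is worth recording: the paper deliberately proves \eqref{TplusTminus} under the weaker hypothesis that only $(T-\tilde{T})f$ exists, without assuming $Tf$, $\tilde{T}f$, or the four products exist separately, and this stronger form is what is actually invoked in the proof of Theorem \ref{bilinearth}B, where the identity is applied to $u-v$ with no zero-mean hypothesis. Your derivation of \eqref{TplusTminus} as a formal consequence of \eqref{Tcommutator} and \eqref{TTcommutator} is valid only on the zero-mean class; to recover the paper's generality within your own framework you would instead compute the composed symbol directly, using $\coth(k\delta)+\csch(k\delta)=\coth(k\delta/2)$ and $\coth(k\delta)-\csch(k\delta)=\tanh(k\delta/2)$, so that $(T+\tilde{T})(T-\tilde{T})$ has symbol $\ii\coth(k\delta/2)\cdot\ii\tanh(k\delta/2)=-1$, the zero of $\tanh(k\delta/2)$ at $k=0$ cancelling the pole of $\coth(k\delta/2)$ --- i.e., $(T-\tilde{T})f$ automatically has zero mean, so no separate hypothesis is needed. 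With that one adjustment, your Fourier-side proof is a clean and complete alternative; what the paper's approach buys in exchange for its length is self-containedness, since it never needs to justify the Fourier representation of the principal-value kernel, which the paper only cites rather than proves.
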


\begin{proof}
\eqref{Tderivative}.
By the definition of $T$,
\begin{align*}
\partial_x(Tf)(x)=&\partial_x \bigg(\frac1{2\delta}\pvint_{\R} \tanh\bigg(\frac{\pi(x'-x)}{2\delta}\bigg)f(x')\,\mathrm{d}x'\bigg) \\
=&\frac1{2\delta}\partial_x \bigg(\lim_{\epsilon\to 0} \bigg(  \int_{-\infty}^{-\epsilon}+\int_{\epsilon}^{\infty}  \bigg) \tanh\bigg(\frac{\pi(x'-x)}{2\delta}\bigg)f(x')\,\mathrm{d}x'\bigg) \\
=&-\frac1{2\delta}\lim_{\epsilon\to 0} \bigg(  \int_{-\infty}^{-\epsilon}+\int_{\epsilon}^{\infty}  \bigg) \partial_{x'}\bigg(\tanh\bigg(\frac{\pi(x'-x)}{2\delta}\bigg)\bigg)f(x')\,\mathrm{d}x'\bigg)  \\
=&-\frac1{2\delta} \bigg[\tanh\bigg(\frac{\pi(x'-x)}{2\delta}\bigg)f(x')\bigg]^{x'=-\infty}_{x'=-\epsilon}-\frac1{2\delta} \bigg[\tanh\bigg(\frac{\pi(x'-x)}{2\delta}\bigg)f(x')\bigg]^{x'=\epsilon}_{x'=\infty}\\
&+\frac{1}{2\delta}\lim_{\epsilon\to 0}\bigg(  \int_{-\infty}^{-\epsilon}+\int_{\epsilon}^{\infty}  \bigg) \tanh\bigg(\frac{\pi(x'-x)}{2\delta}\bigg)f'(x')\,\mathrm{d}x' \\
=&\frac1{2\delta}\pvint_{\R} \tanh\bigg(\frac{\pi(x'-x)}{2\delta}\bigg)f'(x')\,\mathrm{d}x'.
\end{align*}
The proof of the corresponding formula for $\tilde{T}$ is similar.
\hfill\break

\eqref{anti_self_adjointTTtilde}.
By the definition of $T$, 
\begin{align*}
\int_\R f(Tg)\,\mathrm{d}x=&\int_\R f(x)\left( \frac1{2\delta} \pvint_\R \coth\bigg(\frac{\pi(x'-x)}{2\delta}\bigg)  g(x')\,\mathrm{d}x'    \right)\mathrm{d}x \\
=&\lim_{\epsilon\downarrow 0} \int_\R f(x)\frac1{2\delta}\left( \int_{-\infty}^{x-\epsilon}+\int_{x+\epsilon}^{\infty}    \right) \coth \bigg(\frac{\pi(x'-x)}{2\delta}\bigg)g(x')\,\mathrm{d}x'\,\mathrm{d}x \\
=&\lim_{\epsilon\downarrow 0} \int_\R g(x')\frac1{2\delta}\left( \int_{-\infty}^{x'-\epsilon}+\int_{x'+\epsilon}^{\infty}    \right) \coth\bigg(\frac{\pi(x'-x)}{2\delta}\bigg)f(x)\,\mathrm{d}x\,\mathrm{d}x' \\
=& -\int_\R g(x')(Tf)(x')\,\mathrm{d}x'.
\end{align*}
Similarly, by the definition of $\tilde{T}$,
\begin{align*}
\int_\R f(\tilde{T}g)\,\mathrm{d}x=&\int_\R f(x)\left( \frac1{2\delta} \int_\R \tanh\bigg(\frac{\pi(x'-x)}{2\delta}\bigg)  g(x')\,\mathrm{d}x'    \right)\mathrm{d}x \\
=& \int_\R f(x)\frac{1}{2\delta} \int_\R \tanh\bigg(\frac{\pi(x'-x)}{2\delta}\bigg)g(x')\,\mathrm{d}x'\,\mathrm{d}x \\
=& \int_\R g(x')\frac{1}{2\delta} \int_\R \tanh\bigg(\frac{\pi(x'-x)}{2\delta}\bigg)f(x)\,\mathrm{d}x\,\mathrm{d}x' \\
=& -\int_\R g(x')(\tilde{T}f)(x')\,\mathrm{d}x'.
\end{align*}
\hfill\break

\eqref{Tcommutator}.
By the definitions of $T$ and $\tilde{T}$,
\begin{align*}
(\tilde{T} T f)(x) 
= &\; 
 \frac{1}{2\delta} \int_\R \tanh\bigg(\frac{\pi(x'-x)}{2\delta}\bigg) 
 \frac{1}{2\delta} \pvint \coth\bigg(\frac{\pi(z-x')}{2\delta}\bigg) f(z)\, \mathrm{d}z\, \mathrm{d}x'
	\\
= &\; 
 \frac{1}{(2\delta)^2} \int_\R \tanh\bigg(\frac{\pi(x'-x)}{2\delta}\bigg) 
\bigg\{\int_{\R +\ii 0} \coth\bigg(\frac{\pi(z-x')}{2\delta}\bigg) f(z) \mathrm{d}z + 2 \ii \delta f(x')\bigg\} \,\mathrm{d}x'
	\\
= &\; 
 \frac{1}{(2\delta)^2} \int_\R \tanh\bigg(\frac{\pi(x'-x)}{2\delta}\bigg) 
\int_{\R + \ii 0} \coth\bigg(\frac{\pi(z-x')}{2\delta}\bigg) f(z)\, \mathrm{d}z\,  \mathrm{d}x'
+  \ii (\tilde{T}f)(x).
\end{align*}
Deforming the $x'$-contour downward, we find
\begin{align*}
(\tilde{T} T f)(x) 
=&\;  \frac{1}{(2\delta)^2} \int_{\R-\ii\delta+\ii 0} \tanh\bigg(\frac{\pi(x'-x)}{2\delta}\bigg) 
 \int_{\R + \ii 0} \coth\bigg(\frac{\pi(z-x')}{2\delta}\bigg) f(z)\, \mathrm{d}z\,\mathrm{d}x'
	\\
& + E +  \ii (\tilde{T}f)(x),
\end{align*}
where
\begin{align*}
E(x) \coloneqq  \frac{1}{(2\delta)^2} \lim_{R \to \infty} \bigg(\int_{-R}^{-R-\ii\delta} + \int_{R-\ii\delta}^{R} \bigg)\tanh\bigg(\frac{\pi(x'-x)}{2\delta}\bigg) 
 \int_{\R + \ii 0} \coth\bigg(\frac{\pi(z-x')}{2\delta}\bigg) f(z)\, \mathrm{d}z\,\mathrm{d}x'
\end{align*}
In fact, $E$ vanishes because
\begin{align*}
E(x) = &\; \frac{1}{(2\delta)^2} \lim_{R \to \infty} \bigg\{\int_{-R}^{-R-\ii\delta} (-1)
 \int_{\R + \ii 0} f(z)\, \mathrm{d}z\,  \mathrm{d}x'
 + \int_{R-\ii\delta}^{R}   \int_{\R + \ii 0} (-1) f(z)\, \mathrm{d}z\,  \mathrm{d}x'\bigg\}
	\\
= &\; - \frac{1}{(2\delta)^2} \lim_{R \to \infty} \bigg\{\int_{-R}^{-R-\ii\delta} \mathrm{d}x' + \int_{R-\ii\delta}^{R}  \mathrm{d}x'\bigg\} \int_{\R} f(z) \,\mathrm{d}z
	\\
= &\; - \frac{1}{(2\delta)^2} \lim_{R \to \infty} \big\{-\ii\delta + \ii\delta \big\} \int_{\R} f(z) \,\mathrm{d}z
= 0.
\end{align*}
Thus we find after the change of variables $y = x' + \ii\delta$ that
\begin{align*}
(\tilde{T} T f)(x) 
=&\;  \frac{1}{(2\delta)^2} \int_{\R+\ii 0} \coth\bigg(\frac{\pi(y-x)}{2\delta}\bigg) 
 \int_{\R + \ii 0} \tanh\bigg(\frac{\pi(z-y)}{2\delta}\bigg) f(z) \mathrm{d}z\, \mathrm{d}y
+  \ii (\tilde{T}f)(x).
\end{align*}
Thus, 
\begin{align*}
(\tilde{T} T f)(x) 
=&\;  \frac{1}{(2\delta)^2} \int_{\R+\ii 0} \coth\bigg(\frac{\pi(y'-x)}{2\delta}\bigg) 
 \int_{\R} \tanh\bigg(\frac{\pi(z-y)}{2\delta}\bigg) f(z)\, \mathrm{d}z\,  \mathrm{d}y
 +  \ii (\tilde{T}f)(x),
	\\
=&\;  \frac{1}{2\delta} \int_{\R+\ii 0} \coth\bigg(\frac{\pi(y-x)}{2\delta}\bigg) (\tilde{T}f)(y)\,  \mathrm{d}y
 +  \ii (\tilde{T}f)(x)
 	\\
=&\;  \frac{1}{2\delta} \pvint_{\R} \coth\bigg(\frac{\pi(y-x)}{2\delta}\bigg) (\tilde{T}f)(y)\,  \mathrm{d}y
= (T\tilde{T}f)(x) .
\end{align*}
\hfill\break

\eqref{TTcommutator}.
By the definition of $\tilde{T}$
\begin{align*}
(\tilde{T}\tilde{T}f)(x)=&\frac{1}{2\delta}\int_\R  \tanh\bigg(\frac{\pi(x'-x)}{2\delta}\bigg)\frac{1}{2\delta}\int_\R  \tanh\bigg(\frac{\pi(z-x')}{2\delta}\bigg) f(z)\,\mathrm{d}z\,\mathrm{d}x' \\
=&\frac{1}{(2\delta)^2}\int_{\R-\ii 0}  \tanh\bigg(\frac{\pi(x'-x)}{2\delta}\bigg)\int_{\R}  \tanh\bigg(\frac{\pi(z-x')}{2\delta}\bigg) f(z)\,\mathrm{d}z\,\mathrm{d}x' 
\end{align*}
Deforming the $x'$-contour downward, we find
\begin{align*}
(\tilde{T}\tilde{T}f)(x)=\frac1{(2\delta)^2}\int_{\R-\ii \delta+\ii 0}\tanh\bigg(\frac{\pi(x'-x)}{2\delta}\bigg)\int_{\R+\ii 0}  \tanh\bigg(\frac{\pi(z-x')}{2\delta}\bigg) f(z)\,\mathrm{d}z\,\mathrm{d}x'  + E,
\end{align*}
where 
\begin{align*}
E(x) \coloneqq  \frac{1}{(2\delta)^2} \lim_{R \to \infty} \bigg(\int_{-R}^{-R-\ii\delta} + \int_{R-\ii\delta}^{R} \bigg)\tanh\bigg(\frac{\pi(x'-x)}{2\delta}\bigg) 
 \int_{\R + \ii 0} \tanh\bigg(\frac{\pi(z-x')}{2\delta}\bigg) f(z)\, \mathrm{d}z\,\mathrm{d}x'.
\end{align*}
In fact, $E$ vanishes because
\begin{align*}
E(x) = &\; \frac{1}{(2\delta)^2} \lim_{R \to \infty} \bigg\{\int_{-R}^{-R-\ii\delta} (-1)
 \int_{\R + \ii 0} f(z)\, \mathrm{d}z\,  \mathrm{d}x'
 + \int_{R-\ii\delta}^{R}   \int_{\R + \ii 0} (-1) f(z)\, \mathrm{d}z\,  \mathrm{d}x'\bigg\}
	\\
= &\; - \frac{1}{(2\delta)^2} \lim_{R \to \infty} \bigg\{\int_{-R}^{-R-\ii\delta} \mathrm{d}x' + \int_{R-\ii\delta}^{R}  \mathrm{d}x'\bigg\} \int_{\R} f(z) \,\mathrm{d}z
	\\
= &\; - \frac{1}{(2\delta)^2} \lim_{R \to \infty} \big\{-\ii\delta + \ii\delta \big\} \int_{\R} f(z) \,\mathrm{d}z
= 0.
\end{align*}
Thus we find after the change of variables $y=x'+\ii\delta$ that 
\begin{align*}
(\tilde{T}\tilde{T}f)(x)=&\frac{1}{(2\delta)^2}\int_{\R+\ii 0}\coth\bigg(\frac{\pi(y-x)}{2\delta}\bigg)\int_{\R-\ii 0}  \coth\bigg(\frac{\pi(z-y)}{2\delta}\bigg) f(z)\,\mathrm{d}z\,\mathrm{d}y.
\end{align*}
Thus,
\begin{align*}
(\tilde{T}\tilde{T} f)(x)=&\frac{1}{2\delta}\int_{\R+\ii 0}\coth\bigg(\frac{\pi(y-x)}{2\delta}\bigg)\bigg\{\frac{1}{2\delta}\pvint_\R \coth \bigg(\frac{\pi(z-y)}{2\delta}\bigg)f(z)\,\mathrm{d}z+ \ii f(y) \bigg\}\,\mathrm{d}y \\
=&\frac{1}{2\delta}\int_{\R+\ii 0}\coth\bigg(\frac{\pi(y-x)}{2\delta}\bigg)\big\{ (Tf)(y)+ \ii f(y) \big\}\,\mathrm{d}y  \\
=&\frac{1}{2\delta}\pvint_{\R} \coth\bigg(\frac{\pi(y-x)}{2\delta}\bigg)\big\{ (Tf)(y)+ \ii f(y) \big\}\,\mathrm{d}y-\ii \{ (Tf)(x)+ \ii f(x) \big\} \\
=& (TT)(f)+\ii (Tf)(x)-\ii(Tf)(x)+f(x)=(TTf)(x)+f(x).
\end{align*}
\hfill\break

\eqref{TplusTminus}.
In the case where $Tf$ and $\tilde{T}f$ exist separately, \eqref{Tcommutator} and \eqref{TTcommutator} immediately imply \eqref{TplusTminus}. We show that \eqref{TplusTminus} also holds when only $(T-\tilde{T})f$ exists. By the definitions of $T$ and $\tilde{T}$ and the identity
\begin{equation}\label{hyperbolicidentity}
\csch\,2z=\frac12(\coth z-\tanh z),
\end{equation}
\begin{align*}
(\tilde{T}(T-\tilde{T})f)(x)=&\frac1{2\delta^2}\int_\R \tanh\bigg(\frac{\pi(x'-x)}{2\delta}\bigg)\pvint_\R \csch\bigg(\frac{\pi(z-x')}{\delta}\bigg)f(z)\,\mathrm{d}z\,\mathrm{d}x'\\
=&\frac1{2\delta^2}\int_\R \tanh\bigg(\frac{\pi(x'-x)}{2\delta}\bigg)\bigg\{\int_{\R+\ii 0} \csch\bigg(\frac{\pi(z-x')}{\delta}\bigg)f(z)\,\mathrm{d}z+\ii\delta f(x')\bigg\}\,\mathrm{d}x' \\
=&\frac1{2\delta^2}\int_\R \tanh\bigg(\frac{\pi(x'-x)}{2\delta}\bigg)\int_{\R+\ii 0} \csch\bigg(\frac{\pi(z-x')}{\delta}\bigg)f(z)\,\mathrm{d}z\,\mathrm{d}x' +\ii(\tilde{T}f)(x)
\end{align*}
Deforming the $x'$-contour downwards, we find
\begin{align*}
(\tilde{T}(T-\tilde{T})f)(x)=&\frac1{2\delta^2}\int_{\R-\ii\delta+\ii 0} \tanh\bigg(\frac{\pi(x'-x)}{2\delta}\bigg)\int_{\R+\ii 0} \csch\bigg(\frac{\pi(z-x')}{\delta}\bigg)f(z)\,\mathrm{d}z\,\mathrm{d}x' \\
&+\ii(Tf)(x)+E(x),
\end{align*}
where
\begin{align*}
E(x)\coloneqq \frac{1}{2\delta^2} \lim_{R\to\infty} \bigg( \int_{-R}^{-R+\ii\delta}+\int_{R-\ii\delta}^R\bigg) \tanh\bigg(\frac{\pi(x'-x)}{2\delta}\bigg)\pvint_\R \csch\bigg(\frac{\pi(z-x')}{\delta}\bigg)f(z)\,\mathrm{d}z\,\mathrm{d}x',
\end{align*}
but $E$ vanishes because $\csch\, x=\mathrm{O}(e^{-|x|})$ as $|x|\to\infty$. 
Thus we find after a change of variables $y=x'+\ii\delta$ that
\begin{align*}
(\tilde{T}(T-\tilde{T})f)(x)=&-\frac1{2\delta^2}\int_{\R+\ii 0} \coth\bigg(\frac{\pi(y-x)}{2\delta}\bigg)\int_{\R-\ii 0} \csch\bigg(\frac{\pi(z-y)}{\delta}\bigg)f(z)\,\mathrm{d}z\,\mathrm{d}y\\
&+\ii (\tilde{T}f)(x) \\
=&-\frac1{2\delta^2}\int_{\R+\ii 0} \coth\bigg(\frac{\pi(y-x)}{2\delta}\bigg)\bigg\{\pvint_{\R} \csch\bigg(\frac{\pi(z-y)}{\delta}\bigg)f(z)\,\mathrm{d}z+\ii\delta f(z)\bigg\}\mathrm{d}y\\ 
&+\ii(\tilde{T}f)(x) \\
=&-\frac{1}{2\delta^2}\pvint_\R \coth\bigg(\frac{\pi(y-x)}{2\delta}\bigg)\bigg\{\pvint_{\R} \csch\bigg(\frac{\pi(z-y)}{\delta}\bigg)f(z)\,\mathrm{d}z+\ii\delta f(z)\bigg\}\mathrm{d}y\\ 
&+\frac{\ii}{\delta}\bigg\{\pvint_{\R} \csch\bigg(\frac{\pi(z-y)}{\delta}\bigg)f(z)\,\mathrm{d}z+\ii\delta f(z)\bigg\}+\ii (\tilde{T}f)(x).
\end{align*}
Using \eqref{hyperbolicidentity} again, we find
\begin{align*}
(\tilde{T}(T-\tilde{T})f)(x)=& -(T(T-\tilde{T})f)(x)-\ii(Tf)(x)+\ii((T-\tilde{T})f)(x)-f(x)+\ii(\tilde{T}f)(x) \\
=&  -(T(T-\tilde{T})f)(x)-f(x)
\end{align*}
and the result \eqref{TplusTminus} follows. 
\end{proof}

\section{Details on the KdV-limit}
\label{app:KdV}
We give details on why the non-chiral ILW equation in \eqref{2ilw} does not have a non-trivial KdV-limit $\delta\to\infty$, as claimed in Section~\ref{subsec1.1}. 

For that, we first recall how one can obtain the KdV equation from the standard ILW equation
\begin{equation}\label{ilw}
u_t+2uu_x+Tu_{xx}=0
\end{equation}
in the limit $\delta\to 0$ \cite{kodama1981,scoufis2005}. 

By inserting the expansion \cite[Eq. A3]{scoufis2005}
 \begin{align}\label{texp}
(Tf)(x)=-\frac1{2\delta}\int\limits_{-\infty}^x f(z)\,\mathrm{d}z+\frac1{2\delta}\int\limits_{x}^{\infty} f(z)\,\mathrm{d}z+\frac{\delta}3 f'(x)+\mathrm{O}(\delta^3) \qquad \text{as } \delta\to 0^+,
\end{align}
we obtain
\begin{equation}\label{ILWdeltalimit1}
u_t+2uu_x-\frac{1}{\delta}u_x+\frac{\delta}{3}u_{xxx}=\mathrm{O}(\delta^5).
\end{equation}
Then, under the transformations \cite[Eq. 1.7]{scoufis2005}
\begin{equation}\label{cov}
u\to \frac{1}{2\delta}+\frac{\delta}{4}u,\qquad x\to 2x,\qquad t\to \frac{24}{\delta}t,
\end{equation}
\eqref{ILWdeltalimit1} becomes
\begin{equation}
\frac{\delta^2}{96}(u_t+6uu_x+u_{xxx})=\mathrm{O}(\delta^3),
\end{equation}
and one obtains the KdV equation in the limit $\delta\to 0$.
%i.e., the KdV equation in the limit $\delta\to 0$. 

For the non-chiral ILW equation, we also need the expansion
\begin{align}\label{ttexp}
(\widetilde{T}f)(x)=-\frac1{2\delta}\int\limits_{-\infty}^x f(z)\,\mathrm{d}z+\frac1{2\delta}\int\limits_{x}^{\infty} f(z)\,\mathrm{d}z-\frac{\delta}6 f'(x)+\mathrm{O}(\delta^3)\qquad  \text{as } \delta\to 0^+,
\end{align}
whose proof is analogous to that of \eqref{texp} in \cite{scoufis2005}. Inserting \eqref{texp}--\eqref{ttexp} into \eqref{2ilw}, we obtain
\begin{align}\label{2ILWdeltalimit}
\begin{split}
&u_t+2uu_x-\frac{1}{\delta}u_x-\frac{1}{\delta}v_x+\frac{\delta}{3}u_{xxx}-\frac{\delta}{6} v_{xxx} =\mathrm{O}(\delta^3), \\
&v_t-2vv_x+\frac{1}{\delta}v_x+\frac{1}{\delta}u_x-\frac{\delta}{3}v_{xxx}+\frac{\delta}{6} u_{xxx} =\mathrm{O}(\delta^3).
\end{split}
\end{align}
In this case, only the first $\delta$-singular term in each equation can be removed by translations $u\to u+1/2\delta$, $v\to v+1/2\delta$. Thus, the task of finding a change of variables analogous to \eqref{cov} such that \eqref{2ILWdeltalimit} has a nontrivial limit as $\delta\to 0$ is complicated by the presence of $\delta$-singular terms. Case-by-case analysis of dominant balance possibilities shows that no such change of variables exists.

\bigskip

\noindent

\noindent {\bf Acknowledgements.} {\it We
thank Junichi Shiraishi for inspiring discussions, and Rob Klabbers for valuable comments and discussions that helped us to improve this paper.
We are grateful to insightful questions and comments of a referee helping us to improve the paper; we also thank this referee for pointing out the references in Section~\ref{subsec1.3}.
BKB acknowledges support from the G\"oran Gustafsson Foundation. EL acknowledges support from the Swedish Research Council, Grant No. 2016-05167, and by the Stiftelse Olle Engkvist Byggm\"astare, Contract 184-0573.
JL is grateful for support from the G\"oran Gustafsson Foundation, the Ruth and Nils-Erik Stenb\"ack Foundation, the Swedish Research Council, Grant No. 2015-05430, and the European Research Council, Grant Agreement No. 682537.}
\bigskip

\nocite{kodama1982}
\nocite{lebedev1987}

\bibliographystyle{unsrt}

\bibliography{BLL3}

\end{document}